\documentclass[10pt]{article} 
\usepackage[preprint]{tmlr}

\usepackage{amsmath,amsfonts,bm}

\def\eqref#1{equation~\ref{#1}}

\def\1{\bm{1}}

\DeclareMathAlphabet{\mathsfit}{\encodingdefault}{\sfdefault}{m}{sl}
\SetMathAlphabet{\mathsfit}{bold}{\encodingdefault}{\sfdefault}{bx}{n}

\usepackage{subcaption}
\usepackage{hyperref}
\usepackage{url}
\usepackage[table,xcdraw]{xcolor}
\usepackage{mathtools}
\usepackage{amsthm}
\usepackage{amsmath}
\usepackage{amssymb}
\usepackage{xfrac}
\usepackage{fontawesome}
\usepackage{bm}
\usepackage{pgfplots}
\usetikzlibrary{calc,external,arrows,arrows.meta,shapes,chains,patterns, fadings,positioning,shadows,decorations.pathreplacing,calligraphy,decorations.text}
\usetikzlibrary{plotmarks}

\newtheorem{assumption}{Assumption}
\newtheorem{lemma}{Lemma}
\newtheorem{proposition}{Proposition}

\title{Closing the Loop in Learning with Missing Data}

\author{\name Dimitrios Pylorof \email Dimitrios.ParsinasPylorof@inl.gov \\
      \addr U.S. Department of Energy Idaho National Laboratory
      \AND
      \name Humberto E. Garcia \email Humberto.Garcia@inl.gov \\
      \addr U.S. Department of Energy Idaho National Laboratory}

\begin{document}

\maketitle

\begin{abstract} What should a machine learning model learn when data is missing during training? We look at the learning process from a dynamical systems perspective, cast data missingness as a structured loss of actuation that limits controllability of the parameter error dynamics, and ultimately derive adaptation mechanisms with Lyapunov stability characteristics that throttle model updates in ways that preserve learning coherence under partial, intermittent observability. Under recurrent excitation, our analysis provides ISS-type residual-to-state bounds with respect to a bounded closed-loop mismatch between the loss residual and the preconditioned update geometry. We evaluate the efficacy of our directional observability-aware adaptive learning approach on multimodal contexts, reinforcing its premise in promoting learning coherence and stability even in pathologically sparse domains and problems.
\end{abstract}

\section{Introduction} \label{sec.introduction}

Machine learning models are commonly trained under the implicit assumption that all relevant input and output components are, consistently, fully observed. In operational settings, however, data are frequently missing, censored, or intermittently unavailable due to sensing failures, communication constraints, or data integrity considerations. Existing approaches typically address such missingness through imputation, masking, or marginalization, focusing on statistical consistency or likelihood maximization. These methods offer limited insight into the dynamical behavior of learning algorithms under partial observability, and generally provide no guarantees on the stability or boundedness of the parameter evolution when data are absent.

\begin{figure*}[hb!]
    \centering
    \footnotesize{\input{fig_intro}} \vspace{-3mm}
    \caption{A temporal view of learning with missing data, where intermittent observability drives our directionally aware adaptation which, in turn, yields stable and coherent learning dynamics.}
    \label{fig.intro}
\end{figure*}

In this work, we recast supervised learning with missing data as a closed-loop adaptive dynamical system subject to time-varying losses of observability, to provably guide parameter motion within empirically justified directions and margins and promote stability, reliability, and coherence of the learning process. Rather than treating missing components as stochastic noise, we model missingness as a structured, possibly adversarial, reduction in the excitation available to the learning dynamics. The learning parameters are viewed as the system state, while the learning rate, update direction, and regularization strength constitute control inputs that govern how the model adapts in response to available information. Missingness changes both the feedback signal and the actuation geometry; our stability analysis treats the resulting closed-loop residual mismatch explicitly, rather than presuming that missingness enters as an independent exogenous input.

Building on tools from adaptive control and Lyapunov stability theory, we introduce an observability-aware learning framework that explicitly reshapes parameter updates according to directions supported by the currently observed data. By tracking the evolving actuation geometry induced by the data stream, we derive controlled update laws and finite-window conditions under which the parameter error contracts while remaining bounded in the presence of a bounded closed-loop residual mismatch. The resulting learning dynamics admit residual-to-state bounds when this mismatch remains bounded and the effective update geometry is recurrently excited, without requiring imputation, probabilistic modeling of the missingness mechanism, or distributional assumptions on the data.

We derive Lyapunov-style guarantees establishing boundedness of the parameter trajectories and controlled evolution of extrapolative components, even when persistent excitation is intermittently lost. The proposed formulation applies to general supervised learning models, including linear regression and nonlinear function approximators trained via gradient-based methods, and naturally extends to settings with partial output supervision and intermittent labels. From a control perspective, the framework can be interpreted as gain-scheduled adaptive learning under time-varying observability constraints.

By framing learning under missing data as a stabilization problem rather than a purely statistical inference task, this work provides a complementary notion of robustness that addresses failure modes such as parameter drift, hallucination, and instability that are not captured by standard generalization or regret analyses. The proposed perspective bridges adaptive control and modern machine learning, offering principled mechanisms for enforcing stability and bounded behavior in partially observable learning environments.

In Figure \ref{fig.intro}, we illustrate the landscape of learning under missing data for which our methods are developed to eventually yield coherent, reliable learning subject to ISS-like stability, whereas in Figure \ref{fig.intro_directions}, we sketch the potentially detrimental effect of missing data in the actual update directions, compared to full-view learning, and also compared to the update directions that our two methods ---\textit{scalar observability-aware adaptation} and\textit{ directional observability-aware adaptation}--- yield.

\begin{figure}[ht!]
    \centering
    \footnotesize{\begin{tikzpicture}

\node[] at (0,0.7cm){\textbf{full-view SGD}};

 \node[] at (-0.35,-0.1cm){$\bm{w}_t$};
 \node[] at (3.65,-0.1cm){$\bm{w}_t$};
 \node[] at (7.65,-0.1cm){$\bm{w}_t$};
 \node[] at (11.65,-0.1cm){$\bm{w}_t$};

 \node[] at (-0.2,-1.5cm){$\nabla_{\bm{w}} \ell_t(\hat{y}_t(\bm{w}_t, {\bm{x}}_t), y_t)$};
 \draw[rounded corners = 2] (-0.2,-1.2cm) -- (-0.,-0.8cm) -- (0.4,-0.8cm) -- (0.45,-0.7cm);

\draw[-stealth] (0,0cm) -- (2,-2cm);
\draw[black, fill = white] (0,0cm) circle (0.1);

\node[] at (4,0.7cm){\textbf{masked SGD}};

 \draw[red!70!black, rounded corners = 2] (2.9,-0.8cm) -- (3.1,-0.6cm) -- (4.1,-0.6cm);

 \node[red!70!black] at (2.9,-1.1cm){$\nabla_{\bm{w}} \ell_t(\hat{y}_t(\bm{w}_t, \tilde{\bm{x}}_t), y_t)$};
 
 \node[red!70!black] at (5.1,-2.4cm){potentially hallucinated};
  \node[red!70!black] at (5.1,-2.7cm){or incoherent update direction};

\draw[red!70!black,-stealth] (4,0cm) -- (4.8,-2cm);
\draw[dashed,-stealth] (4,0cm) -- (6,-2cm);
\draw[black, fill = white] (4,0cm) circle (0.1);

\node[] at (8,1.0cm){\textbf{scalar}};
\node[] at (8,0.7cm){\textbf{observability-aware}};
\node[] at (8,0.4cm){\textbf{learning}};

\draw[black!40,-stealth] (8,0cm) -- (8.8,-2cm);
\draw[blue!50!black, line width=1,-stealth] (8,0cm) -- (8.4,-1cm);
\draw[dashed,-stealth] (8,0cm) -- (10,-2cm);

 \node[blue!50!black] at (9.0,-2.4cm){moderated update};

\node[blue!50!black] at (7.8cm,-1cm){$\sim$(\ref{eq.linear_regression_adaptive_step})};

\draw[black, fill = white] (8,0cm) circle (0.1);

\node[] at (12,1.0cm){\textbf{directional}};
\node[] at (12,0.7cm){\textbf{observability-aware}};
\node[] at (12,0.4cm){\textbf{learning}};

\draw[black!40,-stealth] (12,0cm) -- (12.8,-2cm);
\draw[black!60, line width=1,-stealth] (12,0cm) -- (12.4,-1cm);
\draw[green!50!black, line width=1.5, -stealth] (12,0cm) -- (13.4,-0.85cm);
\draw[dashed,-stealth] (12,0cm) -- (14,-2cm);

\node[green!50!black] at (13.3cm,-0.4cm){$\sim$(\ref{eq.adaptive_directional_control_law})};

 \node[green!50!black] at (13.2,-2.4cm){empirically justified};
 \node[green!50!black] at (13.2,-2.7cm){\& scaled update};

\draw[black, fill = white] (12,0cm) circle (0.1);

\end{tikzpicture}}
    \caption{Illustration of how missing data (resulting in the masked observation $\tilde{x}_t$) can distort gradient-based learning updates, compared to the full-view SGD update direction (shown on the left, and then repeated with dashed lines), and how our two methods, \textit{scalar} and \textit{directional}-observability-aware adaptation progressively moderate and reshape parameter motion, to promote learning stability and coherence.}
    \label{fig.intro_directions}
\end{figure}

\section{A canonical failure under missing data, and a prototypic defense} \label{sec.prototypic_scalar}

We introduce a minimal supervised learning example illustrating how standard gradient-based training can exhibit unstable or unjustified parameter evolution under partial observability. We emphasize that the failure is dynamical rather than statistical, and arises once realistic learning dynamics such as intermittent observability, feature correlations, regularization, or momentum are taken into account. The purpose of this section is not to solve the general problem, but to isolate the core mechanism motivating our framework that we develop in subsequent sections.

\subsection{Linear regression with masked inputs}
Consider the linear regression model $y_t = \langle \bm{w}^*,\bm{x}_t \rangle$, where $\bm{x}_t \in \mathbb{R}^d$ are input features and $\bm{w}^* \in \mathbb{R}^d$ is a fixed but unknown parameter vector. At each time step $t$, only a subset of the components of $\bm{x}_t$ is observed. We encode this using the diagonal masking matrix $\bm{M}_t \in \mathbb{R}^{d \times d}$ with entries $\{0,1\}$, and define the \textit{observed input} $\tilde{\bm{x}}_t = \bm{M}_t \bm{x}_t$. The learner trains using the \textit{masked} squared loss $\ell_t(\bm{w})  = \frac{1}{2} \left( y_t - \langle \bm{w}, \tilde{\bm{x}}_t \rangle \right)^2$, and updates parameters via stochastic gradient descent:
\begin{equation}
    \bm{w}_{t+1} = \bm{w}_t + \eta \tilde{\bm{x}}_t \left( y_t - \langle \bm{w}_t, \tilde{\bm{x}}_t \rangle \right) \label{eq.linear_regression_sgd_step}
\end{equation}
where $\eta > 0$ is a fixed step size. This update corresponds to standard practice in supervised learning with missing inputs: gradients are computed using only observed components, while missing features are ignored.

\subsection{Residual decomposition under partial observability}
Define the parameter error $\bm{e}_t := \bm{w}_t - \bm{w}^*$ and the residual $r_t := y_t - \langle \bm{w}_t , \tilde{\bm{x}}_t \rangle$. Substituting $ y_t = \langle \bm{w}^*,\bm{x}_t \rangle$ and $\bm{w}_t = \bm{w}^* + \bm{e}_t$ into the residual definition yields:
\begin{align*}
    r_t &= \langle \bm{w}^*, \bm{x}_t \rangle - \langle \bm{w}^* + \bm{e}_t, \bm{M}_t \bm{x}_t \rangle, \nonumber \\ & = \underbrace{- \langle \bm{e}_t , \tilde{\bm{x}}_t \rangle}_{\text{error on observable subspace}} \; + \;  \underbrace{\langle \bm{w}^* , \left( I - \bm{M}_t \right) \bm{x}_t \rangle.}_{=: d_t\; (\text{unobserved contribution})} \label{eq.linear_Regression_residual_decomposition}
\end{align*}
This decomposition separates two distinct contributions: the first term reflects error on the currently observed subspace; the second term is an unknown, unobservable disturbance arising from components of the signal that are not currently sensed. Crucially, the learner cannot distinguish these two contributions using the masked loss alone.

\subsection{Idealized behavior under strict masking}
If a coordinate of $\bm{x}_t$ is masked at time $t$, then the corresponding component of $\tilde{\bm{x}}_t$ is zero, and the associated parameter coordinate in $\bm{w}_t$ receives no gradient update at that time step. In a fully idealized setting (i.e., absent of correlations, momentum, regularization, or noise) strict masking therefore freezes unobserved coordinates. However, rather critically, freezing alone does \textit{not} imply stability of the learning dynamics. To the contrary, freezing \textit{amplifies instability}.

Under partial observability, the parameter vector decomposes into: coordinates that are currently actuated by the gradient, and coordinates that are temporarily frozen. Standard SGD continues to evolve the actuated coordinates as if the entire parameter vector were observable. The resulting updates are computed with respect to a reduced observation model and are therefore inconsistent with the frozen components (and whatever the value of these components is in reality unbeknownst to the learner). This asymmetry induces relative misalignment within the parameter vector: some coordinates evolve rapidly while others remain fixed, even though the loss function implicitly couples them through the residual.

\subsection{Partial actuation and coherence loss}
From a dynamical systems perspective, masking induces a loss of actuation authority over a subset of the parameter state. Standard SGD does not account for this loss and therefore applies control inputs (gradient updates) \textit{designed for the fully actuated} system to a \textit{partially actuated} one. In physical systems, such behavior leads to instability: when some actuators fail, aggressive inputs from the remaining actuators cause distortion rather than coherent motion. The analogous effect in learning is \textit{internal inconsistency} between parameter components trained under different observation regimes. This effect is exacerbated in realistic learning settings where observability is intermittent rather than permanent, features are correlated, momentum or weight decay induces motion even in unobserved directions or otherwise influences training, and/or stochastic noise accumulates over time.

The resulting failure is \textit{dynamical} rather than statistical: the learning algorithm operates \textit{open-loop} and under \textit{partial actuation}. A concrete numerical illustration of the fault mode that arises under partial actuation of the learning dynamics is given in Appendix \ref{sec.app.demo}.

\subsection{A prototypic observability-aware update} \label{sec.prototypic_update}
To restore coherence, we introduce a simple modification inspired by adaptive control. Define the instantaneous observed-feature energy:
\begin{equation}
    \gamma_t := \| \tilde{\bm{x}}_t \|^2. \label{eq.prototypic.observed_feature_energy}
\end{equation}

We replace the fixed step size $\eta$ in the method (\ref{eq.linear_regression_sgd_step}) with the \textit{observability-dependent} gain:
\begin{equation}
    \alpha_t := \frac{\eta}{\epsilon + \gamma_t}, \label{eq.linear_regression_adaptive_step_alpha}
\end{equation}
where $\epsilon >0$ is a constant, to obtain the following iterative update:
\begin{equation}
    \bm{w}_{t+1} = \bm{w}_t + \alpha_t \tilde{\bm{x}}_t \left( y_t - \langle \bm{w}_t, \tilde{\bm{x}}_t \rangle \right). \label{eq.linear_regression_adaptive_step}
\end{equation}

Although the scalar gain increases as $\gamma_t$ decreases, the effective update magnitude satisfies
\begin{equation}
    \| \Delta \bm{w}_t \| = \| \bm{w}_{t+1} - \bm{w}_t \| = \alpha_t \| \tilde{\bm{x}}_t \| \vert r_t \vert = \frac{\eta}{\epsilon + \gamma_t} \|\tilde{\bm{x}}_t\| \vert r_t \vert = \eta \vert r_t \vert \frac{\| \tilde{\bm{x}}_t \|}{ \epsilon + \| \tilde{\bm{x}}_t \|^2},
\end{equation}
which vanishes as observability is lost. Thus, when only a small fraction of the parameter vector is actuated, \textit{all} parameter updates are proportionally throttled. We note that this update does not aim to mobilize masked coordinates or in any way guess or impute their missing values. Instead, it \textit{limits the motion of unmasked coordinates} so that learning proceeds more coherently under partial actuation.

\subsection{Lyapunov interpretation and ISS-style bound} \label{sec.prototypic_ISS}

Consider the Lyapunov function $V_t = \| \bm{e}_t \|^2$, and let $a_t := \langle \bm{e}_t, \tilde{\bm{x}}_t \rangle$. A direct calculation gives
\begin{equation}
    V_{t+1} - V_t = - \left( 2 \alpha_t - \alpha_t^2 \gamma_t \right) a_t^2 + 2\alpha_t \left(1-\alpha_t\gamma_t\right)a_td_t + \alpha_t^2\gamma_td_t^2, \label{eq.prototypic.V_exact}
\end{equation}
where $\gamma_t = \|\tilde{\bm{x}}_t \|^2$ and $d_t = \langle \bm{w}^* , \left( I - \bm{M}_t \right) \bm{x}_t \rangle$. If $0<\eta\leq 1$, then $\alpha_t\gamma_t\leq 1$, and Young's inequality yields the dissipation-disturbance bound
\begin{equation}
    V_{t+1} - V_t \leq -\frac{\alpha_t}{2}a_t^2 + \left(2\alpha_t+\alpha_t^2\gamma_t\right)d_t^2. \label{eq.prototypic.V_ineq}
\end{equation}
Inequality (\ref{eq.prototypic.V_ineq}) exhibits the standard structure familiar from input-to-state stability (ISS) analyses. In particular, under reasonable finite-window excitation and bounded-feature conditions, it implies contraction of the parameter error up to a disturbance-dependent neighborhood, yielding an ISS-style bound:
\begin{equation}
    \| \bm{e}_t \| \leq \rho^t \| \bm{e}_0 \| + C \sup_{\tau < t} \vert d_\tau \vert, \qquad \rho \in (0,1),  \quad C\geq 0. \label{eq.prototypic.ISS}
\end{equation}

\subsection{Beyond linear regression: observability-driven adaptation in implicit Hilbert spaces and deep neural networks} \label{sec.prototypic_nonlinear_intro}
Take $\epsilon-$SVR (Support Vector Regression) with a kernel $k(\bm{x}, \bm{x}')$. The training dynamics, whether we view them via primal SGD on $\bm{w}$ in Reproducing Kernel Hilbert Space (RKHS), or dual variable ascent on $\alpha_i$, are driven by inner products of feature maps: $k(\bm{x},\bm{x}') = \langle \phi(\bm{x}) , \phi(\bm{x}') \rangle_\mathcal{H}$. Missing data leading to masked inputs, i.e., $\bm{x} \mapsto \tilde{\bm{x}} = \bm{M} \bm{x}$, means that the feature maps $\phi(\bm{x})$ now operate on the masked inputs instead, i.e., $\phi(\tilde{\bm{x}})$. This is a crucial differentiation compared to the linear regression case, as the learner does not know which directions in $\mathcal{H}$ have lost excitation. The problem is structurally the same, however, the parameter $\bm{w}$ here lives in a possibly infinite-dimensional space, actuation happens through kernel evaluations, and correlations between directions are baked-in by construction. Given these complications we view the failure mode in $\epsilon-$SVR as actually more dangerous than the linear counterpart: missingness can silently destroy excitation along entire RKHS directions without freezing anything explicitly and overtly. Our defensive posture in Hilbert spaces remains the same on the basis of using observability metrics to throttle the learning dynamics. Replacing the method's fixed step size $\eta$ with the gain $\alpha_t = \eta / \left(\epsilon + k(\tilde{\bm{x}}_t,\tilde{\bm{x}}_t)\right)$ preserves the same observability-aware conditioning principle. A finite-window contraction guarantee, however, can hold only for a finite-dimensional kernel realization or on a finite-dimensional active subspace containing the analyzed error trajectory: a finite sum of rank-one sensitivity operators cannot dominate the identity on a genuinely infinite-dimensional RKHS. We therefore make no claim of full-space finite-window contraction in an infinite-dimensional Hilbert space, as we also state in Section \ref{sec.special.kernel}.

The same observability-aware adaptation principle applies to (deep) neural networks, with suitable measures of observability/excitation that can extend to learned, time-varying representations.

\subsection{Scope and limitations of scalar adaptation}

The prototypic controller introduced here is intentionally myopic. It reacts only to instantaneous observability and applies a uniform gain across all parameter directions. Its role is not to recover missing information, but to enforce a basic stability principle: when only part of the system can be actuated, learning should slow down globally to preserve internal coherence.

In the following sections we generalize this idea by introducing directionally-aware controllers that track excitation history and selectively constrain parameter evolution based on data-supported observability.

\subsection{Summary and motivation for directional awareness}
\begin{itemize}
\item Masking does not destabilize learning by moving unobserved parameters.
\item Instability arises because observed parameters move as if unobserved ones were also considered and moved by the method.
\item A minimal, $d = 2$ numerical walkthrough that makes the partial actuation mechanism concrete is provided in Appendix \ref{sec.app.demo}.
\item The toy defense restores coherence by throttling learning under partial actuation. Yet, isotropic throttling may still miss opportunities to learn in well-observed subsets of a model \textit{and} lacks the discriminating ability to more assertively slow down learning in persistently or increasingly unobserved subsets, in models and architectures of any level of complexity.
\item This control perspective and the pursuit of directional awareness motivate the general framework that we develop next.
\end{itemize}

\section{Learning coherence and stability metrics} \label{sec.metrics}

When we invoke \textit{learning coherence} and \textit{stability}, our main target is preventing parameter oscillation and hallucination at training time, that is, parameter motion that is unjustified by currently available --and intermittently-varying-- data streams. Even though our stability analysis machinery will operate mostly in the parameter error space, it is still useful to introduce a plurality of metrics besides $\| \bm{e}_t \|$ that can illustrate, simply and in a way that is decoupled from our mathematical constructs such as the Gramian-like directional observability matrix $\bm{B}_t$, what we refer to as learning coherence and stability, and help us evaluate the progression of a learning process under missing data. 

\begin{itemize}
    \item \textbf{Update coherence}, evaluated at time $t$ by averaging cosine similarity of the last two update directions (only for pairs of nonzero updates, or after applying a small positive denominator floor):
    \begin{equation}
        \mathrm{Coherence}(t) = \frac{1}{W} \sum_{k=0}^{W-1}\frac{\langle \Delta \bm{w}_{t-k}, \Delta  \bm{w}_{t-k-1} \rangle}{ \| \Delta  \bm{w}_{t-k} \| \| \Delta  \bm{w}_{t-k-1} \|}. \label{eq.metrics.coherence}
    \end{equation}
    One may consider update coherence over an end-to-end-run by averaging $\mathrm{Coherence}(t)$ for $t = 1,\dots, N$, or, as we do in Experiment \ref{sec.exp1}, in an \textit{occupation time sense} (i.e., \textit{how much effort / time does the method spend undoing itself vs.\ moving in a similar direction?}). Our update coherence metric is measuring directional consistency of consecutive updates. Higher is better; in general: positive values imply self-alignment, negative values indicate inconsistent motion. 
    \item \textbf{Trajectory smoothness}, evaluated at time $t$ by looking at second-order differences:
    \begin{equation}
        \mathrm{Smoothness}(t) = \frac{1}{W} \sum_{k=0}^{W-1} \| \bm{w}_{t-k} - 2\bm{w}_{t-k-1} + \bm{w}_{t-k-2} \|.
    \end{equation}
    Our smoothness metric indicates acceleration in the parameter space. Lower values indicate  smoother parameter motion. Under the challenging, pathological conditions induced by missing data, we regard smooth trajectories that do not steer aggressively whenever modalities appear and disappear in the data stream as safer.
    \item \textbf{Final stability}. Even if we are mostly motivated by enduring (in time) learning processes and we scrutinize the ensuing parameter trajectory rather than just the final parameter state, we believe it is still important to consider the stability of the learned parameter late in the process. Assuming a finite learning run of length $N$, we evaluate final stability by looking at the standard deviation of the parameter vector norm in the last $W$ learning iterations:
    \begin{equation}
        \mathrm{Stability} = \mathrm{std} \left( \| \bm{w}_t \| : t \in [N - W, N] \right).
    \end{equation}
    Lower is more stable.
    \item \textbf{Lyapunov contraction rate}. Motivated by the fact that, nominally (and under no missing data conditions), we expect gradient descent to reduce the parameter error vector exponentially, over relatively large temporal windows or in the middle of a finite learning run (to isolate initialization or late-stage oscillation/convergence effects) we fit $\log \left( \| \bm{e}_t \|^2 \right) \sim -\rho t + c$ to estimate the contraction rate $\rho \approx - \frac{d}{dt} \log V_t = -\dot{V}_t/V_t$. Higher $\rho$ implies faster (exponential-like) convergence.
\end{itemize}
The quantities introduced in this section are not used to define or prove stability, but to operationalize the qualitative notions of coherence and (in)stability discussed in Sections \ref{sec.prototypic_scalar} and \ref{sec.directional}, and to make them empirically visible in finite learning runs.

\section{Directional observability-aware learning} \label{sec.directional}
Across linear models, kernels, and gradient networks, the key question that needs to be asked at every update under missing data is:
\begin{equation*}
    \textit{How much control authority is currently available to move the model in directions that matter for the loss?}
\end{equation*}
We argue that, in a model-agnostic and unifying way, this authority is not a property of the parameter space. Rather, it is a property of the Jacobian of the model output with respect to parameters, evaluated at the current input. Concretely, consider a scalar prediction:
\begin{equation}
    \hat{y}_t = f(\bm{w}_t, \tilde{\bm{x}}_t) \label{eq.directional.scalar_prediction}
\end{equation}
where $\bm{w}_t$ are the model parameters and $\tilde{\bm{x}}_t$ is the currently observed --and possibly masked-- input. The local sensitivity of the prediction to parameter motion is given by
\begin{equation}
    \bm{g}_t := \nabla_{\bm{w}} \hat{y}(\bm{w}_t, \tilde{\bm{x}}_t), \label{eq.directional.gt}
\end{equation}
which maps infinitesimal parameter changes $\delta \bm{w}$ to first-order output changes: $\delta \hat{y}_t \approx \langle \bm{g}_t , \delta \bm{w} \rangle$. The sensitivity vector induces a rank-one actuation operator
\begin{equation}
    \bm{A}_t := \bm{g}_t \bm{g}_t^\mathsf{T}, \label{eq.directional.At}
\end{equation}
which captures which directions of the parameter state are instantaneously actuated by the current data. 
Equivalently, writing the sensitivity in Jacobian form,
\begin{equation}
    \bm{J}_t := \frac{\partial f(\bm{w}_t, \tilde{\bm{x}}_t)}{\partial \bm{w}}, \quad \bm{g}_t = \bm{J}_t^\mathsf{T}, \label{eq.directional.Jt}
\end{equation}
we obtain $\bm{A}_t = \bm{J}^\mathsf{T}_t \bm{J}_t$. The operator $\bm{A}_t$ captures both the total energy of the current actuation level, as induced by the masked input and captured by the Jacobian, and, more crucially as we will discuss and further develop next, the directions of varying actuation, as evidenced by current and recent data. Indicatively:
\begin{itemize}
    \item In the case of linear regression, with $f(\bm{w},\bm{x}) = \bm{w}^\mathsf{T} \bm{x}$, the sensitivity is $\bm{g}_t = \tilde{\bm{x}}_t$, so $\bm{J}_t = \tilde{\bm{x}}_t^\mathsf{T}$, and the actuation energy collapses to $\mathrm{tr}(\bm{A}_t) = \|\tilde{\bm{x}}_t \|^2$ whereas directional information is encoded in the entries of $\bm{A}_t$.
    \item For kernel methods with $f(w, \bm{x}) = \langle \bm{w}_t, \bm{\phi}(\tilde{\bm{x}}_t) \rangle_\mathcal{H}$, the sensitivity is $\bm{g}_t = \phi(\tilde{\bm{x}}_t)$, so $\bm{J}_t = \phi(\tilde{\bm{x}}_t)^\mathsf{T}$, and the actuation energy is weighed by $\mathrm{tr}(\bm{A}_t) = \| \phi(\tilde{\bm{x}}_t) \|_\mathcal{H}^2 = k(\tilde{\bm{x}}_t,\tilde{\bm{x}}_t)$, and directional information is similarly encoded in $\bm{A}_t$. We notice that we obtain the same object over a different basis.
    \item For neural networks, transformers, and, in general, a differentiable model parameterized by $\bm{w}_t$, we evaluate $\bm{J}_t = \partial f(\bm{w}_t,\tilde{\bm{x}}_t)/\partial \bm{w}$, and measure the instantaneous actuation energy via  $\| \bm{J}_t \|_F^2$ or even the loss-scaled variant $\| \bm{J}^\mathsf{T}_t \nabla_{\hat{y}} \ell_t \|^2$ (weighing how much actuation authority the learner has along the signal that the loss function provides for parameter motion).
\end{itemize}

To preserve stability and coherence, our thesis is that learning across diverse bases and models should \textit{adapt} to the \textit{time}- and \textit{directionally}-\textit{varying controllability of the parameter dynamics}, provably \textit{and} empirically, as \textit{measured} through the gradient channel induced by the (intermittently) observed data.

\subsection{Motivation: Why scalar throttling is insufficient}
In Section \ref{sec.prototypic_scalar}, we restored learning coherence and stability by modulating the learning rate using a scalar measure of actuation, effectively treating the gradient channel as isotropic. In the notation introduced above, this corresponds to collapsing the instantaneous actuation operator $\bm{A}_t = \bm{g}_t \bm{g}_t^\mathsf{T}$ into the single scalar quantity $\mathrm{tr}(\bm{A}_t) = \| \bm{g}_t \|^2$, and regulating updates uniformly across all parameter directions. Such a reduction is appropriate when loss of observability affects all directions similarly. In such cases, the rank-one structure of $\bm{A}_t$ carries little additional information beyond its overall magnitude, and scalar throttling suffices to prevent incoherent motion. 

In general, however, the actuation induced by the data stream can be directionally heterogeneous. Missing inputs, partial outputs (i.e., beyond the scalar prediction case prototyped here), and masked modalities (e.g., in large models learning on text and pictures, where part or the entirety of either modality may be unavailable) suppress excitation along specific parameter directions while leaving others well actuated. This anisotropy is already encoded in $\bm{A}_t$: although $\bm{A}_t$ is rank-one at each instant, its orientation varies over time, and different directions may be persistently excited or persistently silent through iterations. Moreover, this anisotropy can be amplified in more complex, nonlinear models --from kernel methods to neural networks-- as individual input coordinates or features propagate through shared representations and influence broad subsets of parameters. As a result, excitation patterns that are localized or intermittent in input space can induce highly anisotropic actuation in parameter space, further widening the gap between well-supported and weakly supported directions.

A scalar gain discards such directional information by design. As a result, it cannot distinguish between directions that remain reliably actuated (where learning could safely proceed) and directions that are effectively blind (where motion risks being unjustified). Scalar throttling is therefore overly conservative in some directions and insufficiently discriminating in others. We posit that to preserve both learning progress and coherence under partial observability, the adaptation mechanism must be sensitive not only to the amount of actuation present (as is the scalar adaptation approach) but also to how that actuation is distributed across the parameter space. This motivates retaining and building on the directional structure of the actuation operator and aggregating it over time, leading naturally to the observability matrix $\bm{B}_t$ that we introduce next and the associated directionally-aware adaptive update law.

\subsection{Directional excitation and observability history}
The rank-one operator $\bm{A}_t$ is, by definition, instantaneous and, as such, relatively volatile. To obtain a smoother, more informative measure of directional excitation, we maintain the unregularized exponential moving average state $\bm{S}_t$ and form the regularized observability matrix $\bm{B}_t$ according to:
\begin{equation}
    \begin{aligned}
        \bm{S}_t &:= \beta \bm{S}_{t-1} + (1-\beta) \bm{A}_t, \qquad \bm{S}_{-1}=0,\\
        \bm{B}_t &:= \bm{S}_t + \epsilon I, \qquad \beta \in (0,1),\quad \epsilon>0.
    \end{aligned}
    \label{eq.directional.B_recursive}
\end{equation}
The state $\bm{S}_t$ encodes which parameter directions have recently been actuated by the data stream, whereas the ridge term in $\bm{B}_t$ provides a fixed conditioning floor without recursively accumulating an isotropic background. By unrolling the recursion, we obtain
\begin{equation}
    \bm{B}_t = \epsilon I + (1-\beta)\sum_{k=0}^{t}\beta^k\bm{g}_{t-k}\bm{g}_{t-k}^\mathsf{T}. \label{eq.directional.B_unrolled}
\end{equation}
The matrix $\bm{B}_t$ is reminiscent of the concept of a finite-horizon or discounted Gramian in control. In analogy with observability Gramians in dynamical systems, $\bm{B}_t$ here quantifies which directions of the parameter state have been excited by the data stream under partial observability as we look at the learning dynamics from a dynamical and control systems viewpoint.

\subsection{Directionally-aware update law} \label{sec.directional.update_law}
Let $\ell_t(\hat{\bm{y}}_t , \bm{y}_t)$ denote a scalar loss. For any differentiable scalar-output model, the parameter gradient can be written as:
\begin{align}
    \nabla_{\bm{w}} \ell_t (\hat{y}_t(\bm{w}_t, \tilde{\bm{x}}_t) , y_t) =\frac{\partial \ell_t (\hat{y}_t(\bm{w}_t, \tilde{\bm{x}}_t) , y_t) }{\partial \hat{y}_t} \underbrace{\nabla_{\bm{w}} \hat{y}_t(\bm{w}_t, \tilde{\bm{x}}_t)}_{=:\bm{g}_t} = \bm{g}_t  \frac{\partial \ell_t (\hat{y}_t(\bm{w}_t, \tilde{\bm{x}}_t) , y_t) }{\partial \hat{y}_t} \label{eq.Beta_EMA_dynamics}
\end{align}
so the gradient lies entirely in the span of sensitivity directions induced by the currently observed data. 

We propose the following state-dependent adaptive control law:
\begin{equation}
    \boxed{\bm{w}_{t+1} = \bm{w}_t - \eta \bm{B}_t^{-1} \nabla_{\bm{w}} \ell_t}  \label{eq.adaptive_directional_control_law}
\end{equation}
where $\bm{B}_t$ evolves according to (\ref{eq.directional.B_recursive}).

At first glance, the role of $\bm{B}_t^{-1}$ may appear counterintuitive: directions that are well observed correspond to large eigenvalues of $\bm{B}_t$, which become small when inverting to obtain $\bm{B}_t^{-1}$. If the gradient magnitude scaled cleanly with observability, this inversion would indeed appear redundant as its effect on the gain would be undone by terms in $\nabla_{\bm{w}} \ell_t$. However, there's much more going on in the adaptive control law (\ref{eq.adaptive_directional_control_law}) that yields the sought-after directional observability-aware adaptation.

\subsubsection{Gradient structure under partial observability}

Under partial observability, gradients are computed from incomplete information. As a result, the instantaneous gradient $\nabla_{\bm{w}} \ell_t$ should be treated as an uncertain measurement conditioned on what the learner observes (without making at this point any assumption on the underlying distribution). Let $\mathcal{I}_t$ denote the information available through the observed input-output pair at time $t$. We may write, by definition of conditional expectation, $\nabla_{\bm{w}} \ell_t = \mathbb{E} \left[ \nabla_{\bm{w}} \ell_t \vert \mathcal{I}_t \right] + \bm{\xi}_t$. Solving for $\bm{\xi}_t$ yields $\bm{\xi}_t := \nabla_{\bm{w}} \ell_t - \mathbb{E} \left[ \nabla_{\bm{w}} \ell_t \vert \mathcal{I}_t \right]$. By construction, $\mathbb{E} \left[ \bm{\xi}_t \vert \mathcal{I}_t \right] = 0$, though the variance of $\bm{\xi}_t$ may be large, time-varying, and structured under partial observability. No assumption is made that the gradients are unbiased with respect to any latent ground truth; the instantaneous decomposition separates what is predictable from observed data from what is not.

\subsubsection{Directional scaling of gradient uncertainty \& effect of the inverse observability metric} \label{sec.directional.scaling_inv} 

Let $\bm{v}_i$ be an eigenvector of the empirical EMA state $\bm{S}_t$ with eigenvalue $\lambda_i \geq 0$. The same vector is an eigenvector of $\bm{B}_t = \bm{S}_t+\epsilon I$ with eigenvalue $\lambda_i+\epsilon$. Define the scalar gradient component
\begin{equation*}
    G_{i,t} := \langle \nabla_{\bm{w}} \ell_t , \bm{v}_i \rangle = \langle \bm{g}_t , \bm{v}_i \rangle \frac{\partial \ell_t}{\partial \hat{y}_t}. \label{eq.directional.G_it}
\end{equation*}
For $\lambda_i>0$, we introduce the normalized excitation $\zeta_{i,t} := \langle \bm{g}_t, \bm{v}_i \rangle /\sqrt{\lambda_i}$, which allows us to write $\langle \bm{g}_t , \bm{v}_i \rangle = \sqrt{\lambda_i}\zeta_{i,t}$. Recall that the direction indicated by $\bm{v}_i$ and the corresponding empirical eigenvalue $\lambda_i$ encode recent, evidence-driven information on directional observability and strength. Therefore, our normalization expresses instantaneous excitation along $\bm{v}_i$ in units of its recent empirical scale. When $\lambda_i=0$, the exact EMA carries no recent excitation in that direction and the corresponding instantaneous component is also zero. The update component along direction $\bm{v}_i$ is
\begin{equation}
    \Delta w_i = -\eta \frac{1}{\lambda_i + \epsilon} G_{i,t} = -\eta \frac{\sqrt{\lambda_i}}{\lambda_i + \epsilon} \zeta_{i,t} \frac{\partial \ell_t}{\partial \hat{y}_t}. \label{eq.directional.component_update}
\end{equation}
This expression makes the conditioning mechanism explicit:
\begin{itemize}
    \item in strongly and repeatedly excited directions $(\lambda_i \gg \epsilon)$, the update is normalized by the recent excitation scale, preventing overly aggressive repeated motion;
    \item as empirical excitation vanishes $(\lambda_i \to 0)$, the sensitivity component itself vanishes, and so does the corresponding update, while the ridge $\epsilon I$ prevents numerical singularity.
\end{itemize}
Crucially, the inverse does not act on gradient magnitude alone. It regulates parameter motion through the interaction between current sensitivity and its recent empirical scale, ensuring that learning proceeds through directions that remain supported by the observed data.

\subsubsection{Geometric interpretation}
The update law admits a natural variational interpretation. At each iteration, the learner seeks a parameter increment $\delta \bm{w}$ that reduces the loss while regulating motion according to the recent actuation geometry. This trade-off can be expressed by the optimization problem:
\begin{equation}
    \delta \bm{w}_t = \arg \min_{\delta \bm{w}} \left( \langle \nabla_{\bm{w}} \ell_t, \delta \bm{w} \rangle + \frac{1}{2\eta} \langle \delta \bm{w}, \bm{B}_t \delta \bm{w} \rangle \right), \label{eq.directional.update_geometric}
\end{equation}
where the first term in the cost corresponds to linearized loss reduction, whereas the second term measures parameter motion in the empirical observability geometry. Solving this problem yields:
\begin{equation}
    \delta \bm{w}_t = -\eta \bm{B}_t^{-1} \nabla_{\bm{w}} \ell_t, \label{eq.optim_cost}
\end{equation}
which coincides with the proposed update law. The matrix $\bm{B}_t$ therefore defines a time-varying geometry on parameter space: it induces an inner product $\langle \bm{u}, \bm{v} \rangle_{\bm{B}_t} =  \langle \bm{u}, \bm{B}_t \bm{v} \rangle$ and a corresponding norm $\| \bm{u} \|^2_{\bm{B}_t} =  \langle \bm{u}, \bm{B}_t \bm{u} \rangle$. The quadratic term alone assigns larger cost to directions with larger empirical eigenvalues; the resulting update, however, is determined jointly by this geometry and by a gradient whose components are themselves generated by the current sensitivity. Accordingly, the controller should be interpreted as normalizing current actuation relative to its recent directional scale, rather than as imposing a standalone hard penalty on weakly observed coordinates. The resulting learning dynamics reshape parameter evolution under partial actuation without introducing externally scripted constraints or heuristic scheduling.

\subsubsection{Scalar adaptive throttling as a special case}
The directionally-aware adaptive law introduced in this section generalizes the scalar observability-aware throttling mechanism introduced in Section \ref{sec.prototypic_scalar} and prototyped for the case of linear regression. To see this, let us approximate the observability matrix as $\bm{B}_t \approx (\epsilon + \gamma_t) I$, where  $\gamma_t := \| \bm{g}_t \|^2$, that is, ignoring directional structure and retaining only a measure of the instantaneous actuation energy. Substituting this approximation into the update law (\ref{eq.adaptive_directional_control_law}) yields:
\begin{equation*}
    \bm{w}_{t+1} = \bm{w}_t - \frac{\eta}{\epsilon + \gamma_t} \nabla_{\bm{w}} \ell_t,
\end{equation*}
which coincides with the scalar adaptive throttling rule (\ref{eq.linear_regression_adaptive_step}) when $\bm{g}_t = \tilde{\bm{x}}_t$ (as is the case for the linear regression example used in our prototyping in Section \ref{sec.prototypic_scalar}). Therefore, scalar throttling can be interpreted as an isotropic approximation of the full directional controller developed in the present section, in which all parameter directions are treated as equally observable. The directional formulation developed here recovers the same stability-preserving behavior while allowing the learner to explore anisotropy in the excitation induced by the data. In this sense, the proposed framework does not replace scalar observability-aware adaptation, but embeds it as a limiting case, and significantly builds on its premise to specialize direction-specific adaptation based on empirically-observed evidence in the data stream.

\subsection{Interpretation as closed-loop control}
From a control-theoretic perspective, the learning process defined by (\ref{eq.adaptive_directional_control_law}) constitutes a closed-loop adaptive system in which the parameter vector $\bm{w}_t$ assumes the role of the system state, the gradient $\nabla_{\bm{w}} \ell_t$ provides feedback, and the matrix $\bm{B}_t^{-1}$ defines a state- and data-dependent control law. It is important to note that $\bm{B}_t^{-1}$ is not scheduled externally, nor scripted in any way, but, rather, evolves with the learning process, driven by the method's understanding of time-varying directional observability as empirically seen in the data stream. Loss in said observability therefore enters the dynamics as an actionable reduction in actuation authority, rather than as stochastic noise.

This perspective allows us to draw a fundamental distinction between standard gradient descent and our proposed method. Standard SGD applies control inputs designed for a fully actuated system, even when only a subset of the parameter directions can be reliably influenced by the data. The resulting dynamics are effectively open-loop under partial observability. In contrast, our update law (\ref{eq.adaptive_directional_control_law}) continuously reshapes the control action to match the current actuation structure, as influenced by intermittent feature observability, ensuring that parameter motion remains coherent with what the data can actually support. 

From a broader perspective, our directional observability-aware learning enforces a basic control principle: when actuation is lost along some directions, the control law must adapt to avoid destabilizing the remaining degrees of freedom. The attenuation mechanism derived in Section \ref{sec.directional.scaling_inv} is precisely the manifestation of this principle at the level of parameter updates.

\subsection{Practical approximations} \label{sec.practical_approximations}
Maintaining and inverting the full observability matrix $\bm{B}_t \in \mathbb{R}^{d \times d}$ may be impractical for large-scale models. Nevertheless, our framework admits several approximations that preserve its core stabilizing behavior. 

\begin{itemize}
    \item \textbf{Diagonal approximation:} A simple and scalable option is to retain only the diagonal values of $\bm{B}_t$, leading to elementwise, decoupled adaptation. While this approximation discards cross-coordinate coupling, it preserves the essential conditioning mechanism by scaling each coordinate of the current sensitivity against its recent squared-sensitivity history. Such an update can be reminiscent of RMSProp or Adam, but carries a fundamentally different interpretation: the diagonal entries estimate coordinatewise actuation induced by the model sensitivity, whereas RMSProp and Adam accumulate uncentered second moments of loss gradients. 
    \item \textbf{Block-structured approximation:} In many modern architectures, parameters are naturally organized into groups whose effects on the model output are strongly coupled, even if the corresponding units do not interact directly in the forward computation. Typical examples include weight matrices associated with individual layers in multilayer perceptrons or attention heads in transformer architectures. Within such groups, parameter perturbations influence the output through shared intermediate representations and common downstream signals. As a result, the corresponding columns of the output-parameter Jacobian are repeatedly excited together by the data, leading to strong intra-group coupling in the observability matrix $\bm{B}_t$. This structure can be exploited by approximating $\bm{B}_t$ with a block-diagonal matrix, where each block corresponds to a predefined parameter group. Each block then tracks the aggregate observability of its associated component, allowing the adaptive law to regulate learning at the level of layers or heads rather than individual parameters. Cross-group interactions, which are typically weaker in the learning geometry, are neglected for tractability.
    \item \textbf{Low-rank approximation:} In settings where excitation concentrates along a small number of dominant directions, $\bm{B}_t$ may be approximated using low-rank factorizations. This approach can preserve the most informative excitation modes while treating the complementary directions through the ridge term. Any finite-window contraction guarantee then applies on the retained finite-dimensional subspace under its corresponding effective-excitation condition; behavior in the complementary subspace requires a separate approximation or regularization argument.
\end{itemize}
Across such approximations, the overarching conditioning principle remains unchanged: parameter motion is reshaped according to empirically observed actuation without reconstructing missing data. Their exact stability guarantees depend on the corresponding block/subspace excitation and coupling properties; empirically, we evaluate whether these scalable realizations retain the coherence and excursion-suppression behavior motivated by the ideal controller.

\subsection{Takeaway}
Directional observability determines the uncertainty scale of gradient information. The inverse observability metric $\bm{B}_t^{-1}$ is therefore not a heuristic rescaling, but a principled conditioning mechanism that suppresses noise-dominated parameter motion, while preserving learning where the data provide stable excitation.

This state-dependent adaptive control law maintains coherence of the learning dynamics under partial and intermittent observability, and directly enables the stability results developed in Section \ref{sec.stability}.

\section{Stability analyses} \label{sec.stability}

We analyze the learning dynamics as a time-varying, closed-loop system governed by the directionally-aware adaptive law  introduced in Section \ref{sec.directional}, in which missing data induces a structured loss of actuation over the parameter space. Our objective is to show boundedness and controlled contraction of the parameter error dynamics under such intermittent and possibly anisotropic observability. By wrapping around whatever model a masked SGD method was anyhow training, our resulting guarantees are largely model-agnostic, and are later instantiated explicitly for linear models, kernel methods, and gradient-based neural network cases. 

\subsection{General closed-loop error dynamics} \label{sec.stability.dynamics}
Under the directionally aware adaptive update law, the parameter evolves according to:
\begin{equation}
    \bm{w}_{t+1} = \bm{w}_t - \eta \bm{B}_t^{-1} \nabla_{\bm{w}} \ell_t, \label{eq.analysis.w}
\end{equation}
in parallel with the observability matrix $\bm{B}_t$, updated, in turn, recursively following
\begin{equation}
    \begin{aligned}
        \bm{S}_t &= \beta \bm{S}_{t-1} + (1-\beta)\bm{A}_t, \qquad \bm{A}_t := \bm{g}_t\bm{g}_t^\mathsf{T},\\
        \bm{B}_t &= \bm{S}_t+\epsilon I.
    \end{aligned}
    \label{eq.analysis.B}
\end{equation}
where $\bm{g}_t = \nabla_{\bm{w}} \hat{y}_t (\bm{w}_t, \tilde{\bm{x}}_t)$ denotes the instantaneous sensitivity induced by the observed (and possibly masked) input. Let $\bm{w}^*$ denote a reference parameter (e.g., a minimizer of the fully-observed risk, when it exists), and define the parameter error
\begin{equation}
    \bm{e}_t := \bm{w}_t - \bm{w}^*. \label{eq.analysis.error}
\end{equation}
Under partial observability, gradients are computed through a reduced observation operator and, therefore, reflect only the information available at time $t$. As discussed in Section \ref{sec.directional.update_law}, we may write the instantaneous gradient as
\begin{equation}
    \nabla_{\bm{w}} \ell_t = \mathbb{E} \left[  \nabla_{\bm{w}} \ell_t \mid \mathcal{I}_t \right] + \bm{\xi}_t, \label{eq.analysis_gradient_with_ksi}
\end{equation}
where $\mathcal{I}_t$ denotes the information revealed by the observed data and $\bm{\xi}_t$ captures the component that is unpredictable from that information. We use this decomposition only to illuminate how partial observations affect the feedback channel; the deterministic stability result below does not assume that $\bm{\xi}_t$ is independent, bounded, or exogenous, and is formulated instead through a scalar closed-loop residual mismatch.

By leveraging (\ref{eq.analysis.w}) and (\ref{eq.analysis_gradient_with_ksi}), we obtain an illuminating form of the closed loop error dynamics:
\begin{equation}
    \bm{e}_{t+1} = \bm{e}_t - \eta \bm{B}_t^{-1} \mathbb{E} \left[ \nabla_{\bm{w}} \ell_t \mid \mathcal{I}_t \right] - \eta \bm{B}_t^{-1} \bm{\xi}_t. \label{eq.analysis.error_dynamics_full}
\end{equation}
This expression illuminates fundamental components:
\begin{enumerate}
    \item A controlled descent term, shaped by the inverse observability metric $\bm{B}_t^{-1}$, which governs motion along empirically supported parameter directions.
    \item An unpredictable-feedback term, reshaped directionally by $\bm{B}_t^{-1}$, representing the time-varying and anisotropic effect of information not resolved by the observed data.
\end{enumerate}
It is critical to note that missing data does not merely increase stochasticity, but, rather, induces a directionally structured loss of actuation reflected both in the conditional gradient and in the evolution of $\bm{B}_t$. Standard SGD applies its updates blindly, as if this loss of actuation were absent. In contrast, our present formulation explicitly closes the loop around the evolving actuation geometry, with the latter continuously shaped by the data stream. 

\subsection{Residual-to-state boundedness under intermittent observability}

As the model output $y$ is scalar, the gradient of the loss always factors as
\begin{equation}
    \nabla_{\bm{w}} \ell_t = \frac{\partial \ell_t \left( \hat{y}_t , y_t \right)}{\partial \hat{y}_t} \nabla_{\bm{w}} \hat{y} \left( \bm{w}_t , \tilde{\bm{x}}_t \right) = s_t \bm{g}_t \label{eq.lossgrad_factorization}
\end{equation}
where
\begin{equation}
    s_t := \frac{\partial \ell_t \left( \hat{y}_t, y_t \right)}{\partial \hat{y}_t} \qquad \text{and} \qquad \bm{g}_t := \nabla_{\bm{w}} \hat{y}_t \left( \bm{w}_t, \tilde{\bm{x}}_t \right).
\end{equation}
The learning dynamics are therefore driven by a scalar feedback signal $s_t$, acting through a directionally structured actuation vector $\bm{g}_t$. Missing data affects learning through both means: it alters the available actuation directions $\bm{g}_t$, and corrupts the scalar feedback $s_t$ by injecting unobservable contributions into the prediction error. Substituting factorization (\ref{eq.lossgrad_factorization}) into the update law (\ref{eq.analysis.w}) yields the closed-loop dynamics:
\begin{equation}
    \bm{e}_{t+1} = \bm{e}_t - \eta \bm{B}_t^{-1} \bm{g}_t s_t. \label{eq.closed_loop_error_dynamics}
\end{equation}
This representation makes explicit that learning proceeds through a single scalar feedback loop whose effect on the parameter state is shaped by the observability-aware metric $\bm{B}_t^{-1}$.

Next, we introduce assumptions that will enable our analysis of the closed loop error dynamics (\ref{eq.closed_loop_error_dynamics}) under the update rule (\ref{eq.analysis.w})-(\ref{eq.analysis.B}).

\subsection{Effective sensitivity and prediction error representation}

To relate scalar prediction errors to the parameter error $\bm{e}_t$ under the preconditioned learning dynamics, we distinguish between the residual seen by the loss and the effective error signal seen through the update geometry. Define
\begin{equation}
    \bm{\psi}_t := \bm{B}_t^{-1}\bm{g}_t, \qquad \chi_t := \bm{\psi}_t^\mathsf{T}\bm{e}_t = \bm{g}_t^\mathsf{T}\bm{B}_t^{-1}\bm{e}_t. \label{eq.analysis.effective_sensitivity}
\end{equation}
The vector $\bm{\psi}_t$ is the effective actuation direction through which the parameter state actually moves, whereas $\chi_t$ is the associated update-geometry error signal.

First, for bookkeeping reasons, we bound the instantaneous model sensitivity under masked input.
\begin{assumption} \label{assumption_G} \textbf{(bounded instantaneous sensitivity)}
    There exists $G>0$ such that $\| \bm{g}_t \| \leq G$ for all $t$.
\end{assumption}

\begin{assumption} \label{assumption.sectors} \textbf{(sector-bounded scalar feedback)}
There exists a function $\phi : \mathbb{R} \to \mathbb{R}$ such that
\begin{equation}
    s_t = \phi \left( \tilde{r}_t \right),
\end{equation}
where $\tilde{r}_t := \hat{y}_t\left( \bm{w}_t, \tilde{\bm{x}}_t \right) - y_t$ is the residual seen by the loss, and
\begin{equation}
    m z^2 \leq z \phi(z) \leq M z^2, \qquad\vert \phi(z)\vert \leq M \vert z \vert, \label{eq.assumption.sector_bounds}
\end{equation}
for some constants $m, M > 0$ and all $z \in \mathbb{R}$.
\end{assumption}
For squared loss, $\phi(z) = z$, and $m = M = 1$.

\begin{assumption} \label{assumption_d} \textbf{(a priori bounded closed-loop residual mismatch)}
    The residual seen by the loss decomposes as
    \begin{equation}
        \tilde{r}_t = \chi_t + \delta_t, \label{eq.assumption.residual_decomposition}
    \end{equation}
    where $\delta_t$ satisfies
    \begin{equation}
        \vert \delta_t \vert \leq \overline{\delta} < \infty. \label{eq.assumption.disturbance_bound}
    \end{equation}
\end{assumption}
The boundedness in (\ref{eq.assumption.disturbance_bound}) captures both the contribution of information unavailable to the learner and the mismatch between the finite prediction residual and the local, preconditioned update geometry. Unlike a classical external ISS input, $\delta_t$ need not be independent of $\bm{e}_t$, and it need not vanish when every input component is observed. In the unpreconditioned linear case, it reduces, up to the sign induced by our residual convention, to the unobserved contribution isolated in Section \ref{sec.prototypic_scalar}. The result developed below is therefore a residual-to-state, ISS-type bound with respect to this bounded closed-loop mismatch. It specializes to conventional input-to-state stability only when $\delta_t$ is supplied or modeled as an independently specified exogenous input.

The decomposition in (\ref{eq.assumption.residual_decomposition}) is an identity; the substantive contribution of Assumption \ref{assumption_d} is the existence of the uniform bound $\overline{\delta}$. That bound must be certified independently of Proposition \ref{proposition.ISS}, for example through known output, target, and parameter-domain bounds; projection or saturation; an invariant operating region established by a separate argument; or a direct model-specific mismatch estimate. If only a state-dependent estimate such as $|\delta_t|\leq c_\delta\|\bm{e}_t\|+\overline{d}$ is available, Proposition \ref{proposition.ISS} does not by itself close that dependence; an additional small-gain argument is required.

\begin{assumption} \textbf{(raw windowed/eventual observability)} \label{assumption_window}
    There exists a window length $T \in \mathbb{Z}$, with $T \geq 1$, and $\lambda_g >0$, such that for all $t \geq T-1$:
    \begin{equation}
        \sum_{k = 0}^{T-1} {\bm{g}}_{t-k} {\bm{g}}_{t-k}^\mathsf{T} \succeq \lambda_g I. \label{eq.assumption_windowed_obs}
    \end{equation}
\end{assumption}
Assumption \ref{assumption_window} enables excitation to be intermittent, provided it recurs sufficiently over finite windows.

\begin{lemma} \label{lemma.Bt_eigenvalue_bounds} \textbf{(eigenvalue bounds for the discounted observability matrix)}
Under Assumption \ref{assumption_G}, the matrix $\bm{B}_t$ satisfies for all $t$:
\begin{equation}
    \underline{b} I \preceq \bm{B}_t \preceq \overline{b} I, \qquad \underline{b}:=\epsilon,\qquad \overline{b}:=\epsilon+G^2. \label{eq.B_uniform_bounds}
\end{equation}
Moreover, under Assumption \ref{assumption_window}, for all $t\geq T-1$ the lower bound can be sharpened to
\begin{equation}
    \bm{B}_t \succeq \left(\epsilon+(1-\beta)\beta^{T-1}\lambda_g\right)I. \label{eq.B_sharper_lower}
\end{equation}
Consequently,
\begin{equation}
    \frac{1}{\overline{b}}I \preceq \bm{B}_t^{-1} \preceq \frac{1}{\underline{b}}I,\qquad \|\bm{\psi}_t\|\leq \frac{G}{\underline{b}}. \label{eq.P_uniform_bounds}
\end{equation}
\end{lemma}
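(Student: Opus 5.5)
The plan is to work directly from the unrolled form (\ref{eq.directional.B_unrolled}), $\bm{B}_t = \epsilon I + (1-\beta)\sum_{k=0}^{t}\beta^k \bm{g}_{t-k}\bm{g}_{t-k}^\mathsf{T}$, and bound quadratic forms $\bm{v}^\mathsf{T}\bm{B}_t\bm{v}$ for unit vectors $\bm{v}$.

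\textbf{Uniform bounds (\ref{eq.B_uniform_bounds}).} Each rank-one term is positive semidefinite and the weights $(1-\beta)\beta^k$ are nonnegative, so $\bm{S}_t\succeq 0$ and hence $\bm{B}_t\succeq \epsilon I$. For the upper bound, Assumption \ref{assumption_G} gives $\bm{g}\bm{g}^\mathsf{T}\preceq \|\bm{g}\|^2 I\preceq G^2 I$. The geometric weights satisfy $(1-\beta)\sum_{k=0}^{t}\beta^k = 1-\beta^{t+1}\le 1$. Together these give $\bm{S}_t\preceq G^2 I$. Alternatively, one can induct on the recursion: $\bm{S}_{-1}=0$, and $\bm{S}_t\preceq \beta G^2 I + (1-\beta)G^2 I = G^2 I$.

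\textbf{Sharper lower bound (\ref{eq.B_sharper_lower}).} For $t\ge T-1$, drop all terms with $k\ge T$, which is allowed since they are PSD. For $0\le k\le T-1$, use $\beta^k\ge \beta^{T-1}$ because $\beta\in(0,1)$. This yields $\bm{S}_t\succeq (1-\beta)\beta^{T-1}\sum_{k=0}^{T-1}\bm{g}_{t-k}\bm{g}_{t-k}^\mathsf{T}$. Assumption \ref{assumption_window} then bounds this below by $(1-\beta)\beta^{T-1}\lambda_g I$, and adding $\epsilon I$ gives the claim. The only care needed is the index range, which is why the condition $t\ge T-1$ is required: every index $t-k$ must be at least $0$, so that the window terms actually appear in the unrolled sum.

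\textbf{Consequences (\ref{eq.P_uniform_bounds}).} $\bm{B}_t$ is symmetric positive definite with spectrum in $[\underline{b},\overline{b}]$, so $\bm{B}_t^{-1}$ shares its eigenvectors and has spectrum in $[1/\overline{b},1/\underline{b}]$. This gives the Loewner sandwich, using the fact that inversion is order-reversing on positive definite matrices, or simply the eigenvalue argument. Finally, $\|\bm{\psi}_t\| = \|\bm{B}_t^{-1}\bm{g}_t\|\le \|\bm{B}_t^{-1}\|\,\|\bm{g}_t\|\le G/\underline{b}$.

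No step is a genuine obstacle. The only point requiring attention is matching the window index range and the discount weight $\beta^{T-1}$ in the sharpened bound.
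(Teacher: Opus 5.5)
Your proposal is correct and follows essentially the same route as the paper's proof. It unrolls the EMA and uses $\bm{g}\bm{g}^\mathsf{T}\preceq G^2 I$ with geometric weights summing to at most one for the upper bound, positive semidefiniteness of $\bm{S}_t$ for the lower bound, and truncation to the last $T$ terms with $\beta^k\geq\beta^{T-1}$ plus Assumption \ref{assumption_window} for the sharpened bound. The consequences then follow from order reversal under inversion and $\|\bm{B}_t^{-1}\|\,\|\bm{g}_t\|$, as in the paper; your explicit identity $(1-\beta)\sum_{k=0}^{t}\beta^k=1-\beta^{t+1}\leq 1$ and the index check $t-k\geq 0$ just spell out steps the paper leaves implicit.
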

\begin{proof}
    Refer to Appendix \ref{app.proof.lemma_Bt_eigenvaluePbounds}.
\end{proof}
For global statements, $\underline{b}=\epsilon$ is always admissible. When the analysis is initialized after one complete observability window, the sharper lower bound in (\ref{eq.B_sharper_lower}) may be substituted to obtain less conservative constants.

The raw sensitivity directions in Assumption \ref{assumption_window} are not, in general, transformed by a common matrix over a full window. We therefore state separately the excitation property that the stability proof requires in the actual update geometry.

\begin{assumption} \textbf{(windowed observability in the update geometry)} \label{assumption_effective_window}
    There exists $\mu>0$ such that for all $t\geq T-1$:
    \begin{equation}
        \sum_{k=0}^{T-1}\bm{\psi}_{t-k}\bm{\psi}_{t-k}^\mathsf{T}
        =
        \sum_{k=0}^{T-1}\bm{B}_{t-k}^{-1}\bm{g}_{t-k}\bm{g}_{t-k}^\mathsf{T}\bm{B}_{t-k}^{-1}
        \succeq \mu I. \label{eq.assumption_effective_window}
    \end{equation}
\end{assumption}
Assumption \ref{assumption_effective_window} asserts that the directions through which the parameter state actually moves continue to span the analyzed parameter space over finite windows. Because each summand in (\ref{eq.assumption_effective_window}) has rank one, the left-hand side has rank at most $T$. Consequently, the full-space conditions in Assumptions \ref{assumption_window} and \ref{assumption_effective_window} necessarily require $T\geq d$ in $\mathbb{R}^d$, and no finite $T$ can make either condition hold against the identity on a genuinely infinite-dimensional Hilbert space. Proposition \ref{proposition.ISS} is therefore a full-state result for finite-dimensional parameter spaces, or for a finite-dimensional invariant active subspace $\mathcal{U}$ containing the analyzed error trajectory, with all operators and inequalities restricted to $\mathcal{U}$. In that case, the condition may equivalently be written as
\begin{equation}
    \sum_{k=0}^{T-1}\bm{\psi}_{t-k}\bm{\psi}_{t-k}^{\mathsf{T}}\succeq\mu\Pi_{\mathcal{U}},
\end{equation}
together with invariance of the error dynamics and $\bm{e}_0\in\mathcal{U}$. Without such invariance, a projected excitation inequality alone does not close the projected error dynamics, and Proposition \ref{proposition.ISS} makes no claim for either the projected or orthogonal-complement trajectories without additional coupling bounds. The condition can be verified directly. The following result also provides a sufficient bridge from raw excitation when the inverse metric evolves coherently over the same window.

\begin{lemma} \label{lemma.raw_to_effective_excitation} \textbf{(transfer of excitation through a slowly varying metric)}
Let $\bm{P}_t:=\bm{B}_t^{-1}$ and define the inverse-metric variation budget
\begin{equation}
    \nu_T := \sup_{t\geq T-1}
    \left(
        \sum_{k=0}^{T-1}\|\bm{P}_{t-k}-\bm{P}_t\|^2
    \right)^{1/2}. \label{eq.metric_variation_budget}
\end{equation}
Under Assumptions \ref{assumption_G} and \ref{assumption_window}, and the uniform bounds of Lemma \ref{lemma.Bt_eigenvalue_bounds}, if
\begin{equation}
    G\nu_T < \frac{\sqrt{\lambda_g}}{\overline{b}}, \label{eq.metric_variation_condition}
\end{equation}
then Assumption \ref{assumption_effective_window} holds with
\begin{equation}
    \mu = \left(\frac{\sqrt{\lambda_g}}{\overline{b}}-G\nu_T\right)^2>0. \label{eq.mu_transfer}
\end{equation}
\end{lemma}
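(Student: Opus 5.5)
The plan is to prove the excitation inequality in its equivalent quadratic-form version. For a fixed $t\geq T-1$ and a unit vector $\bm{u}$, it suffices to show
\begin{equation*}
    \sum_{k=0}^{T-1}\langle \bm{\psi}_{t-k},\bm{u}\rangle^2 \geq \mu .
\end{equation*}
Since $\bm{P}_t$ is symmetric, we can write
\begin{equation*}
    \langle \bm{\psi}_{t-k},\bm{u}\rangle = \langle \bm{g}_{t-k}, \bm{P}_{t-k}\bm{u}\rangle = \langle \bm{g}_{t-k},\bm{P}_t\bm{u}\rangle + \langle \bm{g}_{t-k},(\bm{P}_{t-k}-\bm{P}_t)\bm{u}\rangle .
\end{equation*}
The first term uses a single common metric across the window, so raw excitation can be applied to it. The second term is a perturbation controlled by the variation budget $\nu_T$.

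The argument treats the length-$T$ window as a vector in $\mathbb{R}^T$. Let $\bm{a}\in\mathbb{R}^T$ have entries $a_k=\langle \bm{g}_{t-k},\bm{P}_t\bm{u}\rangle$, and let $\bm{c}\in\mathbb{R}^T$ have entries $c_k=\langle \bm{g}_{t-k},(\bm{P}_{t-k}-\bm{P}_t)\bm{u}\rangle$. The reverse triangle inequality in $\mathbb{R}^T$ gives
\begin{equation*}
    \Big(\sum_k\langle\bm{\psi}_{t-k},\bm{u}\rangle^2\Big)^{1/2} = \|\bm{a}+\bm{c}\| \geq \|\bm{a}\|-\|\bm{c}\| .
\end{equation*}
For the lower bound on $\|\bm{a}\|$, set $\bm{v}:=\bm{P}_t\bm{u}$. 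Assumption \ref{assumption_window} gives $\|\bm{a}\|^2=\bm{v}^\mathsf{T}\big(\sum_k\bm{g}_{t-k}\bm{g}_{t-k}^\mathsf{T}\big)\bm{v}\geq\lambda_g\|\bm{v}\|^2$. Lemma \ref{lemma.Bt_eigenvalue_bounds} gives $\bm{P}_t\succeq \overline{b}^{-1}I$, so $\|\bm{v}\|\geq 1/\overline{b}$. Together these yield $\|\bm{a}\|\geq\sqrt{\lambda_g}/\overline{b}$.

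For the upper bound on $\|\bm{c}\|$, Cauchy--Schwarz and Assumption \ref{assumption_G} give $|c_k|\leq G\|\bm{P}_{t-k}-\bm{P}_t\|$. Hence $\|\bm{c}\|\leq G\big(\sum_k\|\bm{P}_{t-k}-\bm{P}_t\|^2\big)^{1/2}\leq G\nu_T$, using the supremum in (\ref{eq.metric_variation_budget}).

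Combining the two bounds, condition (\ref{eq.metric_variation_condition}) makes $\|\bm{a}\|-\|\bm{c}\|\geq \sqrt{\lambda_g}/\overline{b}-G\nu_T>0$. Squaring a positive lower bound is legitimate, and it yields $\bm{u}^\mathsf{T}\big(\sum_k\bm{\psi}_{t-k}\bm{\psi}_{t-k}^\mathsf{T}\big)\bm{u}\geq\mu$ with $\mu$ as in (\ref{eq.mu_transfer}). This holds uniformly in $\bm{u}$ and in $t\geq T-1$, which is exactly Assumption \ref{assumption_effective_window}.

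The main obstacle is the bookkeeping for the perturbation term. Bounding each $|c_k|$ crudely and summing would produce a factor of $\sqrt{T}$ or $T$ instead of the $\ell^2$ budget $\nu_T$. Working with the whole window as a vector in $\mathbb{R}^T$, and applying the reverse triangle inequality at the level of square roots rather than expanding squares, is what keeps the constant clean. A minor additional point is that the sharper bound (\ref{eq.B_sharper_lower}) is not needed here: only the upper bound $\bm{B}_t\preceq\overline{b}I$ enters, through $\bm{P}_t\succeq\overline{b}^{-1}I$.
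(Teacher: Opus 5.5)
Your proof is correct and is essentially the paper's argument. The paper freezes the metric at $\bm{P}_t$ across the window, writes $\bm{\Psi}_t=\bm{P}_t\bm{\mathcal{G}}_t+\bm{E}_t$, and applies the singular-value perturbation bound $\sigma_{\min}(\bm{\Psi}_t)\geq\sigma_{\min}(\bm{P}_t\bm{\mathcal{G}}_t)-\|\bm{E}_t\|$ with $\|\bm{E}_t\|\leq\|\bm{E}_t\|_{\mathrm{F}}\leq G\nu_T$. Your vectors $\bm{a}$ and $\bm{c}$ are exactly $\bm{\mathcal{G}}_t^\mathsf{T}\bm{P}_t\bm{u}$ and $\bm{E}_t^\mathsf{T}\bm{u}$, so your reverse triangle inequality is that same perturbation bound evaluated direction by direction, and your two estimates, $\sqrt{\lambda_g}/\overline{b}$ and $G\nu_T$, coincide with the paper's.
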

\begin{proof}
    Refer to Appendix \ref{app.proof.raw_to_effective_excitation}.
\end{proof}

The metric-coherence condition above is naturally supported by the EMA construction. Indeed, the resolvent identity and recursion (\ref{eq.directional.B_recursive}) imply
\begin{equation}
    \|\bm{P}_{j+1}-\bm{P}_j\|
    \leq \frac{2(1-\beta)G^2}{\underline{b}^2}, \label{eq.metric_increment_bound}
\end{equation}
and therefore
\begin{equation}
    \nu_T
    \leq
    \frac{2(1-\beta)G^2}{\underline{b}^2}
    \sqrt{\frac{T(T-1)(2T-1)}{6}}. \label{eq.metric_variation_ema_bound}
\end{equation}
This sufficient bound is conservative, but it makes the role of $\beta$ explicit: raw observability is transferred to the effective update geometry when the preconditioner does not deform too violently over the same window.

\begin{lemma} \label{lemma.dissipation} \textbf{(one-step dissipation and bounded motion)}
Let $V_t := \| \bm{e}_t \|^2$. Under Assumptions \ref{assumption_G}-\ref{assumption_d}, if
\begin{equation}
    0 < \eta \leq \frac{m \underline{b}^2}{2 M^2 G^2}, \label{eq.Lyapunov_lemma_step_size}
\end{equation}
then, for all $t$,
\begin{align}
    V_{t+1}
    &\leq V_t-\alpha_0\tilde{r}_t^2+\gamma_0\delta_t^2, \label{eq.one_step_Lyapunov_decr}\\
    V_{t+1}
    &\leq V_t-\alpha\chi_t^2+\gamma \delta_t^2, \label{eq.one_step_effective_decr}\\
    \|\bm{e}_{t+1}-\bm{e}_t\|
    &\leq \eta\frac{GM}{\underline{b}}|\tilde{r}_t|, \label{eq.one_step_motion_bound}
\end{align}
where
\begin{equation}
    \alpha_0:=\frac{\eta m}{2},\qquad
    \gamma_0:=\frac{\eta M^2}{m},\qquad
    \alpha:=\frac{\alpha_0}{2}=\frac{\eta m}{4},\qquad
    \gamma:=\alpha_0+\gamma_0.
    \label{eq.dissipation_constants}
\end{equation}
\end{lemma}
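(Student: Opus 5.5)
The plan is to expand the Lyapunov difference exactly along the closed-loop dynamics (\ref{eq.closed_loop_error_dynamics}) and then bound the resulting terms using only the sector conditions of Assumption \ref{assumption.sectors}, the identity of Assumption \ref{assumption_d}, and the uniform bound $\|\bm{\psi}_t\|\leq G/\underline{b}$ from Lemma \ref{lemma.Bt_eigenvalue_bounds}.

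\textbf{Exact expansion.} Writing $\bm{e}_{t+1}=\bm{e}_t-\eta\bm{\psi}_t s_t$ with $\bm{\psi}_t=\bm{B}_t^{-1}\bm{g}_t$, I would record
\begin{equation*}
V_{t+1}=V_t-2\eta s_t\chi_t+\eta^2 s_t^2\|\bm{\psi}_t\|^2 .
\end{equation*}
The key manoeuvre is to avoid treating $\chi_t$ as the feedback variable. Instead, substitute $\chi_t=\tilde{r}_t-\delta_t$, so that the cross term becomes $-2\eta s_t\tilde{r}_t+2\eta s_t\delta_t$.

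\textbf{Bounding the three terms.} With $s_t=\phi(\tilde{r}_t)$, I would bound each term separately. The sector lower bound gives
\begin{equation*}
-2\eta s_t\tilde{r}_t\leq -2\eta m\tilde{r}_t^2 .
\end{equation*}
The magnitude bound gives
\begin{equation*}
2\eta s_t\delta_t\leq 2\eta M|\tilde{r}_t||\delta_t| .
\end{equation*}
The quadratic term satisfies
\begin{equation*}
\eta^2 s_t^2\|\bm{\psi}_t\|^2\leq \eta^2M^2G^2\underline{b}^{-2}\tilde{r}_t^2 .
\end{equation*}
The step-size condition (\ref{eq.Lyapunov_lemma_step_size}) is exactly what makes this last term at most $\tfrac{\eta m}{2}\tilde{r}_t^2$. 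Combining, I obtain the intermediate bound
\begin{equation*}
V_{t+1}\leq V_t-\tfrac32\eta m\tilde{r}_t^2+2\eta M|\tilde{r}_t||\delta_t| .
\end{equation*}
Young's inequality then gives
\begin{equation*}
2\eta M|\tilde{r}_t||\delta_t|\leq \eta m\tilde{r}_t^2+\tfrac{\eta M^2}{m}\delta_t^2 ,
\end{equation*}
which yields (\ref{eq.one_step_Lyapunov_decr}) with $\alpha_0=\eta m/2$ and $\gamma_0=\eta M^2/m$.

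\textbf{Passing to the effective error signal.} For (\ref{eq.one_step_effective_decr}), I would use
\begin{equation*}
\chi_t^2=(\tilde{r}_t-\delta_t)^2\leq 2\tilde{r}_t^2+2\delta_t^2 ,
\end{equation*}
that is, $\tilde{r}_t^2\geq \tfrac12\chi_t^2-\delta_t^2$. Inserting this into $-\alpha_0\tilde{r}_t^2$ gives $-\tfrac{\alpha_0}{2}\chi_t^2+\alpha_0\delta_t^2$. This produces $\alpha=\alpha_0/2$ and $\gamma=\alpha_0+\gamma_0$, as claimed.

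\textbf{Motion bound.} The motion bound (\ref{eq.one_step_motion_bound}) is immediate:
\begin{equation*}
\|\bm{e}_{t+1}-\bm{e}_t\|=\eta|s_t|\,\|\bm{\psi}_t\|\leq \eta M|\tilde{r}_t|\,G/\underline{b} .
\end{equation*}

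\textbf{Main obstacle.} The only delicate point is the cross term. Since $s_t$ is a function of $\tilde{r}_t$ rather than of $\chi_t$, the sector inequality cannot be applied to $s_t\chi_t$ directly. Rewriting via $\chi_t=\tilde{r}_t-\delta_t$ first is what makes the sign work. Everything else is bookkeeping of constants against the step-size threshold.
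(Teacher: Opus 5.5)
Your proposal is correct and follows the paper's proof in every essential step. Like the paper, you expand $V_{t+1}$ exactly, substitute $\chi_t=\tilde{r}_t-\delta_t$ before invoking the sector bounds, apply Young's inequality together with the step-size condition, pass to $\chi_t$ via $\tilde{r}_t^2\geq\tfrac12\chi_t^2-\delta_t^2$, and bound the motion through $\|\bm{\psi}_t\|\leq G/\underline{b}$. The only difference is that you absorb the quadratic term before applying Young's inequality rather than after, which is immaterial.
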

\begin{proof}
Refer to Appendix \ref{app.proof.one_step_dissipation}.
\end{proof}

\begin{lemma} \textbf{(windowed coercivity in the update geometry)} \label{lemma.windowed_obs_in_update_geometry}
Suppose Assumption \ref{assumption_effective_window} and the conclusions of Lemma \ref{lemma.dissipation} hold. Then, for all $t\geq T-1$,
\begin{equation}
    \sum_{k=0}^{T-1}\chi_{t-k}^2
    \geq
    \theta V_{t-T+1}-\zeta\overline{\delta}^2, \label{eq.windowed_coercivity}
\end{equation}
where an admissible pair is
\begin{align}
    \theta
    &:=
    \frac{\mu}{2}
    -
    \frac{\eta^2G^4M^2T^2}{\alpha_0\underline{b}^4}
    =
    \frac{\mu}{2}
    -
    \frac{2\eta G^4M^2T^2}{m\underline{b}^4},
    \label{eq.theta_effective}\\
    \zeta
    &:=
    \frac{\eta^2G^4M^2\gamma_0T^3}{\alpha_0\underline{b}^4}
    =
    \frac{2\eta^2G^4M^4T^3}{m^2\underline{b}^4}.
    \label{eq.zeta_effective}
\end{align}
In particular, $\theta>0$ whenever
\begin{equation}
    \eta < \frac{\mu m\underline{b}^4}{4G^4M^2T^2}. \label{eq.lemma_windowed_step_size_cond}
\end{equation}
\end{lemma}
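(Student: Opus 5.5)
The argument is the usual persistent-excitation one: anchor the window at its first error $\bm{e}_{t-T+1}$, compare every $\chi_{t-k}$ with the quantity obtained by replacing $\bm{e}_{t-k}$ by that anchored error, and control the drift with the one-step bounds of Lemma \ref{lemma.dissipation}. Fix $t\geq T-1$, set $s:=t-T+1$, and for $j=s,\dots,t$ write
\[
\chi_j=\bm{\psi}_j^\mathsf{T}\bm{e}_s+\bm{\psi}_j^\mathsf{T}(\bm{e}_j-\bm{e}_s).
\]
Using $(a+b)^2\geq \tfrac12 a^2-b^2$ and summing over the window gives
\[
\sum_j\chi_j^2\geq \tfrac12\,\bm{e}_s^\mathsf{T}\Big(\sum_j\bm{\psi}_j\bm{\psi}_j^\mathsf{T}\Big)\bm{e}_s-\sum_j\big(\bm{\psi}_j^\mathsf{T}(\bm{e}_j-\bm{e}_s)\big)^2 .
\]
By Assumption \ref{assumption_effective_window}, the first term is at least $\tfrac{\mu}{2}V_s$. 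This is the source of the $\mu/2$ in $\theta$.

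The second term is a drift term. By Lemma \ref{lemma.Bt_eigenvalue_bounds}, $\|\bm{\psi}_j\|\leq G/\underline{b}$. By telescoping and (\ref{eq.one_step_motion_bound}), $\|\bm{e}_j-\bm{e}_s\|\leq \eta\frac{GM}{\underline{b}}\sum_{i=s}^{j-1}|\tilde r_i|$. Cauchy--Schwarz then gives
\[
\big(\bm{\psi}_j^\mathsf{T}(\bm{e}_j-\bm{e}_s)\big)^2\leq \frac{\eta^2G^4M^2}{\underline{b}^4}\,T\sum_{i=s}^{t-1}\tilde r_i^2,
\]
and summing over the $T$ indices $j$ produces a factor $T^2$:
\[
\sum_j(\cdot)^2\leq \frac{\eta^2G^4M^2T^2}{\underline{b}^4}\sum_{i=s}^{t-1}\tilde r_i^2 .
\]

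To close the argument, the residual energy is bounded by the Lyapunov decrease. Summing (\ref{eq.one_step_Lyapunov_decr}) over $i=s,\dots,t-1$ and using $V\geq0$ yields
\[
\alpha_0\sum_i\tilde r_i^2\leq V_s-V_t+\gamma_0 T\overline{\delta}^2\leq V_s+\gamma_0T\overline\delta^2 .
\]
Substituting this into the drift bound gives
\[
\sum_j\chi_j^2\geq\Big(\frac{\mu}{2}-\frac{\eta^2G^4M^2T^2}{\alpha_0\underline{b}^4}\Big)V_s-\frac{\eta^2G^4M^2\gamma_0T^3}{\alpha_0\underline{b}^4}\overline\delta^2 .
\]
The coefficients are exactly $\theta$ and $\zeta$ as stated. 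The simplified forms follow from $\alpha_0=\eta m/2$ and $\gamma_0=\eta M^2/m$. The positivity condition (\ref{eq.lemma_windowed_step_size_cond}) is just $\theta>0$ solved for $\eta$.

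The main obstacle is the bookkeeping in the drift step. The powers of $T$ must come out as $T^2$ in $\theta$ and $T^3$ in $\zeta$, which requires using one Cauchy--Schwarz factor $T$ and one summation factor $T$, and then one more $T$ from the disturbance sum. The drift must also be controlled by the Lyapunov decrease (\ref{eq.one_step_Lyapunov_decr}) in $\tilde r$, not by (\ref{eq.one_step_effective_decr}) in $\chi$, to avoid circularity. If the crude over-estimate $\sum_{i<j}\leq\sum_{i=s}^{t-1}$ loses a harmless constant, the stated pair remains admissible, since only an upper bound on the drift is needed.
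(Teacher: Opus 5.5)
Your proposal is correct and follows the paper's proof essentially step for step. You anchor the window at $\bm{e}_{t-T+1}$, split with $(a+b)^2\geq\frac12a^2-b^2$, invoke Assumption \ref{assumption_effective_window} for the $\frac{\mu}{2}V_s$ term, and control the drift via the motion bound, Cauchy--Schwarz, and the $\tilde r$-dissipation inequality (\ref{eq.one_step_Lyapunov_decr}); the only cosmetic differences are that you bound each drift summand directly rather than through the window maximum $\Delta_t$, and you sum residuals over $s,\dots,t-1$ instead of $s,\dots,t$, both of which give the same $\theta$ and $\zeta$.
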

\begin{proof}
Refer to Appendix \ref{app.proof.update_geometry_lemma}.
\end{proof}

\begin{proposition} \label{proposition.ISS} \textbf{(windowed residual-to-state bound under effective eventual observability)}
Suppose Assumptions \ref{assumption_G}-\ref{assumption_d} and \ref{assumption_effective_window} hold. Alternatively, Assumption \ref{assumption_effective_window} is ensured by Assumption \ref{assumption_window} and condition (\ref{eq.metric_variation_condition}), via Lemma \ref{lemma.raw_to_effective_excitation}. If
\begin{equation}
    0<\eta<
    \min\left\{
        \frac{m\underline{b}^2}{2M^2G^2},
        \frac{\mu m\underline{b}^4}{4G^4M^2T^2},
        \frac{8}{m\mu}
    \right\}, \label{eq.proposition_step_size}
\end{equation}
then there exist constants $q\in(0,1)$ and $A,L>0$ such that, for all $t\geq0$,
\begin{equation}
    \|\bm{e}_t\|
    \leq
    A q^{\lfloor t/T\rfloor}\|\bm{e}_0\|
    +
    L\overline{\delta}. \label{eq.proposition_ISS_bound}
\end{equation}
One explicit admissible choice is
\begin{equation}
    q=\sqrt{1-\alpha\theta},
\end{equation}
with $A$ and $L$ as provided in the proof.
\end{proposition}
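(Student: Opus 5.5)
The plan is to combine the one-step dissipation of Lemma \ref{lemma.dissipation} with the windowed coercivity of Lemma \ref{lemma.windowed_obs_in_update_geometry}. This gives a contraction of $V_t=\|\bm{e}_t\|^2$ over blocks of length $T$. We then iterate that block inequality and handle the partial block and the initial transient separately.

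\textbf{Step 1: step size.} The step-size condition (\ref{eq.proposition_step_size}) makes both earlier lemmas applicable.
\begin{itemize}
\item The first entry of the minimum is (\ref{eq.Lyapunov_lemma_step_size}), so (\ref{eq.one_step_effective_decr}) holds.
\item The second entry is (\ref{eq.lemma_windowed_step_size_cond}), so $\theta>0$.
\item The third entry gives $\alpha\mu/2 = \eta m\mu/8<1$. Since $\alpha\theta\le\alpha\mu/2$, we get $q^2=1-\alpha\theta\in(0,1)$.
\end{itemize}

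\textbf{Step 2: block contraction.} Fix $t\ge T-1$ and sum (\ref{eq.one_step_effective_decr}) over $k=t-T+1,\dots,t$:
\begin{equation*}
V_{t+1}\le V_{t-T+1}-\alpha\sum_{k=0}^{T-1}\chi_{t-k}^2+\gamma T\overline{\delta}^2 .
\end{equation*}
Inserting (\ref{eq.windowed_coercivity}) gives
\begin{equation*}
V_{t+1}\le (1-\alpha\theta)V_{t-T+1}+(\alpha\zeta+\gamma T)\overline{\delta}^2 .
\end{equation*}
Applying this at the block endpoints $t+1=(j+1)T$ and iterating the linear recursion $V_{(j+1)T}\le q^2V_{jT}+c\overline{\delta}^2$, with $c:=\alpha\zeta+\gamma T$, yields
\begin{equation*}
V_{jT}\le q^{2j}V_0+\frac{c}{1-q^2}\overline{\delta}^2=q^{2j}V_0+\frac{c}{\alpha\theta}\overline{\delta}^2 .
\end{equation*}

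\textbf{Step 3: times inside a block.} For $t=jT+i$ with $0\le i<T$, I will not use (\ref{eq.one_step_motion_bound}), because it would reintroduce $|\tilde r_t|$. Instead, dropping the negative term in (\ref{eq.one_step_effective_decr}) gives $V_{t}\le V_{jT}+i\gamma\overline{\delta}^2$. Hence
\begin{equation*}
V_t\le q^{2\lfloor t/T\rfloor}V_0+\Big(\frac{c}{\alpha\theta}+T\gamma\Big)\overline{\delta}^2 .
\end{equation*}
Taking square roots and using $\sqrt{a+b}\le\sqrt a+\sqrt b$ gives (\ref{eq.proposition_ISS_bound}) with $A=1$ and
\begin{equation*}
L=\sqrt{\frac{c}{\alpha\theta}+T\gamma}.
\end{equation*}
If desired, $A$ can be inflated slightly to absorb any indexing offset.

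\textbf{Main obstacle: indexing of the coercivity window.} Lemma \ref{lemma.windowed_obs_in_update_geometry} bounds the window sum by $V_{t-T+1}$, the value at the \emph{start} of the window. We need the window ending at $t=(j+1)T-1$, whose start is $jT$, and this requires $t\ge T-1$. That holds for every $j\ge0$, so the first block already contracts and no separate transient term is needed. The two points to check carefully are that the window $k=t-T+1,\dots,t$ matches exactly the terms $\chi_{t-k}$ for $k=0,\dots,T-1$ in the lemma, and that Assumption \ref{assumption_effective_window} is invoked only for $t\ge T-1$.

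\textbf{Alternative hypothesis.} If Assumption \ref{assumption_effective_window} is obtained instead from Assumption \ref{assumption_window} and (\ref{eq.metric_variation_condition}), Lemma \ref{lemma.raw_to_effective_excitation} supplies $\mu$. The rest of the argument is unchanged.
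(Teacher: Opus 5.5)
Your proof is correct. Your block-endpoint argument is exactly the paper's Step~1: you sum (\ref{eq.one_step_effective_decr}) over $[jT,(j+1)T-1]$, apply Lemma \ref{lemma.windowed_obs_in_update_geometry} at $t=(j+1)T-1$, and iterate with $q^2=1-\alpha\theta$, including the check $0<\alpha\theta\le \eta m\mu/8<1$.

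Where you genuinely differ is the within-window step.
\begin{itemize}
\item The paper controls the displacement $\|\bm{e}_t-\bm{e}_{nT}\|$ through the motion bound (\ref{eq.one_step_motion_bound}), Cauchy--Schwarz, and the windowed residual sum obtained from (\ref{eq.one_step_Lyapunov_decr}). This gives $\|\bm{e}_t\|\le A\|\bm{e}_{nT}\|+D\overline{\delta}$ with $A=1+\eta\frac{GM}{\underline{b}}\sqrt{T/\alpha_0}>1$, and then $L=AC+D$.
\item You instead use the fact that (\ref{eq.one_step_effective_decr}) holds at every single step. So $V$ can grow by at most $\gamma\overline{\delta}^2$ per step, which immediately gives $V_t\le V_{jT}+T\gamma\overline{\delta}^2$. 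Hence $A=1$ and $L=\sqrt{(\alpha\zeta+\gamma T)/(\alpha\theta)+T\gamma}$.
\end{itemize}

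Your route is shorter and gives a strictly smaller transient gain $A$. At the level of the proposition it also avoids the motion bound and the residual-based dissipation inequality; both still enter through the proof of Lemma \ref{lemma.windowed_obs_in_update_geometry}. The paper's route additionally certifies a bound on the within-window displacement itself. That fits its bounded-motion narrative, but it is not needed for (\ref{eq.proposition_ISS_bound}).

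Your closing remark about inflating $A$ to absorb an indexing offset is unnecessary. The window indexing you checked is exact for every $j\ge 0$.
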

\begin{proof}
Refer to Appendix \ref{app.proof.ISS}.
\end{proof}

The displayed constants are \textit{sufficient} and generally \textit{conservative}. Their purpose is to certify the contraction mechanism and expose its directional dependencies, rather than provide numerically tight trajectory estimates or literal tuning rules. In Section \ref{section.experiments} we evaluate the practical behavior of the resulting adaptation outside this worst-case analytical envelope.

\subsection{Special cases}
Equipped with the full form and analysis of our directionally-aware learning adaptation mechanism, we explore next how it specializes in linear regression, kernel methods, and general gradient networks.

\subsubsection{Linear regression} \label{sec.special.linear_regression}
For the linear model $\hat{y}_t = \langle \bm{w}_t , \tilde{\bm{x}}_t \rangle$, the sensitivity collapses to $\bm{g}_t = \tilde{\bm{x}}_t$, so $\bm{A}_t = \tilde{\bm{x}}_t \tilde{\bm{x}}_t^\mathsf{T}$. The empirical state $\bm{S}_t$ therefore becomes the discounted covariance of the \textit{observed} feature vectors, while $\bm{B}_t=\bm{S}_t+\epsilon I$ supplies the regularized geometry used by the update. The directional update law (\ref{eq.adaptive_directional_control_law}) specializes to a natural-gradient-like form over the time-varying observed subspace. When excitation is approximately isotropic, $\bm{B}_t \approx (\epsilon + \gamma_t)I$ and (\ref{eq.adaptive_directional_control_law}) recovers the scalar throttling structure of (\ref{eq.linear_regression_adaptive_step}).

In this case, Assumption \ref{assumption_effective_window} becomes
\begin{equation}
    \sum_{k=0}^{T-1}
    \bm{B}_{t-k}^{-1}\tilde{\bm{x}}_{t-k}\tilde{\bm{x}}_{t-k}^\mathsf{T}\bm{B}_{t-k}^{-1}
    \succeq \mu I.
\end{equation}
Lemma \ref{lemma.raw_to_effective_excitation} shows that this effective excitation follows from raw masked-feature excitation together with sufficiently coherent evolution of the inverse covariance geometry. Under these conditions, Proposition \ref{proposition.ISS} takes the same qualitative form as the prototype bound (\ref{eq.prototypic.ISS}), with explicit constants now depending on the effective excitation level $\mu$, or, through (\ref{eq.mu_transfer}), on the raw lower bound $\lambda_g$ and the inverse-metric variation budget $\nu_T$. Our general theory, therefore, remains a natural and formalized extension of the toy controller of Section \ref{sec.prototypic_scalar}, while making explicit the additional geometry introduced by directional adaptation.

\subsubsection{Kernel methods} \label{sec.special.kernel}
In the RKHS setting of Section \ref{sec.prototypic_nonlinear_intro}, the model is $\hat{y}_t = \langle \bm{w}_t, \phi(\tilde{\bm{x}}_t) \rangle_\mathcal{H}$. The sensitivity $\bm{g}_t = \phi(\tilde{\bm{x}}_t)$ lives in the feature space, and $\bm{A}_t$ is the rank-one operator whose trace is $k(\tilde{\bm{x}}_t,\tilde{\bm{x}}_t)$. The same effective directions $\bm{\psi}_t=\bm{B}_t^{-1}\phi(\tilde{\bm{x}}_t)$ determine the windowed excitation condition.

For a finite-dimensional kernel realization---for example, a fixed dictionary, Nystr\"om approximation, random-feature model, or another finite basis---Proposition \ref{proposition.ISS} applies on that realized parameter space whenever the sector, bounded-residual-mismatch, and effective windowed-observability conditions hold. It also applies on a finite-dimensional invariant active subspace $\mathcal{U}\subset\mathcal{H}$ containing the analyzed error trajectory, with all operators and excitation inequalities restricted to $\mathcal{U}$. The constants are then driven by bounded kernel sensitivities and the effective Gramian lower bound $\mu$ on that finite-dimensional space, with Lemma \ref{lemma.raw_to_effective_excitation} providing a sufficient raw-to-effective bridge on the same space.

In a genuinely infinite-dimensional RKHS, however, the finite-window operator
\begin{equation}
    \sum_{k=0}^{T-1}\bm{\psi}_{t-k}\otimes\bm{\psi}_{t-k},
\end{equation}
where $\bm{u}\otimes\bm{u}$ denotes the rank-one operator $\bm{v}\mapsto\bm{u}\langle\bm{u},\bm{v}\rangle_{\mathcal{H}}$, has rank at most $T$ and therefore cannot dominate $\mu I_{\mathcal{H}}$ for any $\mu>0$. We consequently make no claim of full-RKHS finite-window contraction. If the dynamics remain in a finite-dimensional invariant active subspace, Proposition \ref{proposition.ISS} applies after restricting all quantities to that subspace. If excitation is known only after projection but the dynamics are not invariant, a separate coupling, approximation, regularization, or identifiability argument is required. We see this limitation as particularly important in kernel spaces, where missingness can silently impair excitation along entire Hilbert-space directions without any coordinate being explicitly frozen.

\subsubsection{Gradient networks}
For any differentiable model (e.g., multi-layer perceptron, autoencoder, transformer), the instantaneous sensitivity is the full Jacobian row $\bm{J}_t = \partial f(\bm{w}_t, \tilde{\bm{x}}_t) / \partial \bm{w}$, $\bm{g}_t = \bm{J}_t^\mathsf{T}$. The matrix $\bm{A}_t = \bm{g}_t \bm{g}_t^\mathsf{T}$ now aggregates and weighs excitation \textit{across all parameters} that influence the current output. Even if the large scale of contemporary architectures can give rise to computational and scalability considerations, modern architectures naturally group parameters into layers or attention heads whose Jacobian columns are excited together; a block-diagonal approximation for $\bm{B}_t$, as mentioned in Section \ref{sec.practical_approximations}, therefore becomes both cheap and semantically meaningful. Each block tracks the aggregate observability of its layer/head and scales group updates against aggregate recent actuation.

We make no claim of global optimality in the nonconvex landscape, and further elucidate our perspective in Section \ref{sec.nonconvex_perspectives}. Rather, Proposition \ref{proposition.ISS} guarantees bounded and contracting parameter trajectories under its stated closed-loop sector, bounded-residual-mismatch, and effective-excitation conditions. In deep models, these conditions make explicit what must remain true for representation drift to be suppressed under masking: the residual mismatch must remain bounded and the preconditioned Jacobian directions must recurrently span the relevant parameter subspace. Our controller therefore supplies a closed-loop stability mechanism without replacing the descent signal indicated by the loss gradient -- only the effective magnitude and geometry through which it acts.

\subsection{Failure modes, limits, and scope boundaries}

If \textbf{observability} in a subset of the parameter space \textbf{never returns}, learning around that regime stalls. We consider this a limit case that actually illustrates a feature, and not a failure or a limitation of our approach, as learning \textit{should not} continue when not supported by empirical data.

The finite-window observability conditions also carry an unavoidable dimensional scope. A window containing $T$ scalar-output sensitivity directions can certify full-space contraction only on a parameter space of dimension at most $T$. In larger or infinite-dimensional spaces, the guarantee must be interpreted blockwise or on a finite-dimensional active subspace; directions outside that analyzed space are not covered by Proposition \ref{proposition.ISS} unless additional structure controls them.

When \textbf{data and/or the respective gradients that enter the method are not truthful}, canonical failures emerge. These are well studied in the field (e.g., Byzantine resilient learning) and fall outside of our nominal scope in this work. Nevertheless, our adaptation becomes relevant when a subset of the learning data, potentially, cannot be trusted (as determined by some other process), and hence needs to be treated as missing. Instead of dropping the entire iteration, our method allows learning to continue safely, with the data that is still available.

Any learning perturbation and/or adaptation gives rise to \textbf{learnability} and \textbf{identifiability} questions: \textit{will the method learn the same thing with and without the perturbation/adaptation}, or \textit{can the method learn at all (i.e., in a Vapnik-Chervonenkis sense)?} Our guarantees are algebraic: they bound the Euclidean distance of the parameter trajectory from a reference, and ensure that this distance contracts under the conditions of Proposition \ref{proposition.ISS}. Whether small parameter-space distance implies semantic or functional proximity is a property of the model class and its parameter-to-output geometry, not of the training dynamics. In models with rich symmetries, degeneracies, or flat regions (e.g., permutation invariance in transformers, redundant parameterizations in overparameterized networks), Euclidean proximity may carry limited functional meaning. This is a well-known and independently studied phenomenon that no training-time adaptation mechanism can (or, perhaps, should) resolve: it is more related to the architecture and representation design, not to the control of the learning process. Our contributions operate at the level of trajectory controllability --ensuring that the learner moves only where and as much as data justify-- and are therefore compositional with, but categorically distinct from, any analysis of the regularity of the learned map.

\subsection{Perspective on nonconvex settings} \label{sec.nonconvex_perspectives}
Many learning problems take place in nonconvex landscapes. Therein, stochastic gradient descent is understood to leverage noise to explore the parameter space effectively and converge to satisfactory, robust, and well-performing (from a utility perspective, perhaps regardless of global optimality with regards to empirical loss functions) solutions. Our goal here is not to alter gradient descent mechanics or interfere with its ability to explore nonconvex landscapes. Rather, we are driven by a largely orthogonal failure mode: when data are missing, and regardless of the classical remedy employed (imputation, masking, etc.), the gradient feedback is computed by a reduced \textit{or} partially hypothesized observation operator, which can induce open-loop drift and incoherent parameter motion in directions unsupported by actual, empirically-available evidence. Seeing missing data as a potential  ``benefit'' for SGD (i.e., akin to exploration-promoting noise) does not resolve this structural loss of actuation; in fact, it can exacerbate instability by injecting variance along poorly observed subspaces (especially as missing data patterns may be correlated in time and not spontaneously random).

In either convex or nonconvex settings, and for whatever expected outcome an SGD method is used for (exact or empirical), our contribution is about trajectory sanity under partial actuation. We attenuate unjustified parameter motion when the gradient channel is information-poor or structurally censored. Even more so when either symptom persists and accumulates, as captured by the composition of $\bm{B}_t$. Ultimately, our adaptive control formulation aims to render closed loop learning dynamics reliable and coherent under the finite-window residual-to-state conditions made explicit in Proposition \ref{proposition.ISS}, while also preventing parameter hallucinations, as we demonstrate next.

\begin{figure*}[t!]
    \includegraphics[width=\linewidth]{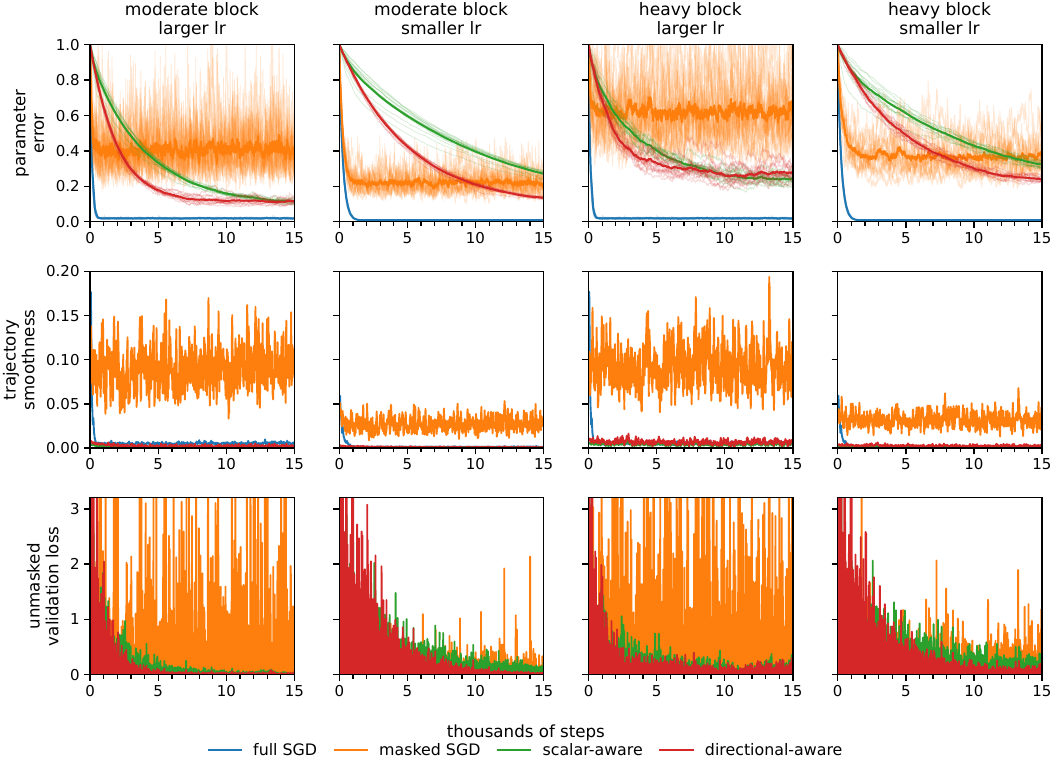}
    \caption{Numerical results from \textbf{Experiment \ref{sec.exp1}}: \textit{linear regression under missing blocks and individual features} in $d = 40$ dimensions. In the plotted results we consider \textbf{block missingness} of two different severities (moderate / heavy), that we sketch in Figure \ref{fig.patterns.linear.block}, and step sizes $\eta = \{ 0.020, 0.008\}$. For the parameter error, we consider $15$ different runs as well as their pointwise-in-time mean (drawn more heavily against the distribution), whereas for the rest of the plots we examine a single run for each algorithm (full / masked / scalar-adaptive / directionally-adaptive) to better illustrate their individual behavior.}
    \label{fig.linear.block}
\end{figure*}

\begin{figure*}[t!]
    \includegraphics[width=\linewidth]{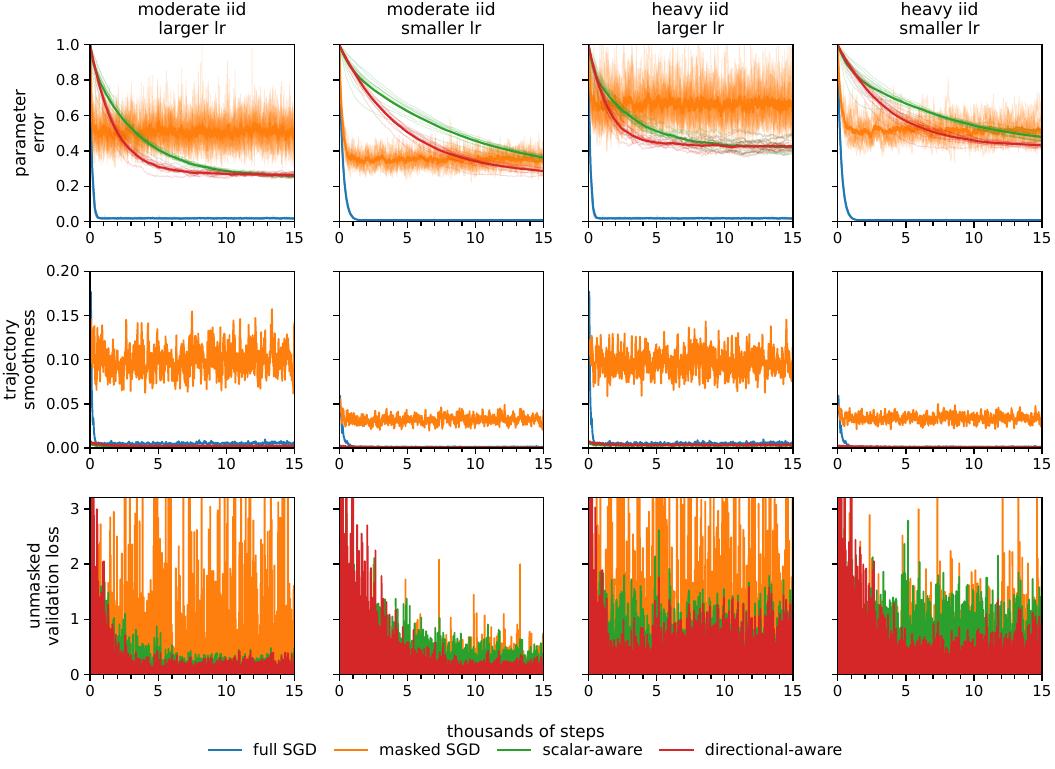}
    \caption{Numerical results from \textbf{Experiment \ref{sec.exp1}}: \textit{linear regression under missing blocks and individual features} in $d = 40$ dimensions. In the plotted results we consider \textbf{iid missingness} of two different severities (moderate / heavy), that we sketch in Figure \ref{fig.patterns.linear.iid}, and step sizes $\eta = \{ 0.020, 0.008\}$. For the parameter error, we consider $15$ different runs as well as their pointwise-in-time mean (drawn more heavily against the distribution), whereas for the rest of the plots we examine a single run for each algorithm (full / masked / scalar-adaptive / directionally-adaptive) to better illustrate their individual behavior.}
    \label{fig.linear.iid}
\end{figure*}

\begin{figure}[t]
  \centering
  \begin{subfigure}[t]{0.48\textwidth}  
    \centering
    \includegraphics[width=\textwidth]{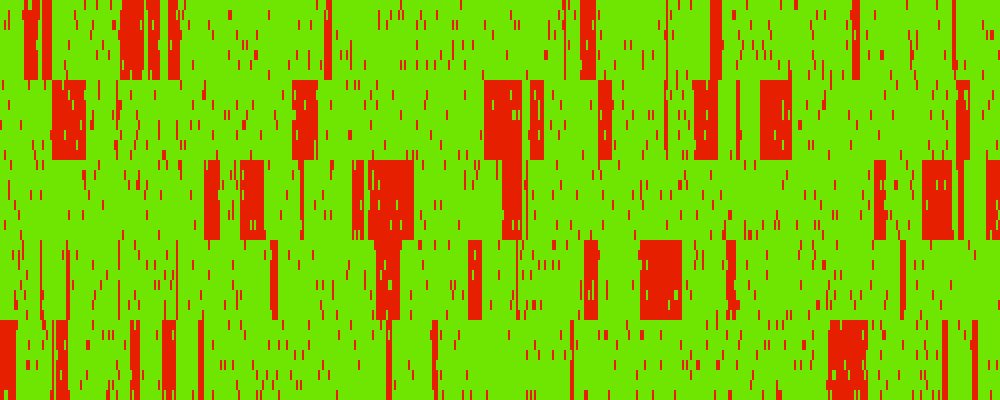}
    \caption{500 instances of moderate block missingness}
    \label{fig.patterns.linear.block.moderate}
  \end{subfigure}
  \hfill
  \begin{subfigure}[t]{0.48\textwidth}
    \centering
    \includegraphics[width=\textwidth]{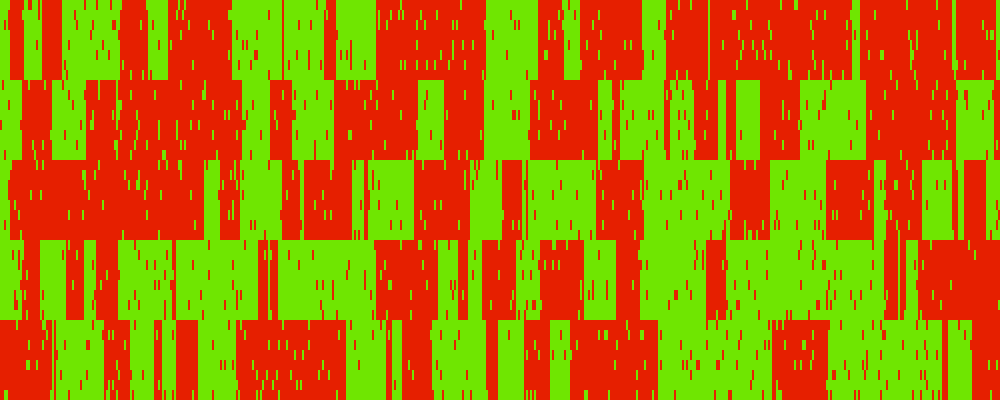}
    \caption{500 instances of heavy block missingness}
    \label{fig.patterns.linear.block.severe}
  \end{subfigure}
    \caption{Representative realizations of the \textbf{block missingness patterns} used in \textbf{Experiment \ref{sec.exp1}}. We show the first $500$ training instances under moderate and heavy block missingness. The horizontal axes show time; the vertical axes, the 40 features. Compared to iid masking, these patterns induce temporally persistent losses of observability, making the same partial actuation mechanism that we discussed in the description of Experiment \ref{sec.exp1} more pronounced and easier to visualize.}    \label{fig.patterns.linear.block}
\end{figure}

\begin{figure}[t]
  \centering
  \begin{subfigure}[t]{0.48\textwidth} 
    \centering
    \includegraphics[width=\textwidth]{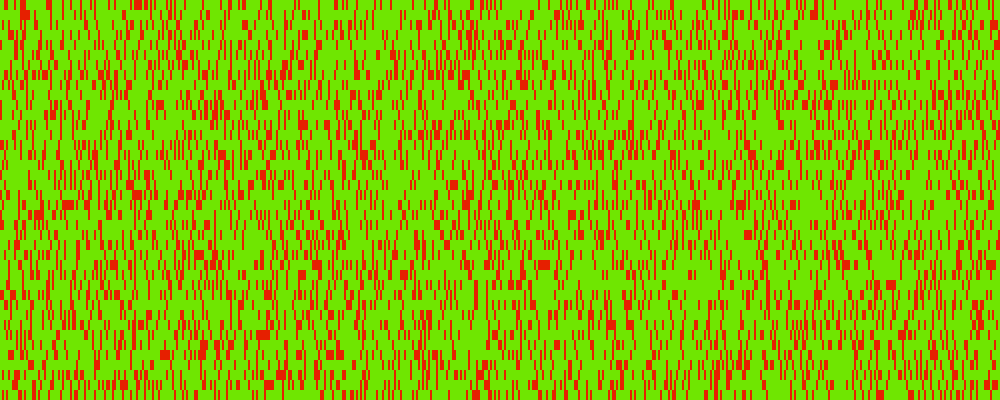}
    \caption{500 instances of moderate iid missingness}
    \label{fig.patterns.linear.iid.moderate}
  \end{subfigure}
  \hfill
  \begin{subfigure}[t]{0.48\textwidth}
    \centering
    \includegraphics[width=\textwidth]{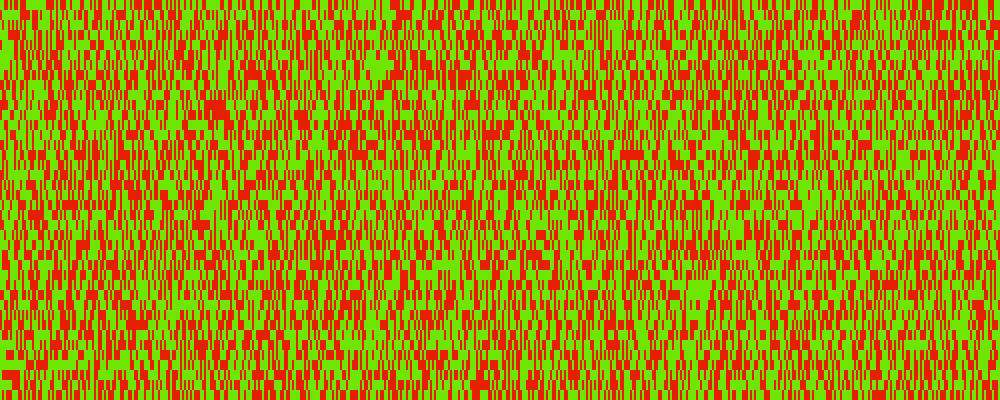}
    \caption{500 instances of heavy iid missingness}
    \label{fig.patterns.linear.iid.severe}
  \end{subfigure}
    \caption{Representative realizations of the \textbf{iid missingness patterns} used in \textbf{Experiment \ref{sec.exp1}}. We show the first $500$ training instances under moderate and heavy iid missingness. The horizontal axes show time; the vertical axes, the 40 features. Although visually less persistent than the block case, iid masking still induces the same partial-actuation failure mode: at each iteration, some coordinates are updated while others are not, even though all coordinates remain coupled through the shared residual. Moreover, with increasing missingness severity, missing feature blocks in consecutive samples emerge naturally.}  \label{fig.patterns.linear.iid}
\end{figure}

\section{Experiments} \label{section.experiments}

We consider three learning problems that allow us to illuminate the mechanics of our proposed adaptation and study its efficacy even in rather complex architectures. In line with our theoretical development, we consider scalar output problems. Moreover, in all experiments, our goal is to learn the mapping that operates on the \textit{complete} (that is, unmasked, with nothing missing) feature space.

\subsection{Linear regression under missing blocks and individual features} \label{sec.exp1}

In our simplest experiment, we consider a synthetic linear regression problem in $d = 40$ dimensions, with data generated according to $y_t = \langle \bm{w}^*, \bm{x}_t \rangle + \epsilon_t$, where $\bm{x}_t$ is Gaussian with block-structured covariance composed of 5 feature groups, within-block correlation 0.75, and diagonal jitter 0.25$I$, while $\epsilon_t$ is zero-mean Gaussian noise with standard deviation $0.02$. The ground-truth parameter $\bm{w}^*$ is group-sparse. At training time, the learner only observes the masked input $\tilde{\bm{x}}_t = \bm{M}_t \bm{x}_t$, yet the target remains the complete-space predictor $\bm{w}^*$. We optimize the quadratic loss $\ell_t(\bm{w}) = (y_t - \langle \bm{w}, \tilde{\bm{x}}_t\rangle)^2 /2$ for the masked learners, with the full-data baseline evaluated analogously on the unmasked input. We study both structured block missingness and iid feature missingness, each under moderate and heavy severity --the patterns of which we indicatively illustrate in Figures \ref{fig.patterns.linear.block} and \ref{fig.patterns.linear.iid}, and compare four update laws: full SGD, masked SGD, scalar observability-aware adaptation (implemented via (\ref{eq.linear_regression_adaptive_step_alpha})-(\ref{eq.linear_regression_adaptive_step}) with $\epsilon = 1$), and directional observability-aware adaptation (implemented via (\ref{eq.adaptive_directional_control_law}) and (\ref{eq.directional.B_recursive}), with $\bm{A}_t = \tilde{\bm{x}}_t \tilde{\bm{x}}_t^\mathsf{T}$ as discussed in Section \ref{sec.special.linear_regression} for the linear regression case, and using the values $\epsilon = 0.05$ and $\beta = 0.99$ in (\ref{eq.directional.B_recursive})). Although we demonstrate our methods here in an offline learning pipeline, the underlying failure mode is not restricted to temporally blocked missingness. Even under iid masking, partial actuation remains: at each step, some coordinates are updated while others are not, yet all coordinates remain coupled through the same scalar residual. In sufficiently shuffled offline datasets this effect may appear more averaged over time, especially in simple problems, but it is not removed. Structured block missingness simply makes the same mechanism more persistent and therefore more visible. We therefore study both iid and block patterns, with the latter also serving as a natural proxy for online or streaming settings, where shuffling is generally unavailable. In fact, the present linear regression experiment is sample-wise, i.e., with a batch size of 1, so the learner directly experiences the temporal ordering of the missingness process.

\begin{table}[hb!]
\scriptsize
\centering
\begin{tabular}{cc|cc|cc|c|
>{\columncolor[HTML]{dcbe62}}c 
>{\columncolor[HTML]{FFCCC9}}c| 
>{\columncolor[HTML]{cdffcc}}c }
\multicolumn{2}{c|}{\textbf{missingness}} & \multicolumn{2}{c|}{\textbf{severity}} & \multicolumn{2}{c|}{\textbf{learning rate}} &  \textbf{method} & \multicolumn{2}{c|}{\cellcolor[HTML]{FFFFFF}\textbf{misalignment}} & {\cellcolor[HTML]{FFFFFF} \textbf{alignment}} \\
block & iid & moderate & heavy & $\eta = 0.020$ & $\eta = 0.008$ &  \textbf{used} & {\cellcolor[HTML]{FFFFFF} severe $(-\infty,-0.04]$} & {\cellcolor[HTML]{FFFFFF} $(-\infty,0]$} & {\cellcolor[HTML]{FFFFFF} $(0,\infty)$} \\ \hline
 &  &  &  &  &  & \cellcolor[HTML]{FF7F0E} masked & 25.7\% & 64.5\% & 35.5\% \\
$\bullet$ &  & $\bullet$ &  & $\bullet$ &  & \cellcolor[HTML]{2CA02C} \textbf{scalar}  & 12.9\% & \textbf{44.4\%} & \textbf{55.6\%} \\
 &  &  &  &  &  & \cellcolor[HTML]{e25e5f} \textbf{directional}  & \phantom{0}\textbf{3.6\%} & 45.7\% & 54.3\% \\ \hline
 &  &  &  &  &  & \cellcolor[HTML]{FF7F0E}masked & 16.3\% & 50.5\% & 49.5\% \\
$\bullet$ &  & $\bullet$ &  &  & $\bullet$ & \cellcolor[HTML]{2CA02C}\textbf{scalar}  & 12.9\% & \textbf{44.2\%} & \textbf{55.8\%} \\
 &  &  &  &  &  & \cellcolor[HTML]{e25e5f} \textbf{directional}  & \phantom{0}\textbf{3.6\%} & 44.4\% & 55.6\% \\ \hline
 &  &  &  &  &  & \cellcolor[HTML]{FF7F0E}masked & 26.3\% & 59.3\% & 40.7\% \\
$\bullet$ &  &  & $\bullet$ & $\bullet$ &  & \cellcolor[HTML]{2CA02C}\textbf{scalar}  & 16.7\% & \textbf{44.8\%} & \textbf{55.2\%} \\
 &  &  &  &  &  & \cellcolor[HTML]{e25e5f} \textbf{directional}  & \phantom{0}\textbf{7.7\%} & 48.0\% & 52.0\% \\ \hline
 &  &  &  &  &  & \cellcolor[HTML]{FF7F0E}masked & 20.9\% & 51.5\% & 48.5\% \\
$\bullet$ &  &  & $\bullet$ &  & $\bullet$ & \cellcolor[HTML]{2CA02C}\textbf{scalar}  & 16.9\% & \textbf{44.6\%} & \textbf{55.4\%} \\
 &  &  &  &  &  & \cellcolor[HTML]{e25e5f} \textbf{directional}  & \phantom{0}\textbf{7.5\%} & 47.3\% & 52.7\% \\ \hline
 &  &  &  &  &  & \cellcolor[HTML]{FF7F0E}masked & 18.6\% & 63.3\% & 36.7\% \\
 & $\bullet$ & $\bullet$ &  & $\bullet$ &  & \cellcolor[HTML]{2CA02C}\textbf{scalar}  & 10.0\% & \textbf{47.4\%} & \textbf{52.6\%} \\
 &  &  &  &  &  & \cellcolor[HTML]{e25e5f} \textbf{directional}  & \phantom{0}\textbf{4.0\%} & 48.6\% & 51.4\% \\ \hline
 &  &  &  &  &  & \cellcolor[HTML]{FF7F0E}masked & 13.8\% & 55.1\% & 44.9\% \\
 & $\bullet$ & $\bullet$ &  &  & $\bullet$ & \cellcolor[HTML]{2CA02C}\textbf{scalar}  & \phantom{0}8.9\% & \textbf{43.8\%} & \textbf{56.2\%} \\
 &  &  &  &  &  & \cellcolor[HTML]{e25e5f} \textbf{directional}  & \phantom{0}\textbf{3.7\%} & 47.6\% & 52.4\% \\ \hline
 &  &  &  &  &  & \cellcolor[HTML]{FF7F0E}masked & 10.9\% & 58.3\% & 41.7\% \\
 & $\bullet$ &  & $\bullet$ & $\bullet$ &  & \cellcolor[HTML]{2CA02C}\textbf{scalar}  & \phantom{0}7.6\% & \textbf{48.9\%} & \textbf{51.1\%} \\
 &  &  &  &  &  & \cellcolor[HTML]{e25e5f} \textbf{directional}  & \phantom{0}\textbf{4.4\%} & 50.3\% & 49.7\% \\ \hline
 &  &  &  &  &  & \cellcolor[HTML]{FF7F0E}masked & \phantom{0}8.7\% & 52.3\% & 47.7\% \\
 & $\bullet$ &  & $\bullet$ &  & $\bullet$ & \cellcolor[HTML]{2CA02C}\textbf{scalar}  & \phantom{0}6.7\% & \textbf{45.8\%} & \textbf{54.2\%} \\
 &  &  &  &  &  & \cellcolor[HTML]{e25e5f} \textbf{directional}  & \phantom{0}\textbf{4.2\%} & 49.6\% & 50.4\%
\end{tabular}
\caption{As introduced in Section \ref{sec.metrics}, our \textit{coherence} metric aims to quantify how self-aligned a method's trajectory is. Here we use our equation for $\mathrm{Coherence}(t)$ over a window of length $W = 50$, throughout the 15000 learning steps for all our runs (15 repetitions per indicated case) for \textbf{Experiment \ref{sec.exp1}}: \textit{linear regression under missing blocks and individual features}. The percentages tabulated above indicate how much effort and time each method is spending while aligned or misaligned with itself and its recent past. We note that these values are averaged over the recent past. However, alongside Figures \ref{fig.linear.block} and \ref{fig.linear.iid}, they help illuminate how the masked method is more often inconsistent with itself, as features come and go in the data stream, in contrast to the directional method (more self-aligned yet eager to exploit opportunities to move parameters) and even more so the scalar method (maximally self-aligned in comparison, yet, overall slower, and occasionally exhibiting severe misalignment more often when compared to directional as it cannot attenuate individual parameters without stopping learning at large). We omit the fully observed baseline from the table, since its update directions after reaching its effective steady state are dominated by small residual fluctuations rather than any missingness-induced alignment behavior.}
\label{table.linear.coherence}
\end{table}

While this example does not include any function approximation complexity, it is meant to expose, as transparently as possible, the dynamical consequences of data missingness on the learning trajectories. This linear example is particularly illuminating because the apparent simplicity of the model can hide a critical failure mode. If a coordinate is masked (i.e., in a real-world pipeline: missing, corrupt, or currently untrusted), then, under strict masking, that specific coordinate receives no instantaneous gradient update. However, regardless of whether masking is iid or blocked, the residual still depends on the unobserved part of the signal, so the observed coordinates should not move freely: the true but unavailable value of a masked feature may have implied a different motion for both unobserved and observed parameters. In this sense, the coupling does not originate in the forward map (which is trivial in linear regression), but in the loss-induced error dynamics under partial actuation. Standard masked SGD ignores this distinction and continues to move the currently actuated coordinates as if the whole parameter vector were being updated consistently, which leads to incoherent motion, rough trajectories, large loss spikes, and increased parameter hallucinations. By contrast, our scalar controller restores global caution under reduced observability, while our directional controller further exploits anisotropy in the recent excitation history to attenuate unsupported motion without throttling all directions equally. The result is a setting in which the mechanics of the proposed methods, and the implications of our coherence and smoothness metrics, are unusually easy to see and study. This is precisely the partial-actuation mechanism isolated in Section \ref{sec.prototypic_scalar}: masking does not destabilize learning by moving hidden coordinates, but by allowing visible coordinates to move as if the hidden ones were also evolving consistently.

We report our numerical results in Figure \ref{fig.linear.block} (block missingness), Figure \ref{fig.linear.iid} (iid missingness), and Table \ref{table.linear.coherence}. We plot the parameter error $\| \bm{w}_t - \bm{w}^* \|$ directly for $15$ runs for each case (block/iid missingness, moderate/heavy missingness, larger/smaller learning rate $\eta \in \{0.020, 0.008\}$) and method, and the pointwise-in-time mean. For one trajectory of the $15$, we plot our $\mathrm{Smoothness}(t)$ metric and tabulate $\mathrm{Coherence}(t)$ evaluated over a rolling window with $W = 50$ steps and binned/classified accordingly to indicate the method's occupation time in a state of averaged \textit{severe misalignment} ($\mathrm{Coherence}(t) \leq -0.04$), \textit{cumulative misalignment} ($\mathrm{Coherence}(t) \leq 0$) and \textit{alignment} ($\mathrm{Coherence}(t) > 0$).

Overall, the message of Experiment 6.1 is unambiguous: observability-aware adaptation wins consistently over naive masking. Across both iid and block missingness, and across both larger and smaller learning rate settings, the scalar and directional methods achieve lower parameter error, tighter run-to-run spread, substantially smoother trajectories, and better coherence statistics than their masked counterparts. The larger/smaller learning rate comparison is included specifically to rule out the trivial explanation that these gains arise merely from slowing learning down. Although reducing the learning rate does improve the masked baseline somewhat, it does not reproduce observability-aware behavior. Table \ref{table.linear.coherence} reinforces this point quantitatively: both observability-aware methods reduce severe misalignment relative to masked SGD in every reported regime, while also spending more time in aligned updates. Our gains therefore do not come from ``slower SGD'' alone, but from regulating parameter motion according to the current and recent observability structure.

\begin{figure*}[t!]
    \includegraphics[width=1.0\textwidth]{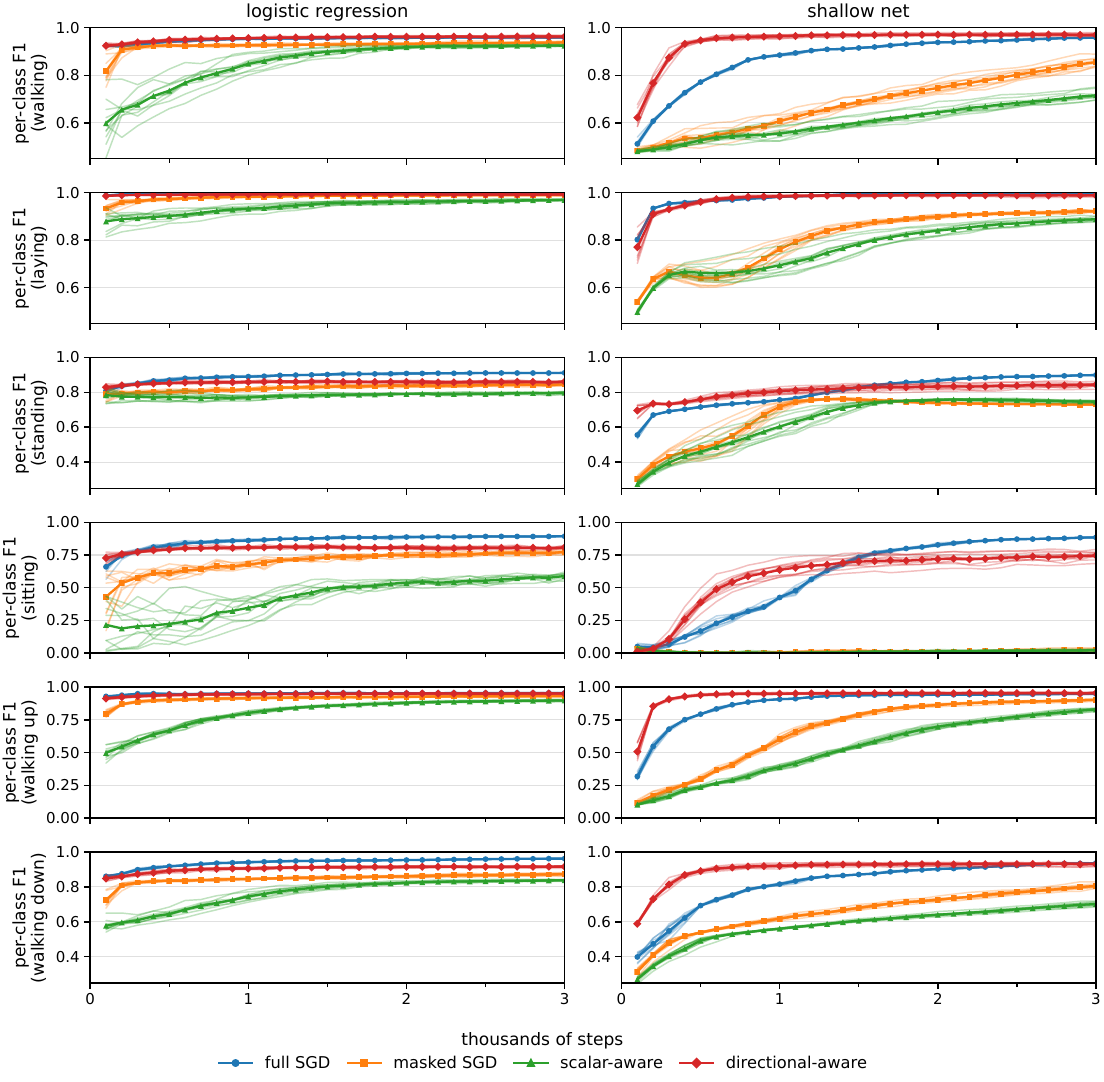}
    \caption{Numerical results from \textbf{Experiment \ref{sec.exp2}}: classification of human activities under multiple concurrently missing sensing modalities and missing sensor measurements. We train one-versus-rest classifiers based on logistic regression (left column) and a shallow neural network (right column), and report per-class F1 trajectories on the full unmasked test set for the six activity classes. In all plotted runs, training is performed under heavy coarse missingness: 4 out of 6 sensing groups are hidden in each training sample, and additional within-group feature drops are imposed on the remaining visible measurements. For each method, we show 10 seed trajectories faintly and their pointwise-in-time mean more heavily. Across both model classes, the directional observability-aware learner consistently improves over masked SGD, while the scalar-aware learner remains visibly more cautious and therefore slower, as expected from an isotropic controller. Table \ref{table.ex2} helps decongest nearby trajectories and highlight the directional method's strong gains.}
    \label{fig.exp2.trajectories}
\end{figure*}

\subsection{Learning to classify human activities under multiple concurrently missing sensing modalities and missing sensor measurements} \label{sec.exp2}

Progressing from the bare-bones context of Experiment \ref{sec.exp1}, which we leveraged to study in detail trajectory-level behaviors under masked SGD and our adaptation methodologies, we now consider a bona fide learning pipeline where multiple modalities and impactful learning outcomes can be more tangibly and organically conceptualized. We build our missing modalities and features context around the \textit{Human Activity Recognition} (HAR) dataset \citep{anguita2013public, human_activity_recognition_using_smartphones_240}, released under the CC BY 4.0 license, which includes 3D acceleration and gyroscopic signals in the time and frequency domain from a smartphone sensor, labels for the human subject's activity corresponding to the recorded features (i.e., {WALKING}, {WALKING UPSTAIRS}, {WALKING DOWNSTAIRS}, {SITTING}, {STANDING}, {LAYING}), and a hard split between training and evaluation data, with the latter obtained from different human subjects.

To induce heavy partial observability, we group the HAR features into six coarse sensing families and, for each training sample, hide exactly 4 out of the 6 groups while additionally retaining each feature within the remaining visible groups independently with probability 0.8. The realized mask is persistent per datapoint within a run, so repeated encounters with the same sample expose the same missingness pattern. We train six one-versus-rest classifiers and compare two model classes: logistic regression and a shallow neural network of architecture $561 \to 12 \to 1$. Full implementation details, including losses, parameter counts, and the numerically safeguarded full-order controller realization used here, are given in Appendix \ref{app.sec.exp2}.

\begin{figure}[ht!]
  \centering
    \includegraphics[width=1.0\textwidth]{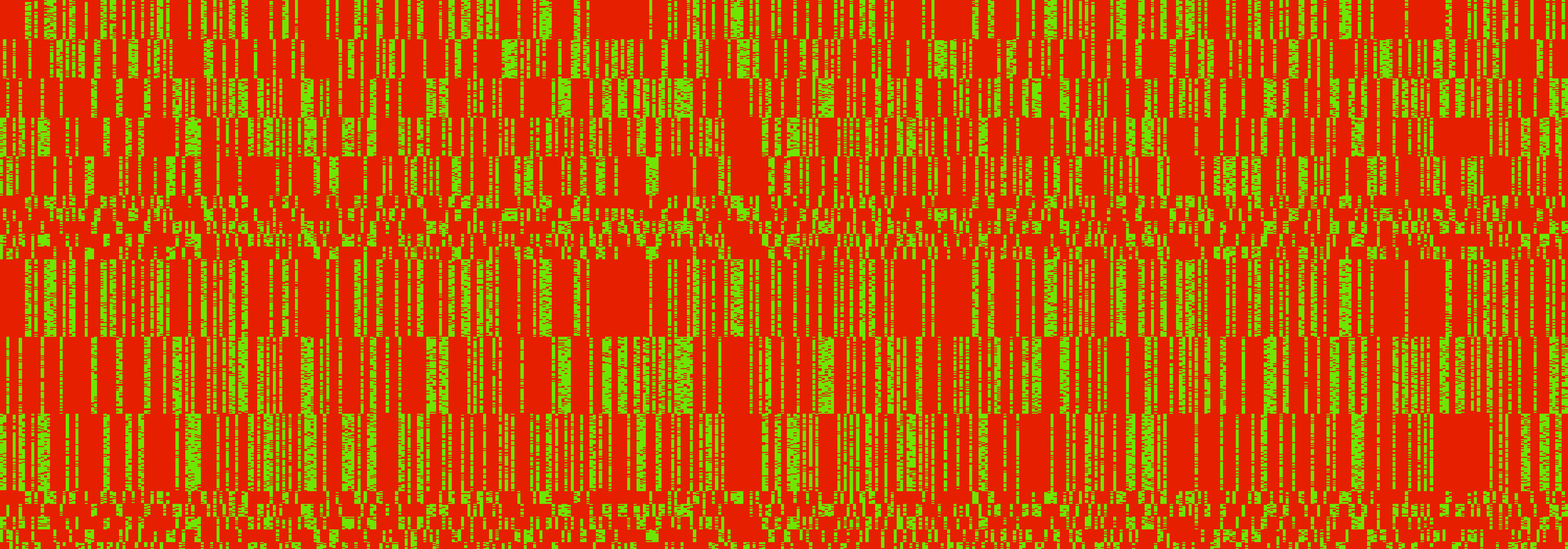}
    \caption{Representative realization of the heavy HAR missingness pattern used in \textbf{Experiment \ref{sec.exp2}}. The vertical axis is the canonical 561-feature HAR representation, whose ordering consists of contiguous signal-family ranges; under our coarse masking scheme, these fixed ranges are merged into six broader sensing families: body acceleration, gravity acceleration, body-acceleration jerk, body gyroscope, body-gyroscope jerk, and angle. For each training sample, the mask is first initialized by retaining each feature independently with probability 0.8, after which exactly 4 of the 6 coarse families are selected uniformly and fully hidden. Since each coarse family is itself a fixed union of several separated official signal-family ranges, the hidden regions appear as repeated structured horizontal bands rather than as a single contiguous slab; the additional fine-grained red/green texture comes from the within-visible Bernoulli thinning. The figure shows the first 500 training instances from one such fixed-mask realization.}\label{fig.patterns.har}
\end{figure}

In Figure \ref{fig.exp2.trajectories}, we plot per-class F1 trajectories on the full unmasked test set, evaluated every 100 optimization steps. Several messages are visible. First, our directional method consistently improves over naive masking in both model classes, and does so in the metric that matters here: full-view classification performance under heavy training-time modality loss. Second, the scalar controller remains useful but also reveals its basic limitation in this richer setting: because it throttles isotropically, it is often too conservative and therefore slower, especially in the shallow network case. We nevertheless view it as an important stepping stone, since it expresses the same control principle in its simplest form and helps bridge the path from the prototypic controller of Section \ref{sec.prototypic_scalar} to our directional machinery developed in Sections \ref{sec.directional}-\ref{sec.stability}. Third, we deliberately refrain from making model-selection claims from this experiment alone: logistic regression may simply be better matched to this particular HAR representation, or the chosen shallow network may be too small to fully exploit the data. Our goal here is not to optimize architecture design, but to stress the learning dynamics under heavy concurrent modality and feature loss. Finally, the shallow network results contain an especially interesting effect: in several classes and over substantial early-training intervals, the directional controller even outpaces the fully observed SGD baseline. We interpret this cautiously but positively: once embedded in a nonlinear model, the observability-aware geometry appears to do more than merely repair missing-data training, and can act as a beneficial conditioner of the optimization dynamics themselves.

In Table \ref{table.ex2}, we examine more closely the performance of the methods and the strong gains demonstrated by our directional observability-aware adaptation. 

\begin{table}[t] \centering
\scriptsize
\begin{tabular}{r c | rrrr|rrrr}
 &  & \multicolumn{4}{c|}{logistic regression} & \multicolumn{4}{c}{shallow net} \\
 &  & t = 100 & t = 200 & t = 500 & t = 1000 & t = 100 & t = 200 & t = 500 & t = 1000 \\ \hline
walking & full & 0.925 & 0.921 & 0.941 & 0.953 & 0.511 & 0.607 & 0.771 & 0.885 \\
 & masked & 0.817 & 0.907 & 0.925 & 0.926 & 0.484 & 0.496 & 0.537 & 0.607 \\
 & directional & 0.924 & 0.930 & 0.949 & 0.956 & 0.622 & 0.767 & 0.946 & 0.963 \\ \hline
laying & full & 0.987 & 0.992 & 0.994 & 0.995 & 0.802 & 0.933 & 0.964 & 0.982 \\
 & masked & 0.933 & 0.958 & 0.973 & 0.982 & 0.540 & 0.636 & 0.641 & 0.762 \\
 & directional & 0.986 & 0.989 & 0.991 & 0.992 & 0.770 & 0.910 & 0.961 & 0.984 \\ \hline
standing & full & 0.809 & 0.836 & 0.870 & 0.890 & 0.555 & 0.670 & 0.716 & 0.757 \\
 & masked & 0.789 & 0.793 & 0.801 & 0.817 & 0.301 & 0.385 & 0.478 & 0.718 \\
 & directional & 0.829 & 0.839 & 0.852 & 0.859 & 0.696 & 0.735 & 0.759 & 0.807 \\ \hline
sitting & full & 0.660 & 0.745 & 0.820 & 0.860 & 0.050 & 0.043 & 0.169 & 0.425 \\
 & masked & 0.430 & 0.535 & 0.609 & 0.679 & 0.036 & 0.015 & 0.004 & 0.005 \\
 & directional & 0.727 & 0.757 & 0.792 & 0.808 & 0.012 & 0.034 & 0.388 & 0.636 \\ \hline
walking up & full & 0.929 & 0.938 & 0.948 & 0.952 & 0.317 & 0.547 & 0.794 & 0.908 \\
 & masked & 0.796 & 0.869 & 0.902 & 0.917 & 0.118 & 0.171 & 0.298 & 0.601 \\
 & directional & 0.913 & 0.923 & 0.939 & 0.947 & 0.507 & 0.854 & 0.940 & 0.950 \\ \hline
walking down & full & 0.861 & 0.875 & 0.918 & 0.941 & 0.398 & 0.473 & 0.693 & 0.816 \\
 & masked & 0.724 & 0.808 & 0.836 & 0.845 & 0.314 & 0.408 & 0.539 & 0.618 \\
 & directional & 0.849 & 0.860 & 0.892 & 0.905 & 0.589 & 0.731 & 0.891 & 0.921
\end{tabular}
\caption{Mean per-class F1 scores at selected optimization steps for \textbf{Experiment \ref{sec.exp2}}, computed over the same runs used in Figure \ref{fig.exp2.trajectories} and evaluated on the \textbf{full unmasked test split}. Rows correspond to activity classes and methods; columns report checkpointed means for \textbf{logistic regression} and the \textbf{shallow neural network} at $t \in \{100, 200, 500, 1000\}$. We include this table to decongest the trajectory plots in Figure \ref{fig.exp2.trajectories}, especially on the \textbf{logistic regression} side. In that setting, our \textbf{directional observability-aware} method exceeds \textbf{masked SGD} in \textbf{all} 24 class-step entries, with an average gain of $0.0765$ F1 ($\approx 7.65$ percentage points), corresponding to about $9.4\%$ improvement in overall mean F1; it also exceeds \textbf{full SGD} in 7 of 24 entries, all at early or intermediate stages. The largest masked-to-directional improvement on the logistic-regression side occurs for \textbf{sitting} at $t = 100$, where the mean F1 rises from 0.430 to 0.727 (a gain of 0.297, or roughly $69\%$ relative improvement). Other especially strong logistic-regression gains appear for \textbf{walking down} at $t = 100$ ($0.724 \to 0.849$) and for \textbf{walking} at $t = 100$ ($0.817 \to 0.924$). Moreover, the cases in which \textbf{directional training surpasses even full SGD} are not isolated curiosities. On the logistic-regression side, they already appear in several early and intermediate checkpoints, most visibly for walking, and also for standing and sitting. In the \textbf{shallow neural network} case, the effect is markedly stronger: the directional method exceeds \textbf{masked SGD} in \textbf{23 of 24} class-step entries and exceeds \textbf{full SGD} in \textbf{19 of 24} entries. The lone shallow-net exception against masked occurs for \textbf{sitting} at the first checkpoint ($t = 100$), but this does not alter the qualitative message: masked training then effectively collapses for that class ($0.015, 0.004, 0.005$ at $t = 200, 500, 1000$), whereas the directional method recovers substantial performance ($0.034, 0.388, 0.636$). This suggests that, once embedded in a nonlinear model, the directional controller may do more than merely compensate for missingness: by reshaping motion according to the observed actuation geometry, it can also provide a beneficial conditioning effect on the optimization dynamics themselves.} \label{table.ex2}
\end{table}

\subsection{Learning to predict sea ice concentration with remote sensing under missing sensing modalities and missing sensor measurements} \label{sec.exp3}

In our search of a more complex model where intermittently or persistently missing sensing modalities arise naturally from real data acquisition and labeling constraints, we consider the AI4Arctic / ASIP Sea Ice Dataset, version 2 (ASID-v2) released by \citet{asip} under the CC BY 4.0 license, alongside the companion multisensor fusion study \citep{malmgren2020convolutional}. ASID-v2 contains Synthetic Aperture Radar (SAR) imagery obtained by the ESA Sentinel-1 satellite constellation, microwave radiometer measurements from the AMSR2 instrument onboard the JAXA GCOM-W satellite, and the corresponding ice charts manually produced by trained analysts at the Danish Meteorological Institute. As emphasized by the ASID-v2 contributors, sea ice charting is labor-intensive, depends on availability of multiple modalities of satellite data and human analysts, and is performed under operational constraints in which one ice chart is typically produced to correspond to one Sentinel-1 scene for the respective acquisition time. This makes the ASID-v2 dataset especially suitable for our context and learning objectives in this work: the supervision itself is expensive, the sensing modalities are heterogeneous and not equally reliable, and insisting that every labeled training instance must contain every sensing source in perfect form would discard a substantial amount of otherwise useful supervision. In such a setting, methods that can continue learning from partially observed examples are not merely algorithmically elegant; they directly relax a critical data requirement in a domain where obtaining labels and fully matched multisensor observations is costly.

Our learning environment and pipeline are focused on learning to predict neighborhood-wide sea ice concentration using SAR and AMSR data patches, and learning to do so under significant training data missingness. We describe our model and pipeline, as well as how they differ from the original pixelwise sea ice mapping objective of \citet{malmgren2020convolutional} in Appendix \ref{sec.app.exp3}. We sketch our convolutional model in Figure \ref{fig.ice.model}. As our model includes 368,753 trainable parameters, we build our controller machinery using a block-diagonal approximation of $\bm{A}_t$ to maintain $\bm{B}_t$ according to (\ref{eq.directional.B_recursive}), as we also describe in Appendix \ref{sec.app.exp3}.

\begin{table}[t]
\centering
\scriptsize
\begin{tabular}{cc|cccccc}
\renewcommand{\arraystretch}{1.5}
         &             & mean val.\ MAE        & c<m         & c best             & good region gain & bad excursion reduction & mean $\vert \Delta\text{MAE} \vert$ \\ 
rot.\ & window      & (f / m / c)             & (occ.) &  (occ.) & (c vs. m)          & (c vs. m)                 & (m / c)            \\ \hline
A        & 0-1k      & 14.08 / 22.80 / 8.65  & 100\%     & 90\%         & +40.0 pp         & +80.0 pp                & 12.25 / 2.52     \\
A        & 0-5k      & 9.97 / 12.13 / 7.59   & 90\%      & 82\%         & +34.0 pp         & +36.0 pp                & 4.47 / 1.39      \\
A        & 5k-20k  & 7.30 / 7.83 / 7.28    & 65\%      & 33\%         & +4.6 pp          & +7.9 pp                 & 1.09 / 0.75      \\ 
A        & 10k-20k & 6.96 / 7.70 / 7.27    & 59\%      & 20\%         & -4.0 pp          & +6.9 pp                 & 0.95 / 0.70      \\\hline
B        & 0-1k      & 11.19 / 21.03 / 6.10  & 100\%     & 70\%         & +10.0 pp         & +90.0 pp                & 11.62 / 2.09     \\
B        & 0-5k      & 6.68 / 10.99 / 8.20   & 58\%      & 34\%         & +10.0 pp         & +8.0 pp                 & 4.46 / 3.87      \\
B        & 5k-20k  & 4.09 / 6.37 / 5.14    & 68\%      & 17\%         & +14.6 pp         & +5.3 pp                 & 2.28 / 1.85      \\ 
B        & 10k-20k & 3.71 / 5.61 / 4.81    & 65\%      & 12\%         & +16.8 pp         & +2.0 pp                 & 1.88 / 1.57      \\\hline
C        & 0-1k      & 19.12 / 23.36 / 15.96 & 90\%      & 70\%         & +0.0 pp          & +30.0 pp                & 6.00 / 1.68      \\
C        & 0-5k      & 15.03 / 15.63 / 12.10 & 84\%      & 70\%         & +28.0 pp         & +22.0 pp                & 3.05 / 1.70      \\
C        & 5k-20k  & 10.77 / 11.71 / 10.57 & 68\%      & 44\%         & +19.2 pp         & +7.9 pp                 & 1.61 / 1.17  \\
C        & 10k-20k & 10.39 / 11.19 / 10.40 & 62\%      & 38\%         & +16.8 pp         & +5.0 pp                 & 1.40 / 1.19   
\end{tabular}
\caption{Validation occupation-time and reliability statistics for \textbf{Experiment \ref{sec.exp3}}, computed from the clean full-feature validation MAE trajectories for rotations A/B/C that we plotted in Figure \ref{fig.sea_ice_trajectories} over the training step windows 0--1000, 0--5000, 5000--20000, and 10000--20000. In \textbf{f/m/c}, we report the window-averaged validation MAE of \textbf{f} (full SGD), \textbf{m} (masked SGD), and \textbf{c} (our controlled learner); \textbf{c<m occ.} is the fraction of checkpoints at which controlled outperforms masked on validation MAE, while \textbf{c best occ.} is the fraction of checkpoints at which controlled is best among $\{$full, masked, controlled$\}$; \textbf{good region gain} (c vs.\ m) is the controlled-minus-masked difference, in percentage points, in occupation time spent below a rotation-specific good-region threshold; \textbf{bad-excursion reduction} (c vs.\ m) is the masked-minus-controlled difference, again in percentage points, in occupation time spent above a rotation-specific bad-excursion threshold. These thresholds are defined within each rotation from pooled validation MAE values across methods: \textit{good} = lower quartile and \textit{bad} = 90th percentile, yielding (A: good $\leq 6.81$, bad $\geq 9.30$; B: good $\leq 3.70$, bad $\geq 10.24$; C: good $\leq 10.08$, bad $\geq 14.62$). Finally, \textbf{mean} $\vert \bm{\Delta}\text{\textbf{MAE}} \vert$ (m/c) is the mean absolute checkpoint-to-checkpoint change in validation MAE for masked and controlled, with lower values indicating calmer and more reliable validation behavior. Since the rotations have materially different target compositions, absolute MAEs should be compared \textit{within}, not across, rotations.}
\label{table.ex3}
\end{table}

Figure \ref{fig.sea_ice_trajectories} shows the validation sea ice concentration mean absolute error trajectories for the full, masked, and controlled learners across the three split rotations described in Appendix \ref{sec.app.exp3}. Even before turning to explicit occupation-time summaries, a clear qualitative pattern is visible: the uncontrolled masked learner is the most excursion-prone, whereas the controlled learner enters useful validation regimes earlier, remains there more stably, and exhibits fewer severe rebounds. This behavior is consistent across all three rotations despite their materially different target compositions. Moreover, in several early portions of training, the controlled learner not only behaves more coherently than masked SGD, but also occasionally descends faster than the idealized full-information baseline, echoing the beneficial conditioning effect already observed in Experiment \ref{sec.exp2}.

In this experiment, we deliberately do not treat masked training loss as an indicator of learning quality. Under heavy modality-level and within-modality missingness, that loss is evaluated through a censored observation pattern and therefore need not reveal whether the learner is recovering the full underlying mapping. We instead judge the methods on the metric that matters for that question: MAE on the clean, fully present validation set. From that perspective, the primary and expected comparison is against masked SGD, and Table \ref{table.ex3} shows that the controlled learner wins strongly in exactly that sense: in the strongest windows it outperforms masked SGD for most validation checkpoints, spends substantially more time in good regimes, spends substantially less time in bad-excursion regimes, and exhibits markedly calmer validation dynamics. Our conclusion is therefore not that the controlled learner must always beat the idealized full-information SGD run, but that under substantial modality-level and within-modality missingness it continues to learn the underlying mapping credibly and much more reliably than the uncontrolled masked alternative. When it also rivals or exceeds the fully observed reference over parts of training, we view that as a more remarkable, beyond-mandate effect, suggesting that the observability-aware geometry may be doing beneficial optimization work in addition to stabilizing missing-data learning.

\begin{figure}
    \centering
    \includegraphics[width=0.99\linewidth]{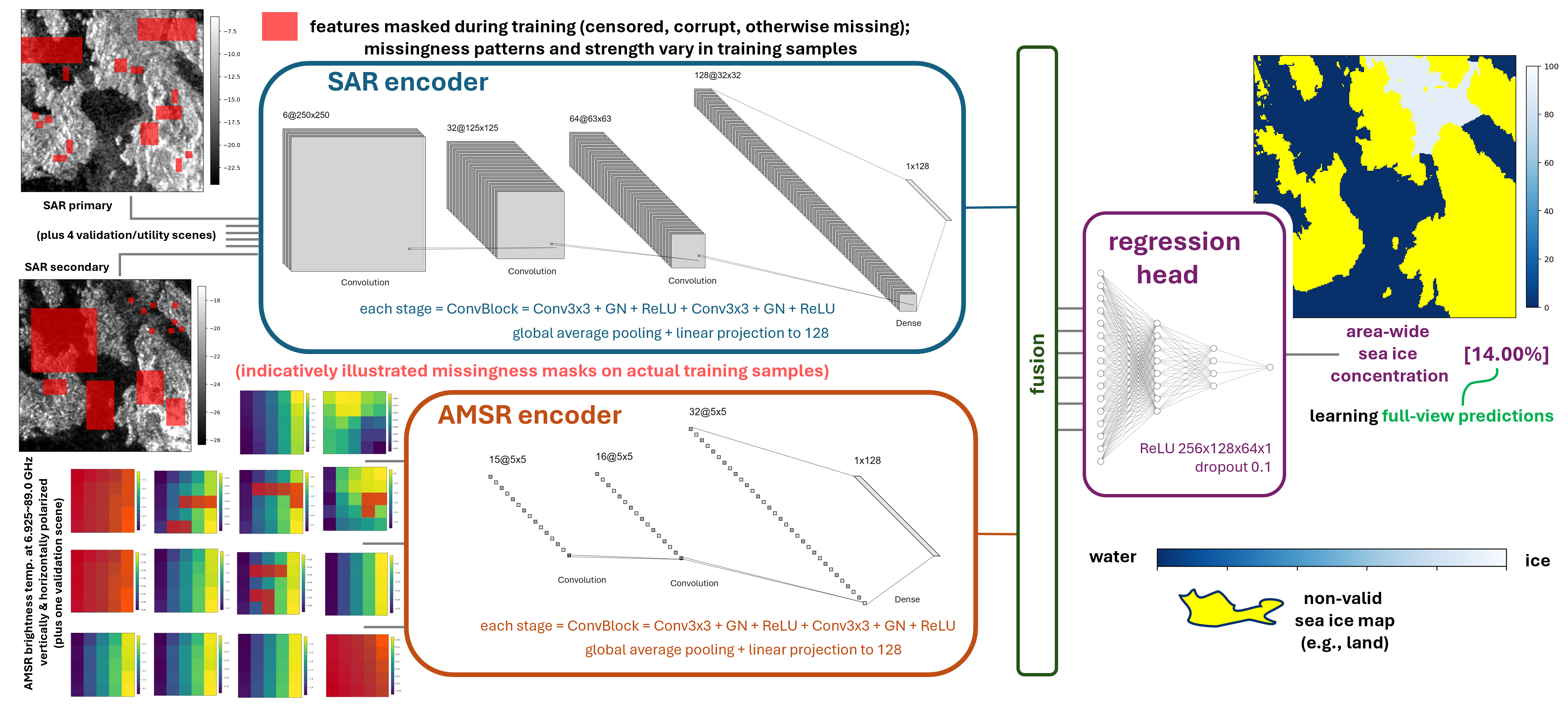}
    \caption{Two-branch convolutional regressor model employed in \textbf{Experiment \ref{sec.exp3}}: \textit{learning to predict sea ice concentration with remote sensing under missing sensing modalities and missing sensor measurements}.}
    \label{fig.ice.model}
\end{figure}

\begin{figure*}[b!]
    \includegraphics[width=\textwidth]{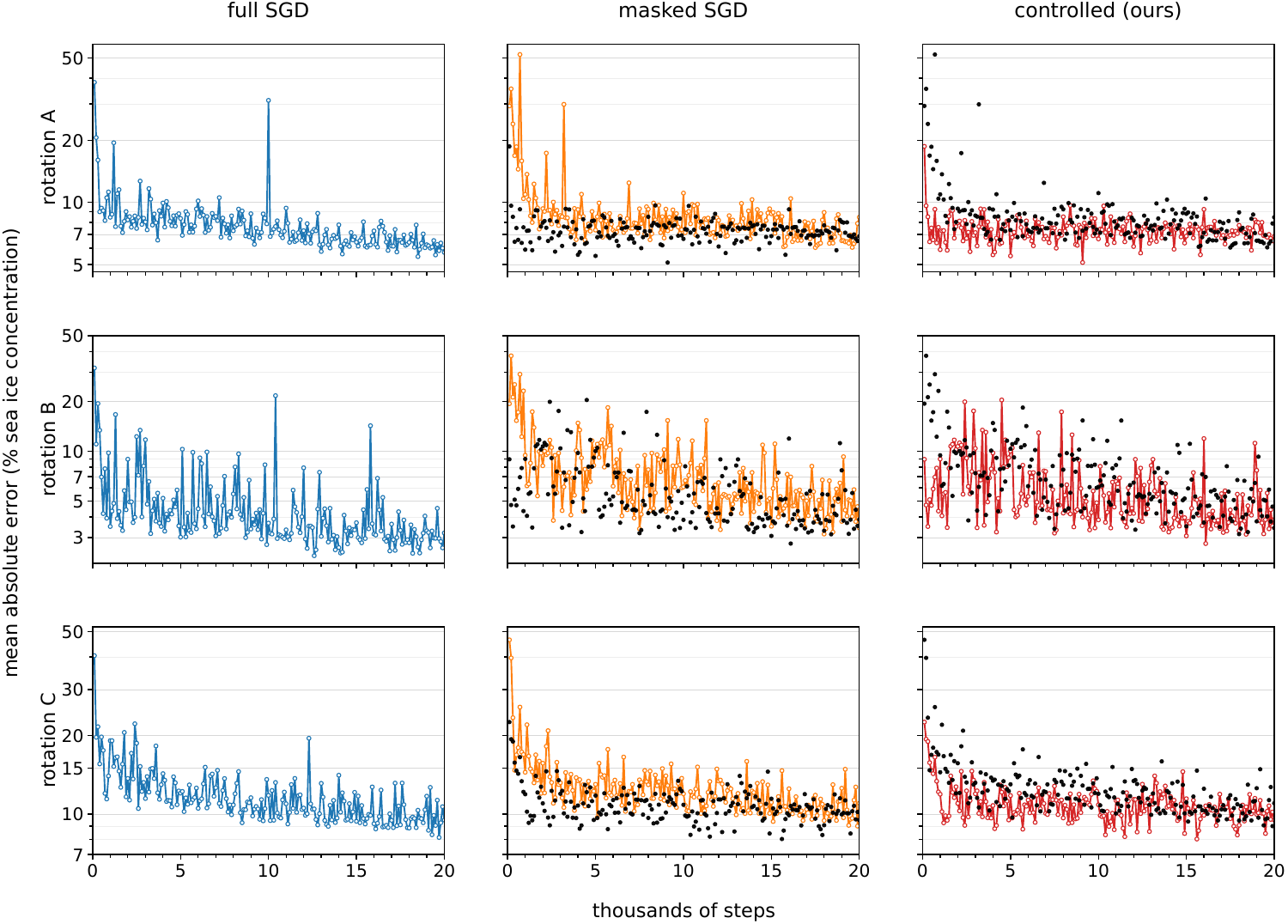}
    \caption{Validation trajectories from \textbf{Experiment \ref{sec.exp3}}: \textit{learning to predict sea ice concentration with remote sensing under missing sensing modalities and missing sensor measurements}, shown across three split rotations A/B/C for full SGD, masked SGD, and the controlled learner (ours). The controlled learner is consistently less excursion-prone than masked SGD, tends to enter useful error regimes earlier, and often remains there with fewer severe rebounds; Table \ref{table.ex3} quantifies this pattern through validation occupation-time and reliability statistics. \textbf{Note:} In the masked SGD column, the black dots correspond to the data points of the controlled run on the same rotation to facilitate an easier comparison. Conversely, in the controlled column, the black dots correspond to the data points of the masked run for the same rotation.}
    \label{fig.sea_ice_trajectories}
\end{figure*}

\section{Related work}

Existing approaches to (supervised) learning with missing data fall into five main categories, each of which we briefly contrast with our directional observability-aware framework in Sections \ref{sec.rw.imputation}--\ref{sec.rw.partial}. In summary, prior related work either reconstructs missing data, does not focus on the dynamical consequences of masking (i.e., from a trajectory stability viewpoint), or optimizes different robustness notions. By recasting learning as a closed-loop adaptive system whose controllability varies with the data mask, our directional observability-aware adaptation fills a critical gap: it provides provable trajectory stability and demonstrable coherence without imputation, distributional assumptions, or external regularization, which span the failure modes that remain unaddressed by existing techniques. In addition, we relate our developments to foundational control and estimation concepts in \ref{sec.rw.control}--\ref{sec.rw.estimation}.

\subsection{Imputation / EM / matrix completion ($\sim$ guess and fill the blanks, then learn)} \label{sec.rw.imputation}
These methods assume an underlying latent distribution and explicitly reconstruct missing entries (via generative models, expectation-maximization, or low-rank completion) before training \citep{emmanuel2021survey, bertsimas2018predictive, prakash2024benchmarking, rizvi2025analysis, racz2025comparison, duremasker}. While probabilistically elegant and sometimes statistically optimal under correct modeling assumptions, they focus exclusively on likelihood maximization and never examine the \textit{dynamic} evolution and stability of the parameter vector during intermittent missingness. In contrast, our approach never hallucinates or imputes missing values. Instead, it regulates parameter motion in an evidence-driven way, treating missingness as a structured loss of actuation authority rather than a statistical inference problem. 

It is also important to point out that in the missingness regimes that we have been using to illustrate our approach, with a very strong majority of features and the information they carry completely missing, even for prolonged blocks, we operate in a categorically different regime from a relatively mild, ``fill-a-few-blanks'' setting for which imputation methods are designed. In the consistently scarce data that may remain available, even from one iteration or batch to the next, there may be insufficient reliable information left to perform any meaningful imputation, without resorting to subjective guessing or outright hallucination, either of which can inject additional noise and exacerbate the very parameter drift and coherence loss we seek to prevent. In the context of our experiments, missing 4 out of 6 semantically distinct modalities plus intra-modality drops in Experiment \ref{sec.exp2}, or missing multiple entire SAR channels in Experiment \ref{sec.exp3} cannot be credibly remedied by imagining wearable signals or drawing hypothetical remote sensing patterns, respectively, just to learn a model of inevitably dubious characteristics and provenance. Our directional observability-aware controller sidesteps this issue entirely by never attempting to reconstruct anything missing, instead directly regulating parameter motion according to the empirically observed excitation.

\subsection{Masked losses / partial supervision (standard deep learning $\sim$ keep the blanks and learn)}
The standard practice in deep learning is to compute gradients solely on observed components (zeroing contributions from masked inputs or outputs), while leaving the optimizer unchanged \citep{ipsen2022deal, casella2024transformers, niu2025self}. This strategy is simple and scalable, yet offers zero safeguards against parameter drift in unobserved and empirically unsupported subspaces; it rather assumes that SGD will self-regularize. Our method \textit{explicitly} constrains, via a control-theoretic mechanism, the state evolution of the parameters, providing the missing closed-loop mechanism that attenuates incoherent motion.

Intentional masking can be beneficial in self-supervised pretraining, as demonstrated by masked autoencoders \citep{he2022masked}, where random patch masking can lead to learning robust representations. Nevertheless, this is fundamentally different from organically missing data during training, where missingness induces oscillations, stalling, or parameter hallucination. Our directional observability-aware controller directly mitigates these dynamical failure modes, in contrast to engineered masking designed and empirically observed to improve representation quality.

\subsection{Bayesian marginalization / uncertainty-aware models}
Bayesian and ensemble techniques maintain explicit posterior uncertainty over parameters and marginalize over missing data \citep{gal2016dropout, chen2025residual, zeng2026uncertainty, kendall2017uncertainties}. Although conceptually clean, they remain computationally heavy, model-dependent, and largely open-loop: posteriors can still drift arbitrarily when excitation weakens or entirely disappears for prolonged periods. Our framework delivers distribution-free boundedness guarantees on the parameter error trajectory without requiring a prior, offering a lightweight control-theoretic alternative that enforces learning coherence and stability.

\subsection{Robust optimization / DRO}
These approaches optimize for the worst-case data distribution within an ambiguity set, yielding strong adversarial guarantees \citep{kuhn2025distributionally, mohajerin2018data, ghosal2023distributionally}. However, they typically target distributional shifts or corrupted samples, not the structured loss of controllability induced by missing inputs or outputs. In contrast, we model missingness explicitly as a directional reduction in actuation authority, thereby protecting a fundamentally different axis of robustness -- one that standard DRO formulations do not capture.

\subsection{Online learning with partial feedback (bandits, etc.)} \label{sec.rw.partial}
Bandit-style and partial-observation frameworks (including contextual bandits and online convex optimization with bandit feedback) provide regret bounds when only partial loss information is available \citep{shao2026partial, alon2015online, raman2024multiclass, lattimore2020bandit, auer2002finite}. These results are elegant and long-standing, yet, they focus on cumulative performance and external regret, not on the internal stability or boundedness of the parameter vector itself from one iteration to the next. Our contribution complements this line of work by ensuring the learner's state remains sane under partial observability, a dynamical guarantee that remains distinct from asymptotic average-case analyses.

\subsection{Adaptive control and Lyapunov/ISS theory} \label{sec.rw.control} The stability tools we employ draw from classical nonlinear control theory, where the input-to-state stability (ISS) framework and the associated Lyapunov-based analyses are ubiquitous \citep{khalil2002nonlinear, haddad2008nonlinear}. More specifically, dissipation-mismatch inequalities and boundedness arguments that are routinely invoked in control-theoretic analyses form the foundations for our windowed residual-to-state bound (Proposition \ref{proposition.ISS}) and for the eigenvalue properties of our observability Gramian. Our key contribution is repurposing such control theoretic tools and workflows and aligning them with the specific setting of learning under partial observability. We close the loop around a data-driven observability metric that evolves online, yielding provable trajectory bounds for gradient-based learning under the stated finite-dimensional or active-subspace excitation and bounded-residual-mismatch conditions when inputs are intermittently missing.

\subsection{Parameter estimation under intermittent measurements in networked control systems} \label{sec.rw.estimation} A superficially related line of work studies parameter estimation and state observation under intermittent or packet-dropped measurements in networked control systems, where Gramian-type matrices are used to update observer gains against a known system model \citep{sinopoli2004kalman, schenato2007foundations, gupta2009data}. Although our observability matrix $\bm{B}_t$ shares a structural resemblance with these Gramians, their roles differ fundamentally. In adaptive observers, the system model is known a priori and the innovation term remains cleanly separated from the observability loss. In our framework, no such separation exists: the gradient $\nabla_{\bm{w}} \ell_t$ itself is an uncertain measurement whose conditional structure is induced by the same masked Jacobian channel that drives $\bm{B}_t$. Consequently, the inverse $\bm{B}_t^{-1}$ performs a dual role: simultaneously reshaping the geometry of the admissible parameter motion \textit{and} conditioning gradient information relative to recent excitation. This mechanism in our approach, which has no direct analog in the observer literature, is what enables the residual-to-state guarantees of Proposition \ref{proposition.ISS} in a gradient-based learning setting and distinguishes our contribution from classical parameter estimation under measurement dropout.

\section{Conclusion} \label{sec.conclusion}

Even though contemporary accomplishments in machine learning may suggest that models, \textit{somehow}, learn, and learn well, the assumptions behind such perspectives can be an unattainable luxury in many workflows of practical significance which do not enjoy data consistency, repeated exposure to informative features, or forgiving data geometry. We position our work and contributions not as a performance optimization but at the workflow-enabling level, as our directional observability-aware learning adaptation can enable learning  in pathologically sparse domains, which are abundant in the real world. Data may arrive intermittently, observations may be censored, delayed, or missing by design, and even interactions with real-world systems that yield such disappointing data may be expensive and otherwise scarce. Infrastructure, cyber-physical security, high-dimensional multimodal problems, financial or otherwise sensitive contexts routinely give rise to such complications. Our learning approach, by leveraging concepts and machinery from dynamical and control systems, provides the necessary evidence-driven adaptation that can render learning in such environments \textit{coherent}, \textit{dynamically stable}, \textit{economical} (in the sense of being able to credibly leverage partially available data streams without requiring persistent perfection and completeness) and ultimately \textit{viable}.

The explicit constants in Proposition \ref{proposition.ISS} are intended as analytical certificates rather than literal plug-in tuning prescriptions. Their role is to close the correctness chain and expose the expected qualitative dependencies: stronger effective excitation improves the guaranteed contraction margin, slower variation of the observability metric strengthens the transfer from raw to effective excitation, and overly aggressive step sizes erode the finite-window coercivity margin. As is typical of Lyapunov and ISS analyses, repeated worst-case norm and Young-inequality bounds make the resulting numerical values conservative. Practical choices of $\eta$, $\beta$, and $\epsilon$ should therefore be informed by these directional trends and validated empirically, rather than selected by literal substitution into the worst-case constants.

Although the present study focuses on supervised learning, the underlying mechanism operates at the level of gradient modulation under partial observability and is therefore not inherently tied to supervised objectives. This opens the door to extensions in streaming unsupervised and self-supervised settings, where temporally correlated missingness and local observability limitations are often even more central. Our individual technical results and the approach in its entirety can mechanistically extend to the multivariate output case, beyond the scalar case that we considered here for simplicity and brevity. Given the broader evolution of the learning frontier and the results that we observed in our experiments, we believe that an interesting and natural future direction pertains to the systematic empirical integration and evaluation of our directionally-aware adaptation in impactful model architectures, such as autoencoders and transformers, the construction of which in the real world is inherently driven by overall imperfect data streams.

\section*{Acknowledgements}

This work was funded by the U.S. Department of Energy through contract DE-AC07-05ID14517 and, in part, by the U.S. Department of Homeland Security Science and Technology Directorate and the ADAC-ARCTIC Center of Excellence managed by the University of Alaska Anchorage. Neither the U.S. Government nor any agency thereof, nor Contractor, nor any of their employees, make any warranty, express or implied, or assume any legal liability or responsibility for the accuracy, completeness, or usefulness of any information, apparatus, product, or process disclosed. The views and opinions of authors expressed herein do not necessarily state or reflect those of the U.S. Government or any agency or Contractor thereof.

\section*{Impact Statement}

We show theoretically and empirically how observability-aware, control-theoretic adaptation can promote coherent and dynamically stable learning under severe and intermittent data missingness.

\bibliography{references}

@article{emmanuel2021survey,
  title={A survey on missing data in machine learning},
  author={Emmanuel, Tlamelo and Maupong, Thabiso and Mpoeleng, Dimane and Semong, Thabo and Mphago, Banyatsang and Tabona, Oteng},
  journal={Journal of Big Data},
  volume={8},
  number={1},
  pages={140},
  year={2021},
  publisher={Springer}
}

@article{bertsimas2018predictive,
  title={From predictive methods to missing data imputation: an optimization approach},
  author={Bertsimas, Dimitris and Pawlowski, Colin and Zhuo, Ying Daisy},
  journal={Journal of Machine Learning Research},
  volume={18},
  number={196},
  pages={1--39},
  year={2018}
}

@article{prakash2024benchmarking,
  title={Benchmarking machine learning missing data imputation methods in large-scale mental health survey databases},
  author={Prakash, Preethi and Street, Kelly and Narayanan, Shrikanth and Fernandez, Bridget A and Shen, Yufeng and Shu, Chang},
  journal={Artificial Intelligence in Health},
  volume={2},
  number={1},
  pages={81--92},
  year={2024},
  publisher={AccScience Publishing}
}

@article{rizvi2025analysis,
  title={Analysis of machine learning based imputation of missing data},
  author={Rizvi, Syed Tahir Hussain and Latif, Muhammad Yasir and Amin, Muhammad Saad and Telmoudi, Achraf Jabeur and Shah, Nasir Ali},
  journal={Cybernetics and Systems},
  volume={56},
  number={6},
  pages={818--832},
  year={2025},
  publisher={Taylor \& Francis}
}

@article{racz2025comparison,
  title={Comparison of missing value imputation tools for machine learning models based on product development cases studies},
  author={R{\'a}cz, Anita and Gere, Attila},
  journal={LWT - Food Science and Technology},
  volume={221},
  pages={117585},
  year={2025},
  publisher={Elsevier}
}

@article{casella2024transformers,
  title={Transformers deep learning models for missing data imputation: an application of the {ReMasker} model on a psychometric scale},
  author={Casella, Monica and Milano, Nicola and Dolce, Pasquale and Marocco, Davide},
  journal={Frontiers in Psychology},
  volume={15},
  pages={1449272},
  year={2024},
  publisher={Frontiers Media SA}
}

@article{kuhn2025distributionally,
  title={Distributionally robust optimization},
  author={Kuhn, Daniel and Shafiee, Soroosh and Wiesemann, Wolfram},
  journal={Acta Numerica},
  volume={34},
  pages={579--804},
  year={2025},
  publisher={Cambridge University Press}
}

@article{shao2026partial,
  title={Partial Feedback Online Learning},
  author={Shao, Shihao and Fang, Cong and Lin, Zhouchen and Tao, Dacheng},
  journal={arXiv preprint arXiv:2601.21462},
  year={2026}
}

@inproceedings{alon2015online,
  title={Online learning with feedback graphs: Beyond bandits},
  author={Alon, Noga and Cesa-Bianchi, Nicolo and Dekel, Ofer and Koren, Tomer},
  booktitle={Conference on Learning Theory},
  pages={23--35},
  year={2015},
  organization={PMLR}
}

@inproceedings{raman2024multiclass,
  title={Multiclass online learnability under bandit feedback},
  author={Raman, Ananth and Raman, Vinod and Subedi, Unique and Mehalel, Idan and Tewari, Ambuj},
  booktitle={International Conference on Algorithmic Learning Theory},
  pages={997--1012},
  year={2024},
  organization={PMLR}
}

@book{lattimore2020bandit,
  title={Bandit algorithms},
  author={Lattimore, Tor and Szepesv{\'a}ri, Csaba},
  year={2020},
  publisher={Cambridge University Press}
}

@article{auer2002finite,
  title={Finite-time analysis of the multiarmed bandit problem},
  author={Auer, Peter and Cesa-Bianchi, Nicolo and Fischer, Paul},
  journal={Machine learning},
  volume={47},
  number={2},
  pages={235--256},
  year={2002},
  publisher={Springer}
}

@article{mohajerin2018data,
  title={Data-driven distributionally robust optimization using the Wasserstein metric: Performance guarantees and tractable reformulations},
  author={Mohajerin Esfahani, Peyman and Kuhn, Daniel},
  journal={Mathematical Programming},
  volume={171},
  number={1},
  pages={115--166},
  year={2018},
  publisher={Springer}
}

@inproceedings{ghosal2023distributionally,
  title={Distributionally robust optimization with probabilistic group},
  author={Ghosal, Soumya Suvra and Li, Yixuan},
  booktitle={Proceedings of the AAAI Conference on Artificial Intelligence},
  volume={37},
  number={10},
  pages={11809--11817},
  year={2023}
}

@inproceedings{gal2016dropout,
  title={Dropout as a Bayesian approximation: Representing model uncertainty in deep learning},
  author={Gal, Yarin and Ghahramani, Zoubin},
  booktitle={International Conference on Machine Learning},
  pages={1050--1059},
  year={2016},
  organization={PMLR}
}

@article{chen2025residual,
  title={Residual bayesian attention networks for uncertainty quantification in regression tasks},
  author={Chen, Youliang and Guan, Wencan and Azzam, Rafig},
  journal={Scientific Reports},
  volume={15},
  number={1},
  pages={38279},
  year={2025},
  publisher={Nature Publishing Group UK London}
}

@article{zeng2026uncertainty,
  title={Uncertainty-Aware Bayesian Time Series framework for probabilistic imputation},
  author={Zeng, Zi-Xuan and Tan, Yan-Ke and Deng, E and Ni, Yi-Qing and Zhang, Qi-Lin and Yang, Bin},
  journal={Advanced Engineering Informatics},
  volume={70},
  pages={104153},
  year={2026},
  publisher={Elsevier}
}

@article{kendall2017uncertainties,
  title={What uncertainties do we need in {B}ayesian deep learning for computer vision?},
  author={Kendall, Alex and Gal, Yarin},
  journal={Advances in Neural Information Processing Systems},
  volume={30},
  year={2017}
}

@inproceedings{he2022masked,
  title={Masked autoencoders are scalable vision learners},
  author={He, Kaiming and Chen, Xinlei and Xie, Saining and Li, Yanghao and Doll{\'a}r, Piotr and Girshick, Ross},
  booktitle={Proceedings of the IEEE/CVF Conference on Computer Vision and Pattern Recognition},
  pages={16000--16009},
  year={2022}
}

@inproceedings{ipsen2022deal,
  title={How to deal with missing data in supervised deep learning?},
  author={Ipsen, Niels Bruun and Mattei, Pierre-Alexandre and Frellsen, Jes},
  booktitle={10th International Conference on Learning Representations},
  year={2022}
}

@article{niu2025self,
  title={A self-supervised masked spatial distribution learning method for predicting machinery remaining useful life with missing data reconstruction},
  author={Niu, Ben and Xiao, Yi and Xiao, Qinge and Liu, Yang and Peng, Tao and Yang, Zhile},
  journal={Advanced Engineering Informatics},
  volume={64},
  pages={102938},
  year={2025},
  publisher={Elsevier}
}

@book{haddad2008nonlinear,
  title={Nonlinear dynamical systems and control: a Lyapunov-based approach},
  author={Haddad, Wassim M and Chellaboina, VijaySekhar},
  year={2008},
  publisher={Princeton University Press}
}

@book{khalil2002nonlinear,
  title={Nonlinear systems},
  author={Khalil, Hassan K},
  edition={3},
  year={2002},
  publisher={Prentice Hall},
  address = {Upper Saddle River, NJ}
}

@inproceedings{duremasker,
  title={ReMasker: Imputing Tabular Data with Masked Autoencoding},
  author={Du, Tianyu and Melis, Luca and Wang, Ting},
  booktitle={12th International Conference on Learning Representations},
  year={2024}
}

@article{sinopoli2004kalman,
  title={Kalman filtering with intermittent observations},
  author={Sinopoli, Bruno and Schenato, Luca and Franceschetti, Massimo and Poolla, Kameshwar and Jordan, Michael I and Sastry, Shankar S},
  journal={IEEE Transactions on Automatic Control},
  volume={49},
  number={9},
  pages={1453--1464},
  year={2004},
  publisher={IEEE}
}

@article{schenato2007foundations,
  title={Foundations of control and estimation over lossy networks},
  author={Schenato, Luca and Sinopoli, Bruno and Franceschetti, Massimo and Poolla, Kameshwar and Sastry, S Shankar},
  journal={Proceedings of the IEEE},
  volume={95},
  number={1},
  pages={163--187},
  year={2007},
  publisher={IEEE}
}

@article{gupta2009data,
  title={Data transmission over networks for estimation and control},
  author={Gupta, Vijay and Dana, Amir F and Hespanha, Joao P and Murray, Richard M and Hassibi, Babak},
  journal={IEEE Transactions on Automatic Control},
  volume={54},
  number={8},
  pages={1807--1819},
  year={2009},
  publisher={IEEE}
}

@article{malmgren2020convolutional,
  title={A convolutional neural network architecture for {Sentinel}-1 and {AMSR2} data fusion},
  author={Malmgren-Hansen, David and Pedersen, Leif Toudal and Nielsen, Allan Aasbjerg and Kreiner, Matilde Brandt and Saldo, Roberto and Skriver, Henning and Lavelle, John and Buus-Hinkler, J{\o}rgen and Krane, Klaus Harnvig},
  journal={IEEE Transactions on Geoscience and Remote Sensing},
  volume={59},
  number={3},
  pages={1890--1902},
  year={2020},
  publisher={IEEE}
}

@misc{human_activity_recognition_using_smartphones_240,
  author       = {Reyes-Ortiz, Jorge and Anguita, Davide and Ghio, Alessandro and Oneto, Luca and Parra, Xavier},
  title        = {{Human Activity Recognition Using Smartphones}},
  year         = {2013},
  howpublished = {UCI Machine Learning Repository},
  note         = {{DOI}: https://doi.org/10.24432/C54S4K}
}

@inproceedings{anguita2013public,
  title={A public domain dataset for human activity recognition using smartphones.},
  author={Anguita, Davide and Ghio, Alessandro and Oneto, Luca and Parra, Xavier and Reyes-Ortiz, Jorge Luis and others},
  booktitle={Esann},
  volume={3},
  number={1},
  pages={3--4},
  year={2013}
}

@misc{asip,
    author = "Roberto Saldo and Matilde Brandt Kreiner and Jørgen Buus-Hinkler and Leif Toudal Pedersen and David Malmgren-Hansen and Allan Aasbjerg Nielsen and Henning Skriver",
    title = "{AI4Arctic / ASIP Sea Ice Dataset - version 2}",
    year = "2020",
    month = "10",
    url = "https://data.dtu.dk/articles/dataset/AI4Arctic_ASIP_Sea_Ice_Dataset_-_version_2/13011134",
    doi = "10.11583/DTU.13011134.v3",
}
\bibliographystyle{tmlr}

\appendix
\onecolumn
\section*{APPENDIX}

\section{Pedagogical numerical illustration of the partial-actuation failure mode}
\label{sec.app.demo}

We present a minimal, $d = 2$, blackboard-style linear regression example that makes the learning instability under missing data concrete. In the linear regression case, the failure mode arises as the scalar residual couples coordinates, both observed and masked, indiscriminately, and moves the observed parameters regardless of the true values of the masked features and the update direction and strength that the full feature vector would have led to. We configure our example as follows:
\begin{align*}
 \text{\textbf{features:}}& \quad \bm{x}_t \in \mathbb{R}^2 \;\text{\textbf{(full)}}, \quad \tilde{\bm{x}}_t = \left\{ \left[ \begin{array}{c} x_{1,t} \\ x_{2,t} \end{array} \right] , \left[ \begin{array}{c} 0 \\ x_{2,t} \end{array} \right],\left[ \begin{array}{c} x_{1,t} \\ 0 \end{array} \right] \right\} \; \text{\textbf{(masked)}} \\
 \text{\textbf{true model:}}& \quad y_t = \langle \bm{w}^*, \bm{x}_t \rangle, \text{where } \bm{w}^* = \left[ \begin{array}{c} w_1^* \\ w_2^* \end{array} \right] = \left[ \begin{array}{c} 1 \\ 1 \end{array} \right] \\
 \text{\textbf{method:}}& \quad \bm{w}_{t+1} = \bm{w}_t + \eta \tilde{\bm{x}}_t \underbrace{ \left( y_t - \langle \bm{w}_t, \tilde{\bm{x}}_t \rangle \right)}_{\text{residual }
 {r}_t} \\
 \text{\textbf{learning rate:}}& \quad \eta = 0.05 \\
 \text{\textbf{initialization:}}& \quad \bm{w}_0 =  \left[ \begin{array}{c} 0 \\ 0 \end{array} \right] \\
 \text{\textbf{progression:}}& \quad \text{[step 0] }\bm{x}_0 = [4,4]^\mathsf{T}, y_0 = 8, \text{\textbf{feature 2 missing}} \to \tilde{\bm{x}}_0 = [4,0]^\mathsf{T} \\
 & \quad \text{[step 1] } \bm{x}_1 = [3,5]^\mathsf{T}, y_1 = 8, \text{\textbf{feature 1 missing}} \to \tilde{\bm{x}}_1 = [0,5]^\mathsf{T} \\
 & \quad \text{[step 2] } \bm{x}_2 = [5,3]^\mathsf{T}, y_2 = 8, \text{\textbf{feature 1 missing} (again)} \to \tilde{\bm{x}}_2 = [0,3]^\mathsf{T}
\end{align*}
To illustrate how the considered failure mode can emerge, we study the evolution of a \textbf{full SGD} method starting from $\bm{w}_0$ and operating on the above full-feature samples, a \textbf{masked SGD} method operating on the above masked samples, and a \textbf{counterfactual} step that picks up from wherever the masked SGD method is and applies a full feature visibility update as if it were the full SGD method.

\renewcommand{\arraystretch}{1.2}
\begin{tabular}{|r|c|c|c|}
\hline
\cellcolor[HTML]{000000}{\color[HTML]{FFFFFF} \textbf{STEP 0}} & \textbf{full SGD} & \textbf{masked SGD} & \textbf{counterfactual} \\ \hline
starting $\bm{w}$ & $\bm{w}_0 = [0,0]^\mathsf{T}$ & $\bm{w}_0 = [0,0]^\mathsf{T}$ &  $\bm{w}_0 = [0,0]^\mathsf{T}$ \\ \hline
sample $\bm{x}$ used & $\bm{x}_0 = [4,4]^\mathsf{T}$ & $\tilde{\bm{x}}_0 = [4,0]^\mathsf{T}$ & $\bm{x}_0 = [4,4]^\mathsf{T}$ \\ \hline
actual $y$ & 8 & 8 & 8 \\ \hline
predicted $\hat{y}$ & 0 & 0 & 0 \\ \hline
residual $r$ & 8 & 8 & 8 \\ \hline
$\Delta \bm{w}$  & $[1.6, 1.6]^\mathsf{T}$  & $[1.6,0]^\mathsf{T}$ & $[1.6, 1.6]^\mathsf{T}$ \\ \hline
new parameter $\bm{w}_1$ &  $\bm{w}_1^\mathrm{f} = [1.6, 1.6]^\mathsf{T}$ & $\bm{w}_1^\mathrm{m} = [1.6, 0]^\mathsf{T}$  &  $[1.6, 1.6]^\mathsf{T}$ \\ \hline
new parameter error $\bm{e}_1$ & $[0.6, 0.6]^\mathsf{T}$ & $[0.6, -1.0]^\mathsf{T}$  & $[0.6, 0.6]^\mathsf{T}$ \\ \hline
new parameter-error norm $\| \bm{e}_1 \|$ & $0.849$ & $1.166$ & $0.849$ \\ \hline
\end{tabular}

All three methods start from $[0,0]^\mathsf{T}$, so the counterfactual update is identical to full SGD here. Coincidentally, their predicted $\hat{y}$ is identical. Full SGD, with both features, immediately corrects both components. Masked SGD correctly updates $\bm{w}_1$, but leaves $w_{2,1}$ frozen at $0$, one full unit below $w_2^* = 1$. The divergence already emerges: $w_{1,1}$ moderately overshot at $1.6$ while $w_{2,1}$ is left entirely unactuated. However, this is the canonical behavior of masked SGD in the linear regression case: $w_{2,1}$ expectedly stays put. A masked feature and the part(s) of the model it influences (which, in the case of linear regression, have a very direct, one-to-one correspondence) cannot help the convergence or stability of the method. Yet, as we will see next, a masked feature/component can yield a destabilizing failure that is realized by the motion of currently visible and actuated components. Moreover, effects such as the larger parameter error seen above in the masked method can accumulate and propagate in subsequent iterations.

\begin{tabular}{|r|c|c|c|}
\hline
\cellcolor[HTML]{000000}{\color[HTML]{FFFFFF} \textbf{STEP 1}} & \textbf{full SGD} & \textbf{masked SGD} & \textbf{counterfactual} \\ \hline
starting $\bm{w}$ & $\bm{w}_1^\mathrm{f} = [1.6,1.6]^\mathsf{T}$ & $\bm{w}_1^\mathrm{m} = [1.6,0]^\mathsf{T}$ &  $\bm{w}_1^\mathrm{m} = [1.6,0]^\mathsf{T}$ \\ \hline
sample $\bm{x}$ used & $\bm{x}_1 = [3,5]^\mathsf{T}$ & $\tilde{\bm{x}}_1 = [0,5]^\mathsf{T}$ & $\bm{x}_1 = [3,5]^\mathsf{T}$ \\ \hline
actual $y$ & 8 & 8 & 8 \\ \hline
predicted $\hat{y}$ & 12.8 & 0 & 4.8 \\ \hline
residual $r$ & $\bm{-4.8}$ & $\bm{+8.0}$ & $\bm{+3.2}$ \\ \hline
$\Delta \bm{w}$  & $[-0.72, -1.2]^\mathsf{T}$  & $[0,+2.0]^\mathsf{T}$ & $[+0.48, +0.80]^\mathsf{T}$ \\ \hline
new parameter $\bm{w}_2$ &  $\bm{w}_2^\mathrm{f} = [0.88, 0.40]^\mathsf{T}$ & $\bm{w}_2^\mathrm{m} = [1.6, 2.0]^\mathsf{T}$  &  $[2.08, 0.80]^\mathsf{T}$ \\ \hline
new parameter error $\bm{e}_2$ & $[-0.12, -0.60]^\mathsf{T}$ & $[0.6, 1.0]^\mathsf{T}$  & $[1.08, -0.20]^\mathsf{T}$ \\ \hline
new parameter-error norm $\| \bm{e}_2 \|$ & $\bm{0.612}$ & $\bm{1.166}$ & $\bm{1.098}$ \\ \hline
\end{tabular}

We notice above a significant difference between the \textit{masked residual} ($+8.0$) and the\textit{ counterfactual residual} ($+3.2$): the $4.8$ unit gap equals $w_{1,1}^\mathrm{masked} \cdot x_{1,1} = 1.6 \cdot 3 = 4.8$, which corresponds to the invisible prediction contribution from the previously overshot $w_{1,1}^\mathrm{masked}$ flowing through the missing feature channel. This gap itself decomposes as:
\begin{equation*}
    4.8 = w_{1,1}^\mathrm{m} \cdot x_{1,1} = \underbrace{w_1^* \cdot x_{1,1}}_{=3} + \underbrace{\left( w_{1,1}^\mathrm{m} - w_1^* \right) \cdot x_{1,1}}_{=1.8}
\end{equation*}
The term $w_1^* \cdot x_{1,1}$ is exogenous and present even with a perfect model, whereas the term $\left( w_{1,1} - w_1^* \right) \cdot x_{1,1}$ is endogenous and represents the learner's own past error in $w_{1,1}$ re-entering, yet in a rather detrimental way through the hidden channel. The masked learner uses the inflated $8$ units of residual to drive the update of the observable subspace, which corresponds to the component $w_{2,1}$, whereas the unobservable subspace, $w_{1,1}$, stays put, by definition. The inflated residual is then driving $w_{2,1}=0$ to $w_{2,2} = 2.0$ in a single step, overshooting $w_{2}^*$ by a factor of two. 

A striking observation in this step is that the error norm $\|\bm{e}_2 \| = 1.166$ for masked remains exactly the same as $\|\bm{e}_1 \|$ at the conclusion of step 0. The error vector did not shrink; instead, it rotated from $[0.6,-1.0]$ to $[0.6, +1.0]^\mathsf{T}$. The magnitude of the error in $w_2$ is unchanged at $1.0$, but its sign flipped: the learner moved from being $1.0$ unit below $w_2^*$ to being $1.0$ unit above it, with complete confidence. This rotation at constant norm is a rather clean signature of learning incoherence: one full step of gradient descent, in the simplest context of 2-dimensional linear regression, nothing was gained, yet the representation in the view of the learner, as described by $\bm{w}_2$, changes monumentally.

\begin{tabular}{|r|c|c|c|}
\hline
\cellcolor[HTML]{000000}{\color[HTML]{FFFFFF} \textbf{STEP 2}} & \textbf{full SGD} & \textbf{masked SGD} & \textbf{counterfactual} \\ \hline
starting $\bm{w}$ & $\bm{w}_2^\mathrm{f} = [0.88,0.40]^\mathsf{T}$ & $\bm{w}_2^\mathrm{m} = [1.6,2.0]^\mathsf{T}$ &  $\bm{w}_2^\mathrm{m} = [1.6,2.0]^\mathsf{T}$ \\ \hline
sample $\bm{x}$ used & $\bm{x}_2 = [5,3]^\mathsf{T}$ & $\tilde{\bm{x}}_2 = [0,3]^\mathsf{T}$ & $\bm{x}_2 = [5,3]^\mathsf{T}$ \\ \hline
actual $y$ & 8 & 8 & 8 \\ \hline
predicted $\hat{y}$ & 5.6 & 6.0 & 14.0 \\ \hline
residual $r$ & $\bm{+2.4}$ & $\bm{+2.0}$ & $\bm{-6.0}$ \\ \hline
$\Delta \bm{w}$  & $[+0.60, +0.36]^\mathsf{T}$  & $[0,+0.30]^\mathsf{T}$ & $[-1.50,-0.90]^\mathsf{T}$ \\ \hline
new parameter $\bm{w}_3$ &  $\bm{w}_3^\mathrm{f} = [1.48, 0.76]^\mathsf{T}$ & $\bm{w}_3^\mathrm{m} = [1.6, 2.30]^\mathsf{T}$  &  $[0.10, 1.10]^\mathsf{T}$ \\ \hline
new parameter error $\bm{e}_3$ & $[0.48, -0.24]^\mathsf{T}$ & $[0.6, 1.30]^\mathsf{T}$  & $[-0.90, 0.10]^\mathsf{T}$ \\ \hline
new parameter-error norm $\| \bm{e}_3 \|$ & $\bm{0.537}$ & $\bm{1.432}$ & $\bm{0.906}$ \\ \hline
\end{tabular}

In the final step that we examine above, the error norm for masked SGD grows from $1.166$ to $1.432$. The masked update machinery constructs a severely inflated residual, which is misinforming and misleading the learner. The inflation is so high that the residual flips sign from what the counterfactual update would work with. As a result, the masked method moves in the opposite direction than what would help regulate the error. The contrast between the different residuals also illuminates another point: the masked method, seeing a relatively smaller residual yet of the same sign as that in the preceding step, may ``think'' that it is close to converging to the sought-after $\bm{w}^*$, which could not be further from the truth.

\paragraph{A scalar observability-aware replay of the critical step.}
The scalar controller of Section \ref{sec.prototypic_update} does not recover the missing contribution or correct the sign of the residual. It instead limits how strongly the learner may act on that residual. To make this distinction concrete, consider the same starting state and masked sample as in the final column above:
\begin{equation*}
\bm{w}_2^{\mathrm{m}}=[1.6,2.0]^\mathsf{T},\qquad \tilde{\bm{x}}_2=[0,3]^\mathsf{T},\qquad \tilde r_2=2.
\end{equation*}
With $\epsilon=1$, the scalar gain is
\begin{equation*}
\alpha_2 = \frac{\eta}{\epsilon+\|\tilde{\bm{x}}_2\|^2} =\frac{0.05}{1+9} = 0.005.
\end{equation*}
The resulting update is therefore
\begin{equation*}
\Delta\bm{w}_2^{\mathrm{scalar}} = \alpha_2\tilde{\bm{x}}_2\tilde r_2 =[0,0.03]^\mathsf{T},
\end{equation*}
rather than the masked-SGD update $[0,0.30]^\mathsf{T}$. Thus,
\begin{equation*}
\bm{w}_3^{\mathrm{scalar}}=[1.6,2.03]^\mathsf{T} \qquad \|\bm{e}_3^{\mathrm{scalar}}\|=1.191, 
\end{equation*}
compared with $\|\bm{e}_3^{\mathrm{masked}}\|=1.432$. The scalar controller has not inferred the unavailable feature contribution: the proposed motion remains imperfect and the error norm still grows slightly from $1.166$. It has, however, prevented the misleading residual from producing a full-strength update, substantially limiting the resulting excursion.

\paragraph{Directional observability-aware learning.} We do \textit{not} replay the directional controller of Section \ref{sec.directional} in this three-step illustration in $\mathbb{R}^2$: its distinguishing mechanism depends on a sufficiently rich history of heterogeneous sensitivity directions, as well as a nontrivial parameter space, whereas the axis-aligned masked samples used here produce only a short, diagonal actuation history. That mechanism is therefore studied over sustained
learning runs in Section \ref{sec.exp1}.

\section{Proofs}

\subsection{Proof of Lemma \ref{lemma.Bt_eigenvalue_bounds}} \label{app.proof.lemma_Bt_eigenvaluePbounds}

Unrolling the empirical EMA state in (\ref{eq.directional.B_recursive}) gives
\begin{equation}
    \bm{S}_t=(1-\beta)\sum_{j=0}^{t}\beta^j\bm{g}_{t-j}\bm{g}_{t-j}^\mathsf{T}.
\end{equation}
Since $\|\bm{g}_{t-j}\|\leq G$, we have $\bm{g}_{t-j}\bm{g}_{t-j}^\mathsf{T}\preceq G^2I$, and therefore
\begin{align}
    \bm{B}_t &=\epsilon I+\bm{S}_t, \nonumber\\
    &\preceq \epsilon I+(1-\beta)\sum_{j=0}^{t}\beta^jG^2I \preceq (\epsilon+G^2)I =\overline{b}I.
\end{align}
Because $\bm{S}_t\succeq0$, we also have
\begin{equation}
    \bm{B}_t\succeq\epsilon I=\underline{b}I.
\end{equation}
For $t\geq T-1$, Assumption \ref{assumption_window} yields the sharper lower bound
\begin{align}
    \bm{B}_t &\succeq \epsilon I+(1-\beta)\sum_{j=0}^{T-1}\beta^j\bm{g}_{t-j}\bm{g}_{t-j}^\mathsf{T}, \nonumber\\
    &\succeq \epsilon I+(1-\beta)\beta^{T-1} \sum_{j=0}^{T-1}\bm{g}_{t-j}\bm{g}_{t-j}^\mathsf{T}, \nonumber\\
    &\succeq \left(\epsilon+(1-\beta)\beta^{T-1}\lambda_g\right)I.
\end{align}
The inverse bounds follow from order reversal under inversion. Finally,
\begin{equation}
    \|\bm{\psi}_t\| = \|\bm{B}_t^{-1}\bm{g}_t\| \leq \|\bm{B}_t^{-1}\|\|\bm{g}_t\|  \leq \frac{G}{\underline{b}},
\end{equation}
which concludes the proof.

\subsection{Proof of Lemma \ref{lemma.raw_to_effective_excitation}} \label{app.proof.raw_to_effective_excitation}

Fix $t\geq T-1$ and define the raw and effective sensitivity matrices
\begin{equation}
    \bm{\mathcal{G}}_t  := \left[ \bm{g}_{t-T+1}\;\cdots\;\bm{g}_{t} \right],  \qquad  \bm{\Psi}_t  :=  \left[ \bm{P}_{t-T+1}\bm{g}_{t-T+1}\;\cdots\;\bm{P}_{t}\bm{g}_{t}  \right].
\end{equation}
Assumption \ref{assumption_window} gives
\begin{equation}
    \bm{\mathcal{G}}_t\bm{\mathcal{G}}_t^\mathsf{T}\succeq\lambda_gI,
\end{equation}
and therefore
\begin{equation}
    \sigma_{\min}(\bm{\mathcal{G}}_t)\geq\sqrt{\lambda_g}.
\end{equation}
Write
\begin{equation}
    \bm{\Psi}_t = \bm{P}_t\bm{\mathcal{G}}_t+\bm{E}_t,
\end{equation}
where the $k$th column of $\bm{E}_t$ is
\begin{equation}
    \left(\bm{P}_{t-T+k}-\bm{P}_t\right)\bm{g}_{t-T+k}, \qquad k=1,\dots,T.
\end{equation}
From Lemma \ref{lemma.Bt_eigenvalue_bounds},
\begin{equation}
    \sigma_{\min}(\bm{P}_t)\geq\frac{1}{\overline{b}},
\end{equation}
so
\begin{equation}
    \sigma_{\min}\left(\bm{P}_t\bm{\mathcal{G}}_t\right)  \geq \frac{\sqrt{\lambda_g}}{\overline{b}}.
\end{equation}
Moreover,
\begin{align}
    \|\bm{E}_t\|  &\leq \|\bm{E}_t\|_{\mathrm{F}}, \nonumber\\
    &\leq G \left( \sum_{k=0}^{T-1}  \|\bm{P}_{t-k}-\bm{P}_t\|^2  \right)^{1/2}, \nonumber\\
    &\leq G\nu_T.
\end{align}
The singular-value perturbation inequality now yields
\begin{equation}
    \sigma_{\min}(\bm{\Psi}_t)  \geq \frac{\sqrt{\lambda_g}}{\overline{b}}-G\nu_T.
\end{equation}
Under condition (\ref{eq.metric_variation_condition}), the right-hand side is positive. Hence
\begin{align}
    \sum_{k=0}^{T-1}\bm{\psi}_{t-k}\bm{\psi}_{t-k}^\mathsf{T}  &=  \bm{\Psi}_t\bm{\Psi}_t^\mathsf{T}, \nonumber\\
    &\succeq \sigma_{\min}^2(\bm{\Psi}_t)I, \nonumber\\
    &\succeq \left( \frac{\sqrt{\lambda_g}}{\overline{b}}-G\nu_T
    \right)^2I,
\end{align}
which establishes Assumption \ref{assumption_effective_window} with the stated $\mu$.

For completeness, we also derive the EMA variation bound used after the lemma. The resolvent identity gives
\begin{equation}
    \bm{P}_{j+1}-\bm{P}_j = \bm{P}_{j+1}\left(\bm{B}_j-\bm{B}_{j+1}\right)\bm{P}_j.
\end{equation}
Also,
\begin{equation}
    \bm{S}_{j+1}-\bm{S}_j  =  (1-\beta)\left(\bm{A}_{j+1}-\bm{S}_j\right).
\end{equation}
Since $\|\bm{A}_{j+1}\|\leq G^2$ and $\|\bm{S}_j\|\leq G^2$,
\begin{equation}
    \|\bm{B}_{j+1}-\bm{B}_j\| = \|\bm{S}_{j+1}-\bm{S}_j\| \leq 2(1-\beta)G^2.
\end{equation}
Using $\|\bm{P}_j\|\leq1/\underline{b}$,
\begin{equation}
    \|\bm{P}_{j+1}-\bm{P}_j\|  \leq  \frac{2(1-\beta)G^2}{\underline{b}^2} =:\delta_P.
\end{equation}
By telescoping,
\begin{equation}
    \|\bm{P}_{t-k}-\bm{P}_t\| \leq k\delta_P,
\end{equation}
and consequently
\begin{align}
    \nu_T  &\leq  \delta_P  \left(  \sum_{k=0}^{T-1}k^2  \right)^{1/2}, \nonumber\\
    &= \frac{2(1-\beta)G^2}{\underline{b}^2} \sqrt{\frac{T(T-1)(2T-1)}{6}}.
\end{align}
This concludes the proof.

\subsection{Proof of Lemma \ref{lemma.dissipation}} \label{app.proof.one_step_dissipation}

We begin by expanding $V_{t+1}$ using $\bm{\psi}_t=\bm{B}_t^{-1}\bm{g}_t$ and $\chi_t=\bm{\psi}_t^\mathsf{T}\bm{e}_t$:
\begin{align}
    V_{t+1}  &=  \|\bm{e}_t-\eta\bm{\psi}_ts_t\|^2, \nonumber\\
    &= V_t-2\eta s_t\chi_t+\eta^2s_t^2\|\bm{\psi}_t\|^2. \label{eq.proof.dissipation_expand}
\end{align}
By Assumption \ref{assumption_d}, $\chi_t=\tilde{r}_t-\delta_t$, and by Assumption \ref{assumption.sectors}, $s_t=\phi(\tilde{r}_t)$. Therefore,
\begin{align}
    -2\eta s_t\chi_t   &=  -2\eta\phi(\tilde{r}_t)(\tilde{r}_t-\delta_t), \nonumber\\
    &=   -2\eta\tilde{r}_t\phi(\tilde{r}_t)   +   2\eta \delta_t\phi(\tilde{r}_t), \nonumber\\
    &\leq  -2\eta m\tilde{r}_t^2   +  2\eta M|\delta_t||\tilde{r}_t|, \nonumber\\
    &\leq  -\eta m\tilde{r}_t^2  +  \frac{\eta M^2}{m}\delta_t^2, \label{eq.proof.dissipation_cross}
\end{align}
where the last line follows from Young's inequality.

From Lemma \ref{lemma.Bt_eigenvalue_bounds},
\begin{equation}
    \|\bm{\psi}_t\|  \leq  \frac{G}{\underline{b}},
\end{equation}
and the growth bound in Assumption \ref{assumption.sectors} gives
\begin{equation}
    \eta^2s_t^2\|\bm{\psi}_t\|^2  \leq  \eta^2M^2\frac{G^2}{\underline{b}^2}\tilde{r}_t^2.
\end{equation}
Introducing these bounds in (\ref{eq.proof.dissipation_expand}) yields
\begin{equation}
    V_{t+1}  \leq  V_t  -  \eta\left( m-\eta M^2\frac{G^2}{\underline{b}^2}  \right)\tilde{r}_t^2   +     \frac{\eta M^2}{m}\delta_t^2.
\end{equation}
Under (\ref{eq.Lyapunov_lemma_step_size}),
\begin{equation}
    m-\eta M^2\frac{G^2}{\underline{b}^2}\geq\frac{m}{2},
\end{equation}
so
\begin{equation}
    V_{t+1}     \leq     V_t-\alpha_0\tilde{r}_t^2+\gamma_0\delta_t^2,
\end{equation}
with $\alpha_0=\eta m/2$ and $\gamma_0=\eta M^2/m$.

Next, from $\tilde{r}_t=\chi_t+\delta_t$ and the inequality $(x+y)^2\geq \frac{1}{2}x^2-y^2$,
\begin{equation}
    \tilde{r}_t^2  \geq \frac{1}{2}\chi_t^2-\delta_t^2.
\end{equation}
Thus,
\begin{align}
    V_{t+1}     &\leq  V_t  -  \frac{\alpha_0}{2}\chi_t^2   +   (\alpha_0+\gamma_0)\delta_t^2, \nonumber\\     &=
    V_t-\alpha\chi_t^2+\gamma \delta_t^2. \end{align}
Finally,
\begin{align}
    \|\bm{e}_{t+1}-\bm{e}_t\|     &=     \eta\|\bm{\psi}_t\||s_t|, \nonumber\\ 
    &\leq   \eta\frac{GM}{\underline{b}}|\tilde{r}_t|,
\end{align}
which establishes all three claims.

\subsection{Proof of Lemma \ref{lemma.windowed_obs_in_update_geometry}} \label{app.proof.update_geometry_lemma}

Fix $t\geq T-1$ and let $s:=t-T+1$. For every $k=0,\dots,T-1$,
\begin{align}
    \chi_{t-k}  &= \bm{\psi}_{t-k}^\mathsf{T}\bm{e}_{t-k}, \nonumber\\
    &= \bm{\psi}_{t-k}^\mathsf{T}\bm{e}_{s}  +  \bm{\psi}_{t-k}^\mathsf{T} \left( \bm{e}_{t-k}-\bm{e}_{s} \right).
\end{align}
Using $(x+y)^2\geq\frac{1}{2}x^2-y^2$ and summing over the window,
\begin{align}
    \sum_{k=0}^{T-1}\chi_{t-k}^2 &\geq \frac{1}{2} \sum_{k=0}^{T-1} \left( \bm{\psi}_{t-k}^\mathsf{T}\bm{e}_{s} \right)^2 \nonumber\\
    &\quad - \sum_{k=0}^{T-1} \left( \bm{\psi}_{t-k}^\mathsf{T} \left( \bm{e}_{t-k}-\bm{e}_{s} \right) \right)^2. \label{eq.proof.window_split}
\end{align}
Assumption \ref{assumption_effective_window} gives
\begin{equation}
    \sum_{k=0}^{T-1} \left( \bm{\psi}_{t-k}^\mathsf{T}\bm{e}_{s} \right)^2 \geq \mu\|\bm{e}_{s}\|^2 = \mu V_s. \label{eq.proof.window_effective_term}
\end{equation}

Define
\begin{equation}
    \Delta_t := \max_{0\leq k\leq T-1} \|\bm{e}_{t-k}-\bm{e}_{s}\|.
\end{equation}
Since $\|\bm{\psi}_{t-k}\|\leq G/\underline{b}$,
\begin{equation}
    \sum_{k=0}^{T-1} \left( \bm{\psi}_{t-k}^\mathsf{T} \left( \bm{e}_{t-k}-\bm{e}_{s}  \right) \right)^2 \leq T\frac{G^2}{\underline{b}^2}\Delta_t^2. \label{eq.proof.window_drift_term}
\end{equation}
The motion bound from Lemma \ref{lemma.dissipation} implies
\begin{align}
    \|\bm{e}_{t-k}-\bm{e}_{s}\| &\leq \sum_{j=s}^{t-1} \|\bm{e}_{j+1}-\bm{e}_{j}\|, \nonumber\\
    &\leq \eta\frac{GM}{\underline{b}} \sum_{j=s}^{t}|\tilde{r}_j|.
\end{align}
Therefore, by Cauchy--Schwarz,
\begin{equation}
    \Delta_t^2  \leq \eta^2\frac{G^2M^2}{\underline{b}^2}  T
    \sum_{j=s}^{t}\tilde{r}_j^2. \label{eq.proof.window_delta}
\end{equation}
Summing the sharper dissipation inequality (\ref{eq.one_step_Lyapunov_decr}) over $j=s,\dots,t$ gives
\begin{equation}
    V_{t+1}-V_s  \leq -\alpha_0\sum_{j=s}^{t}\tilde{r}_j^2  + \gamma_0\sum_{j=s}^{t}\delta_j^2.
\end{equation}
Using $V_{t+1}\geq0$ and $\sum_{j=s}^{t}\delta_j^2\leq T\overline{\delta}^2$,
\begin{equation}
    \sum_{j=s}^{t}\tilde{r}_j^2 \leq \frac{V_s}{\alpha_0}  + \frac{\gamma_0T}{\alpha_0}\overline{\delta}^2. \label{eq.proof.window_residual_sum}
\end{equation}
Combining (\ref{eq.proof.window_delta}) and (\ref{eq.proof.window_residual_sum}) yields
\begin{equation}
    \Delta_t^2  \leq  \eta^2\frac{G^2M^2T}{\underline{b}^2}  \left( \frac{V_s}{\alpha_0} + \frac{\gamma_0T}{\alpha_0}\overline{\delta}^2     \right). \label{eq.proof.window_delta_final}
\end{equation}
Finally, introducing (\ref{eq.proof.window_effective_term}), (\ref{eq.proof.window_drift_term}), and (\ref{eq.proof.window_delta_final}) into (\ref{eq.proof.window_split}) gives
\begin{align}
    \sum_{k=0}^{T-1}\chi_{t-k}^2  &\geq \frac{\mu}{2}V_s  - T\frac{G^2}{\underline{b}^2}\Delta_t^2, \nonumber\\
    &\geq  \left( \frac{\mu}{2}  -  \frac{\eta^2G^4M^2T^2}{\alpha_0\underline{b}^4}\right)V_s \nonumber\\
    &\quad  - \frac{\eta^2G^4M^2\gamma_0T^3}{\alpha_0\underline{b}^4} \overline{\delta}^2, \nonumber\\
    &= \theta V_{t-T+1}-\zeta\overline{\delta}^2.
\end{align}
The explicit condition (\ref{eq.lemma_windowed_step_size_cond}) is equivalent to $\theta>0$, completing the proof.

\subsection{Proof of Proposition \ref{proposition.ISS}} \label{app.proof.ISS}

We develop the proof in three phases / steps. In Step 1, we look at the residual-to-state behavior of the error at the end of $T$-length windows. In Step 2, we examine and bound the growth of the error within such windows. In Step 3, we compile the final bound for $\|\bm{e}_t\|$.

\subsubsection*{Step 1: window-to-window error behavior}

From Lemma \ref{lemma.dissipation},
\begin{equation}
    V_{t+1}\leq V_t-\alpha\chi_t^2+\gamma \delta_t^2. \label{eq.proof.iss.Lyap_dissipation}
\end{equation}
Fix any integer $n\geq0$ and sum over $t=nT,\dots,nT+T-1$:
\begin{equation}
    V_{(n+1)T}  \leq   V_{nT}  -  \alpha\sum_{k=0}^{T-1}\chi_{nT+k}^2  +  \gamma\sum_{k=0}^{T-1}\delta_{nT+k}^2. \label{eq.proof.iss.Lyap_dissipation_window}
\end{equation}
Using $\sum \delta_{nT+k}^2\leq T\overline{\delta}^2$ and Lemma \ref{lemma.windowed_obs_in_update_geometry} with $t=nT+T-1$,
\begin{equation}
    \sum_{k=0}^{T-1}\chi_{nT+k}^2  \geq  \theta V_{nT}-\zeta\overline{\delta}^2.
\end{equation}
Hence
\begin{equation}
    V_{(n+1)T}  \leq  (1-\alpha\theta)V_{nT}  +  (\alpha\zeta+\gamma T)\overline{\delta}^2. \label{eq.proof.iss.window_recursion}
\end{equation}
Define
\begin{equation}
    \rho:=1-\alpha\theta.
\end{equation}
Under (\ref{eq.proposition_step_size}), $\theta>0$ and
\begin{equation}
    0<\alpha\theta     \leq  \frac{\eta m\mu}{8}   <1,
\end{equation}
so $\rho\in(0,1)$. Iterating (\ref{eq.proof.iss.window_recursion}) gives
\begin{equation}
    V_{nT}  \leq \rho^nV_0  + \frac{\alpha\zeta+\gamma T}{\alpha\theta} \overline{\delta}^2.
\end{equation}
Taking square roots,
\begin{equation}
    \|\bm{e}_{nT}\|  \leq  q^n\|\bm{e}_0\|   + C\overline{\delta}, \label{eq.proof.iss.endpoint_bound}
\end{equation}
where
\begin{equation}
    q:=\sqrt{\rho}=\sqrt{1-\alpha\theta},  \qquad  C:=  \sqrt{\frac{\alpha\zeta+\gamma T}{\alpha\theta}}.
\end{equation}

\subsubsection*{Step 2: within-window error behavior}

Fix $n\geq0$ and take any $t$ such that $nT\leq t\leq(n+1)T$. Then
\begin{equation}
    \|\bm{e}_t\|  \leq  \|\bm{e}_{nT}\|  +  \sum_{j=nT}^{t-1}  \|\bm{e}_{j+1}-\bm{e}_j\|.
\end{equation}
Using (\ref{eq.one_step_motion_bound}), enlarging the sum to the full window, and applying Cauchy--Schwarz,
\begin{align}
    \|\bm{e}_t\|   &\leq  \|\bm{e}_{nT}\|   +  \eta\frac{GM}{\underline{b}}  \sum_{j=nT}^{(n+1)T-1}  |\tilde{r}_j|, \nonumber\\ 
    &\leq   \|\bm{e}_{nT}\|   +   \eta\frac{GM}{\underline{b}}  \sqrt{T  \sum_{j=nT}^{(n+1)T-1}   \tilde{r}_j^2  }. \label{eq.proof.iss.within_residual}
\end{align}
Summing (\ref{eq.one_step_Lyapunov_decr}) over the same window gives
\begin{equation}
    \sum_{j=nT}^{(n+1)T-1}\tilde{r}_j^2 \leq \frac{V_{nT}}{\alpha_0} + \frac{\gamma_0T}{\alpha_0}\overline{\delta}^2.
\end{equation}
Introducing this bound in (\ref{eq.proof.iss.within_residual}) and using $\sqrt{x+y}\leq\sqrt{x}+\sqrt{y}$ yields
\begin{equation}
    \|\bm{e}_t\| \leq A\|\bm{e}_{nT}\| + D\overline{\delta}, \label{eq.proof.iss.within_window}
\end{equation}
where
\begin{equation}
    A :=  1+ \eta\frac{GM}{\underline{b}} \sqrt{\frac{T}{\alpha_0}}, \qquad  D :=  \eta\frac{GM}{\underline{b}} T \sqrt{\frac{\gamma_0} {\alpha_0}}. \label{eq.proof.iss.AD}
\end{equation}

\subsubsection*{Step 3: final residual-to-state bound}

For any $t\geq0$, let $n=\lfloor t/T\rfloor$. Combining (\ref{eq.proof.iss.endpoint_bound}) and (\ref{eq.proof.iss.within_window}) gives
\begin{align}
    \|\bm{e}_t\|   &\leq  A\left( q^{\lfloor t/T\rfloor}\|\bm{e}_0\|  +  C\overline{\delta} \right)   +  D\overline{\delta}, \nonumber\\     &=     A q^{\lfloor t/T\rfloor}\|\bm{e}_0\|     +  L\overline{\delta}, 
\end{align}
where
\begin{equation}
    L:=AC+D.
\end{equation}
Thus the constants $q,A,L$ are explicit functions of the problem parameters, and the stated estimate establishes a residual-to-state, ISS-type bound for all $t$. When $\delta_t$ is independently specified as an exogenous input, the same estimate has the conventional input-to-state-stability interpretation.

\section{Implementation details for numerical experiments}

In this appendix, we describe the learning pipelines around Experiments \ref{sec.exp2} and \ref{sec.exp3}, as well as the experiment-specific realizations of our observability-aware controllers. Sections \ref{sec.directional}-\ref{sec.stability} develop and prove the idealized control law used in our theory, whereas the implementations below use numerically stabilized and architecture-specific realizations that preserve the same observability-aware geometry, while promoting robustness and tractability in finite-precision training, especially in practically realistic and meaningful parameter spaces that range between $O\left(10^3 \right) \sim O\left(10^6 \right)$.

\subsection{Experiment \ref{sec.exp2}} \label{app.sec.exp2}

The Human Activity Recognition (HAR) dataset includes 561 input features and six activity labels: WALKING, WALKING UPSTAIRS, WALKING DOWNSTAIRS, SITTING, STANDING, and LAYING, with an explicit and intentional split between training and test sets, with the latter obtained from different human subjects. We load the data with this standard train/test split, remap labels to $\{1, \dots, 6\}$, and standardize the features with a StandardScaler fit on the training split, which we also apply unchanged to the test split. 

We formulate the classification task as six one-versus-rest (OVR) binary classification problems, one for each $k \in \{1, \dots 6\}$. For a sample $\bm{x}_i \in \mathbb{R}^{561}$ with multiclass label $y_i$, we define the binary target for head $k$ as $y_{i,k}^{\mathrm{bin}} := \bm{1}\{y_i = k\}$. At inference time, the multiclass prediction is obtained by taking the largest OVR logit across the six heads. Without this being required, for computational efficiency our implementation updates and evaluates the six binary heads in parallel, while the independent random seed runs presented in our evaluation are executed serially. 

\subsubsection{Missingness construction}

To induce meaningful modality-level missingness in HAR, we hard-coded the stated HAR feature ordering and grouped the originally continuous features into six coarse sensing groups: body acceleration, gravity acceleration, body-acceleration jerk, body gyroscope, body-gyroscope jerk, and angle features. 

For each training example, we first sample a binary feature mask initialized by keeping each feature independently with probability 0.8. We then select exactly 4 out of the 6 coarse modality groups uniformly without replacement and set all features in those groups to zero. The resulting mask is applied once to the standardized datapoint and then reused whenever that datapoint is revisited during training. Thus, masks are persistent within a run and vary only with the random seed. Our evaluation reported in Section \ref{sec.exp2} uses a full (unmasked) test split as the validation set, allowing us to judge how each method learns the underlying mapping despite severe data missingness. 

\subsubsection{Models and losses}

We consider two model classes: 

\begin{itemize}
    \item \textbf{Logistic regression.} For head $k$, let $a_{i,k} := \langle \bm{\theta}_k , \bm{x}_i \rangle + b_k$ and  $p_{i,k} := \sigma (a_{i,k}) =1/\left(1 + e^{-a_{i,k}}\right)$, where $\bm{\theta}_k \in \mathbb{R}^{561}$, $b_k \in \mathbb{R}$, and $\displaystyle \bm{w}_k = \left[ \begin{array}{c} \bm{\theta}_k \\ b_k \end{array} \right] \in \mathbb{R}^{562}$ is the model's parameter vector for head $k$.

    The loss for head $k$ is 
    \begin{equation}
        \ell_k^{\mathrm{logistic}} = - \frac{1}{n} \sum_{i = 1}^{n} \left( y_{i,k}^{\mathrm{bin}} \log p_{i,k} + \left( 1 - y_{i,k}^{\mathrm{bin}} \right) \log (1 - p_{i,k}) \right) + \frac{\lambda}{2} \| \bm{\theta}_k \|_2^2,
    \end{equation} where $n$ is the minibatch size. The bias is not regularized. Our logistic regression model has 562 learnable parameters per class, totaling $6 \times 562 = 3372$ learnable parameters overall.

    \item \textbf{Shallow neural network.} For head $k$, we use a single-hidden-layer MLP with width 12, and learnable unknowns: $\bm{\Theta}_k \in \mathbb{R}^{12 \times 561}$, $\bm{b}_{1,k} \in \mathbb{R}^{12}$, $\bm{v}_k \in \mathbb{R}^{12}$, and $b_{2,k} \in \mathbb{R}$.
    For sample $i$, the forward map is:
    \begin{subequations}
    \begin{align}
        \bm{u}_{i,k} &= \bm{\Theta}_{k} \bm{x}_i + \bm{b}_{1,k}, \label{eq.app.ex2.fwd1} \\
        \bm{h}_{i,k} &= \tanh ( \bm{u}_{i,k}), \label{eq.app.ex2.fwd2} \\
        a_{i,k} &= \langle \bm{v}_k, \bm{h}_{i,k} \rangle + b_{2,k}, \label{eq.app.ex2.fwd3} \\
        p_{i,k} &= \sigma (a_{i,k}) = \frac{1}{1 + e^{-a_{i,k}}}, \label{eq.app.ex2.fwd4}
    \end{align}
    \end{subequations}
    and the minibatch loss is:
    \begin{equation}
        \ell_k^{\mathrm{MLP}} = -\frac{1}{n} \sum_{i = 1}^{n} \left( y_{i,k}^{\mathrm{bin}} \log p_{i,k} + \left( 1 - y_{i,k}^{\mathrm{bin}} \right) \log (1 - p_{i,k}) \right) + \frac{\lambda}{2} \left(  \| \bm{\Theta}_k \|_F^2 + \| \bm{v}_k \|_2^2 \right).
    \end{equation}
    Each per-class head contains $12 \cdot 561 = 6732$ input-to-hidden weights, 12 hidden biases, 12 output weights, and one output bias, for a total of $12 \cdot 561 + 12 + 12 +1 = 6757$ learnable parameters per head. The weights are initialized from $0.05 \, \mathcal{N}(0,1)$ with class-dependent random seeds, while all biases start at zero.

\end{itemize}

\subsubsection{Full order controller implementation}
In Section \ref{sec.exp2} we compare four update laws: full SGD, masked SGD, scalar observability-aware adaptation, and directional observability-aware adaptation. For the full and masked baselines, the update is ordinary gradient descent with the common step size $\eta = 0.005$. 

At every optimization step, we draw a minibatch of $n = 256$ training samples. For each one-versus rest head $k \in \{1, \dots, 6\}$, we form a full-order observability matrix,
\begin{equation}
    \bm{A}_{k,t} \in \mathbb{R}^{p_k \times p_k}
\end{equation}
where $p_k$ is the number of learnable parameters of that head:
\begin{equation}
    p_k = 562 \;\; \text{for logistic regression}, \qquad \qquad p_k = 6757 \;\; \text{for the shallow neural network}.
\end{equation}

In view of Experiment \ref{sec.exp3} and its significantly larger parameter space, we emphasize that here, in Experiment \ref{sec.exp2}, the directional controller is implemented without any block or other approximation: each binary head maintains its own \textit{dense} controller state.

For each head $k$, the instantaneous full-order actuation matrix is constructed from the minibatch Jacobian rows. Specifically, if $\bm{J}_{k,t}^{(i)} \in \mathbb{R}^{1 \times p_k}$ denotes the Jacobian row of the scalar output of head $k$ for minibatch sample $i$, then the controller uses
\begin{equation}
    \bm{A}_{k,t} = \frac{1}{n} \sum_{i=1}^{n} \left(\bm{J}_{k,t}^{(i)} \right)^\mathsf{T} \bm{J}_{k,t}^{(i)} \in \mathbb{R}^{p_k \times p_k}.
\end{equation}
Equivalently, if we stack the minibatch Jacobian rows into the matrix $\bm{G}_{k,t} \in \mathbb{R}^{n \times p_k}$, then $\bm{A}_{k,t} = \frac{1}{n} \bm{G}_{k,t}^\mathsf{T} \bm{G}_{k,t}$. 

For the \textbf{logistic regression} head, with learnable parameters $\bm{w}_k = \left[ \bm{\theta}_k^\mathsf{T} \; b_k \right]^\mathsf{T} \in \mathbb{R}^{562}$, let the minibatch inputs be augmented with a bias coordinate:
\begin{equation}
    \overline{\bm{x}}_i = \left[ \begin{array}{c} \bm{x}_i \\ 1 \end{array} \right] \in \mathbb{R}^{562}.
\end{equation}
For sample $i$, the scalar output is $p_{i,k} = \sigma \left( \bm{\theta}_k^\mathsf{T} \overline{\bm{x}}_i \right) \in \mathbb{R}$, and the corresponding Jacobian row of the scalar output with respect to the parameter vector $\bm{w}_k$ is:
\begin{equation}
    \bm{J}_{k,t}^{(i)} = p_{i,k} (1 - p_{i,k}) \bm{x}_i^\mathsf{T}.
\end{equation}
Therefore, the minibatch Jacobian matrix is
\begin{equation}
    \bm{G}_{k,t} = \left[ \begin{array}{c} p_{1,k}(1-p_{1,k}) \overline{\bm{x}}_1^\mathsf{T} \\ \vdots \\ p_{n,k}(1-p_{n,k}) \overline{\bm{x}}_n^\mathsf{T} \end{array} \right],
\end{equation}
and the full-order controller matrix is:
\begin{equation}
    \bm{A}_{k,t} = \frac{1}{n} \sum_{i=1}^{n} \left( p_{i,k}(1-p_{i,k}) \right)^2  \overline{\bm{x}}_i  \overline{\bm{x}}_i^\mathsf{T}.
\end{equation}
The corresponding logistic loss gradient used in the update is:
\begin{equation}
    \nabla \ell_{k,t}^{\mathrm{logistic}} = \frac{1}{n} \sum_{i=1}^n \left( p_{i,k} - y_{i,k}^\mathrm{bin} \right)  \overline{\bm{x}}_i + \lambda \left[ \begin{array}{c} \bm{\theta}_k \\ 0 \end{array} \right].
\end{equation}

For the \textbf{shallow neural network} head, the learnable parameter is the vector:
\begin{equation}
    \bm{w}_k = \mathrm{vec} \left( \bm{\Theta}_k, \bm{b}_{1,k}, \bm{v}_k, b_{2,k} \right) \in \mathbb{R}^{6757},
\end{equation}
where $\bm{\Theta}_k \in \mathbb{R}^{12 \times 561}$, $\bm{b}_{1,k} \in \mathbb{R}^{12}$, $\bm{v}_k \in \mathbb{R}^{12}$, and $b_{2,k} \in \mathbb{R}$. For sample $i$, take the forward map to be given by (\ref{eq.app.ex2.fwd1})-(\ref{eq.app.ex2.fwd4}). Let
\begin{equation}
    s_{i,k} = p_{i,k} (1 - p_{i,k}), \qquad \bm{d}_{i,k} = (1 - h_{i,k}^2) \odot \bm{v}_k \in \mathbb{R}^{12},
\end{equation}
where the square and the Hadamard product are taken elementwise. Then the Jacobian row of the scalar output $p_{i,k}$ with respect to the packed parameter vector $\bm{w}_k$ is:
\begin{equation}
    \bm{J}_{k,t}^{(i)} = \mathrm{pack} \left( s_{i,k} \bm{d}_{i,k} \bm{x}_i^\mathsf{T},\; s_{i,k} \bm{d}_{i,k} , \; s_{i,k} \bm{h}_{i,k}, \; s_{i,k}\right) \in \mathbb{R}^{1 \times 6757}. \label{eq.app.ex2.mlp_rows}
\end{equation}
Hence
\begin{equation}
    \bm{A}_{k,t} = \frac{1}{n} \sum_{i=1}^{n} \left( \bm{J}_{k,t}^{(i)} \right)^\mathsf{T} \bm{J}_{k,t}^{(i)} = \frac{1}{n} \bm{G}_{k,t}^\mathsf{T} \bm{G}_{k,t},
\end{equation}
with $\bm{G}_{k,t} \in \mathbb{R}^{n \times 6757}$ formed by stacking the rows given by (\ref{eq.app.ex2.mlp_rows}). The corresponding minibatch loss gradient is the packed vector:
\begin{align}
    \nabla \ell_{k,t}^{\mathrm{MLP}} &= \mathrm{pack} \left( \frac{1}{n} \sum_{i=1}^{n} \left(p_{i,k} - y_{i,k}^\mathrm{bin} \right) \bm{d}_{i,k} \bm{x}_i^\mathsf{T} + \lambda \bm{\Theta}_k , \;\;\frac{1}{n} \sum_{i=1}^n \left(p_{i,k} - y_{i,k}^\mathrm{bin} \right) \bm{d}_{i,k}, \right. \nonumber \\
    & \left. \qquad  \qquad  \qquad  \qquad \qquad \frac{1}{n} \sum_{i=1}^n \left(p_{i,k} - y_{i,k}^\mathrm{bin} \right) \bm{h}_{i,k} + \lambda \bm{v}_k , \;\; \frac{1}{n} \sum_{i=1}^n \left(p_{i,k} - y_{i,k}^\mathrm{bin} \right) \right).
\end{align}

Given $\bm{A}_{k,t}$, the \textbf{scalar-aware update} for head $k$ is:
\begin{equation}
    \gamma_{k,t} = \mathrm{tr}(\bm{A}_{k,t}), \quad\alpha_{k,t} = \min \left( \frac{\eta}{\epsilon + \gamma_{k,t}} , \; \alpha_{\mathrm{max}} \right), \qquad \bm{w}_{k+1} = \bm{w}_k - \alpha_{k,t} \nabla \ell_{k,t},
\end{equation}
where $\ell_{k,t} \in \left\{ \ell_{k,t}^\mathrm{logistic}\;, \; \ell_{k,t}^{\mathrm{MLP}} \right\}$, according to the model employed. In the reported runs of the scalar-aware controller, we used $\epsilon = 1$ and $\alpha_\mathrm{max} = 1$.

Given $\bm{A}_{k,t}$, to realize the \textbf{directional-aware update} for head $k$ we use
\begin{align}
    \bm{S}_{k,0} = 0, \quad \bm{S}_{k,t} &= \beta \bm{S}_{k,t-1} + (1 - \beta) \bm{A}_{k,t}, \\ \bm{R}_{k,t} &= \bm{S}_{k,t} + \epsilon I, \\
    \bm{q}_{k,t} &= \bm{R}_{k,t}^{-1} \nabla \ell_{k,t},
\end{align}
where $\ell_{k,t} \in \left\{ \ell_{k,t}^\mathrm{logistic}\;, \; \ell_{k,t}^{\mathrm{MLP}} \right\}$, according to the model employed. Then, we form the state-dependent gain
\begin{equation}
    \alpha_{k,t} = \min \left( \frac{\eta}{1 + \langle \nabla \ell_{k,t}, \bm{q}_{k,t} \rangle} , \; \alpha_{\mathrm{max}}\right), \label{eq.app.ex2.aug_alpha}
\end{equation}
and perform the parameter update according to $\bm{w}_{k+1} = \bm{w}_k - \alpha_{k,t} \bm{q}_{k,t}$. 

If $\|\alpha_{k,t} \bm{q}_{k,t} \| > 5$, we scale it to have norm exactly $5$. Each of the six OVR heads carries its own empirical controller state $\bm{S}_{k,t}$, and the six heads are updated in parallel inside each training iteration. In the reported runs of the directional-aware controller, we used $\epsilon = 0.05$, $\beta = 0.99$, and $\alpha_\mathrm{max} = 1$.

\emph{\textbf{Numerical safeguards for high-dimensional parameter spaces}} 

We emphasize one design choice shared with the ideal controller of Sections \ref{sec.directional}-\ref{sec.stability}, followed by one practical departure used in this experiment. 

First, consistently with (\ref{eq.directional.B_recursive}), the implementation does not recursively accumulate the ridge term inside the stored controller state. Instead, it maintains the unregularized EMA state $\bm{S}_{k,t}$ and adds $\epsilon I$ only when forming the regularized preconditioning matrix $\bm{R}_{k,t}$. This keeps the recent actuation history and the numerical regularization conceptually distinct. This separation is particularly important in a full order realization and especially in large spaces where such a full order realization is still practically feasible. In this particular experiment, each shallow network OVR head carries a dense $6757 \times 6757$ controller state. If the ridge term were accumulated recursively inside the stored EMA state, then it would no longer act merely as a small numerical floor: with $\beta = 0.99$, the isotropic contribution would build up geometrically toward $\epsilon / (1-\beta) I$, thereby injecting a large persistent isotropic background into a controller whose main purpose is to retain recent, data-driven directional actuation structure. In high-dimensional spaces, such information would be numerically dominated by the isotropic background. We therefore keep the EMA state purely empirical and add $\epsilon I$ only when forming the matrix to be inverted. This preserves a fixed conditioning floor, while preventing the numerical regularization from overwhelming the anisotropy that the controller is meant to exploit.

The practical departure is that the implementation above does not apply the raw preconditioned step $-\eta \bm{R}_{k,t}^{-1} \nabla \ell_{k,t}$. Rather, it rescales the preconditioned direction using the state-dependent factor $\alpha_{k,t}$, as given by (\ref{eq.app.ex2.aug_alpha}), and then clips the resulting increment if necessary. This extra factor acts as a trust-region-like safeguard: it preserves the observability-aware direction while shrinking updates whose preconditioned energy is large, thereby improving robustness under dense finite-batch full-order control. Given that $\bm{q}_{k,t} = \bm{R}_{k,t}^{-1} \nabla \ell_{k,t}$, the denominator in $\alpha_{k,t}$ is $1 + \langle \nabla \ell_{k,t}, \bm{q}_{k,t} \rangle$, that is, one plus the squared gradient norm measured in the inverse-observability geometry induced by $\bm{R}_{k,t}^{-1}$. Thus, our safeguard is not altering the controller's direction (as it enters as a scalar scaling mechanism applied uniformly to all controller components), but is regulating its magnitude precisely when the dense full-order machinery indicates that the proposed motion is too energetic relative to the current observability structure. This is especially relevant in the present experiment, where each OVR head uses a dense controller of dimension 562 (in logistic regression) or 6757 (in the shallow network), and where $\bm{A}_k$ is built from finite-batch Jacobian outer products rather than from any population-level quantity. In this setting, the additional scalar scaling and final norm clipping provide a practical finite-precision and finite-sample safeguard, while respecting and preserving the directional geometry that the ideal controller is meant to exploit.

\subsection{Experiment \ref{sec.exp3}} \label{sec.app.exp3}

Our pipeline around ASID-v2 differs from the original pixelwise sea ice mapping objective of \citet{malmgren2020convolutional} to perform dense sea ice prediction over an image patch, with a convolutional neural network designed to fuse high resolution Sentinel-1 imagery with much coarser AMSR2 measurements by injecting the low resolution information deeper in the network rather than naively resampling everything to the SAR grid. We instead formulate a patch-level scalar regression problem: given a localized multisensor patch, we aim to predict the mean sea ice concentration in that patch. This reformulation is deliberately simpler than full semantic segmentation, but it remains operationally meaningful. A regional estimate of sea ice concentration over a 10 km $\times$ 10 km area is already useful for situational awareness and downstream decision support, and it gives us a clean setting in which to introduce and study the effect of missing modalities and missing measurements without conflating them with the many additional complications of dense per-pixel decoding. In particular, it lets us ask whether our learner can still move coherently towards a credible concentration estimate when one sensing stream disappears altogether or when only fragments of the remaining streams are available.

\subsubsection{Convolutional model}

Our predictor model is a compact two-branch convolutional regressor built around the strong resolution asymmetry of the underlying sensing problem. Each training example consists of a 250 $\times$ 250 SAR patch, corresponding to a 10 km $\times$ 10 km footprint at the dataset's 40 m SAR spacing, and a co-located 5 $\times$ 5 AMSR patch, with the scalar target defined as the mean charted sea ice concentration over the same area (in \%). The SAR branch receives 6 channels: the two NERSC-corrected SAR backscatter images, two validity-mask channels, incidence angle, and distance-to-land. The AMSR branch receives 15 channels: 14 brightness-temperature channels, alongside one AMSR validity-mask channel. The SAR stream is encoded by a deeper convolutional tower with three successive downsampling blocks, reflecting the fact that the SAR input carries the fine spatial texture and geometric detail of the scene. The AMSR stream is encoded by a shallower convolutional tower with no aggressive downsampling, reflecting the tiny native grid and its role as coarse radiometric context rather than fine image structure. Each branch is then globally pooled and projected to a learned feature vector, the two embeddings are concatenated, and a small multilayer regression head maps the fused representation to a single scalar prediction of patch-level sea ice concentration. When a whole modality is absent during training, the corresponding branch embedding is explicitly gated by its present/absent flag before fusion. In the reported configuration, the model has 368,753 trainable parameters; we use GroupNorm throughout, SGD with step size 0.005, batch size $8$, and evaluate on unseen validation scenes every 100 training steps, up to 20,000 training total steps. Because the ASID-v2 labels are heavily skewed toward low-concentration and open-water conditions, we also use a weighted random sampler with concentration bins $[0, 5, 25, 75, 100]$ to avoid having the learner dominated by the rather frequently seen near-zero patches.

\subsubsection{Data missingness}
During training, we introduce synthetic feature data missingness, while keeping validation data clean and fully present. At each training example contained in the training set, an entire sensing modality is removed with probability 0.60; conditional on such a modality-level removal, SAR is the missing modality with probability 0.60, and AMSR with probability 0.40. In addition, partial within-modality missingness is triggered with probability 0.15, in which case 20\% of the respective measurements are zeroed entrywise. Thus, the learner must operate in a regime where it often sees only one sensing family and, even then, may see an internally incomplete version of it.

\subsubsection{Controller implementation and approximations}
For a model with $d=368{,}753$ trainable parameters, maintaining a full dense observability matrix $\mathbf{B}_t$ is impractical. We therefore use a block-diagonal approximation in which the controller state is maintained over contiguous parameter chunks of size 128. Within each learnable tensor, the parameters are flattened in their native tensor order and partitioned into consecutive chunks of 128 parameters. For each chunk, we maintain a local block of $\mathbf{B}_t$ by exponentially averaging the outer products of the chunk-restricted loss gradient $\nabla_{\bm{w}}\ell_t$, which serves here as a practical surrogate for the ideal actuation object, with $\beta=0.99$; the resulting block states are regularized with $\epsilon=0.005$ \textit{when solving for the preconditioned update} (as in Experiment \ref{sec.exp2}, we maintain an unregularized EMA state for each block and add the $\epsilon{I}$ regularization only when forming the local matrix used for the preconditioned solve, rather than recursively accumulating the ridge term inside the stored EMA state).

The chunk size of 128 is chosen as a deliberate engineering trade-off. Smaller neighborhoods (e.g., size 10) would generate tens of thousands of blocks, incurring prohibitive bookkeeping overhead, while collapsing the approximation toward a nearly diagonal (per-parameter) controller that captures almost no local directional structure. Conversely, substantially larger chunks (e.g., size 1000) would produce matrices whose storage footprint and $O(k^3)$ inversion cost become impractical. At 128 parameters per block, the local matrices remain modest in size ($\approx$16k elements), inexpensive to store and invert, and yet large enough to encode meaningful local actuation geometry.

Although this block-diagonal approximation discards explicit cross-chunk couplings and the local neighborhoods are not necessarily semantically contiguous and directly adjacent in the model architecture, the controller nevertheless promotes coherent multimodal cooperation. Because the SAR and AMSR2 encoders are fused downstream through concatenated embeddings that feed a shared scalar regression head, as are noncontiguous parts within either encoder, parameters belonging to distant, seemingly unrelated chunks remain repeatedly co-actuated by the same downstream signals and common loss. Moreover, the actuation geometry encoded in $\mathbf{B}_t$ is itself shaped by the controller in a closed-loop fashion: since each update is preconditioned by the current $\mathbf{B}_t^{-1}$, the controller participates in creating the very directional structure it later exploits. Consequently, even in a high-dimensional model where parameters can evolve in many directions, the directional observability-aware controller can still bias the network toward more coherent multimodal \textit{and} intra-model cooperation by reshaping updates according to the empirically observed actuation geometry revealed through shared downstream fusion, rather than allowing uncontrolled masked SGD to establish arbitrary or inconsistent relationships.

In this experiment we focus only on our directional controller, as we already observed the scalar controller’s isotropic throttling to be too indiscriminate for high-missingness and multimodal settings.

\subsubsection{Dataset split rotations}
We constructed patch datasets from a local subset of 25 ASID-v2 netCDF scenes by scanning the SAR grid, retaining informative patches, and pairing each retained SAR patch with its co-located AMSR crop and scalar ice concentration target. To assess sensitivity to train/validation composition, we created three cyclic split rotations (A, B, and C) of the same eligible scene pool, with rotation A serving as the baseline. These rotations produce materially different patch-level target distributions: rotation A has training patches with mean 27.22 (std 39.22) and validation patches with mean 27.13 (std 39.44); rotation B has training mean 33.23 (std 40.98) and validation mean 16.09 (std 33.17); rotation C has training mean 20.80 (std 36.39) and validation mean 48.73 (std 40.92). Absolute MAE values are therefore not directly comparable across rotations. Instead, the rotations test whether the directional controller consistently reaches useful error regimes early and suppresses large validation excursions under substantially varying train/validation target compositions.

\end{document}